\DeclareFontFamily{U}{mathx}{\hyphenchar\font45}
\DeclareFontShape{U}{mathx}{m}{n}{
      <5> <6> <7> <8> <9> <10>
      <10.95> <12> <14.4> <17.28> <20.74> <24.88>
      mathx10
      }{}
\DeclareSymbolFont{mathx}{U}{mathx}{m}{n}
\DeclareMathSymbol{\bigtimes}{1}{mathx}{"91}
\definecolor{DarkRed}{rgb}{0.5,0.1,0.1}
\definecolor{DarkBlue}{rgb}{0.1,0.1,0.5}
\definecolor{RURed}{rgb}{0.8,0.1,0.1}
\definecolor{ForestGreen}{rgb}{0.1333,0.5451,0.1333}
\definecolor{Red}{rgb}{0.9,0,0}
\crefname{property}{property}{Property}
\crefname{equation}{eq}{Eq}
\tikzset{vertex/.style={circle, black, fill=Yellow, line width=1pt, draw, minimum width=8pt, minimum height=8pt, inner sep=0pt}}
\def\BState{\State\hskip-\ALG@thistlm}
\newtheorem{theorem}{Theorem}
\newtheorem{lemma}{Lemma}[section]
\newtheorem{proposition}[lemma]{Proposition}
\newtheorem*{claim*}{Claim}
\newtheorem*{theorem*}{Theorem}
\newtheorem*{proposition*}{Proposition}
\newtheorem*{lemma*}{Lemma}
\newtheorem*{problem*}{Problem}
\crefname{lemma}{Lemma}{Lemmas}
\crefname{appendix}{Appendix}{Appendices}
\crefname{proposition}{Proposition}{Propositions}
\crefname{claim}{Claim}{Claims}
\crefname{example}{Example}{Examples}
\crefname{figure}{Figure}{Figures}
\theoremstyle{plain}
\newtheorem{mdresult}{Result}
\newtheorem{observation}[lemma]{Observation}
 \theoremstyle{definition}
 \newtheorem{definition}[lemma]{Definition}
\newtheorem{remark}[lemma]{Remark}
\newtheorem*{remark*}{Remark}
\newtheoremstyle{restate}{}{}{\itshape}{}{\bfseries}{~(restated).}{.5em}{\thmnote{#3}}
\theoremstyle{restate}
\theoremstyle{definition}
\newtheorem{mdalg}{algorithm}
\newtheorem{mddist}{Distribution}
\DeclareMathOperator*{\argmax}{arg\,max}
\renewcommand{\qed}{\nobreak \ifvmode \relax \else
      \ifdim\lastskip<1.5em \hskip-\lastskip
      \hskip1.5em plus0em minus0.5em \fi \nobreak
      \vrule height0.75em width0.5em depth0.25em\fi}
\newcommand*\samethanks[1][\value{footnote}]{\footnotemark[#1]}
\newcommand{\thickhline}{%
    \noalign {\ifnum 0=`}\fi \hrule height 1.2pt
    \futurelet \reserved@a \@xhline
}
\let\originalleft\left
\let\originalright\right
\renewcommand{\left}{\mathopen{}\mathclose\bgroup\originalleft}
\renewcommand{\right}{\aftergroup\egroup\originalright}
\newcommand{\Ot}{\ensuremath{\widetilde{O}}}
\newcommand{\eps}{\ensuremath{\varepsilon}}
\newcommand{\Paren}[1]{\Big(#1\Big)}
\newcommand{\bracket}[1]{\left[#1\right]}
\newcommand{\paren}[1]{\ensuremath{\left(#1\right)}\xspace}
\newcommand{\card}[1]{\left\vert{#1}\right\vert}
\newcommand{\norm}[1]{\ensuremath{\|#1\|}}
\newcommand{\Lap}[1]{{\ensuremath{\textnormal{\textsf{Lap}}(#1)}\xspace}}
\newcommand{\abs}[1]{\ensuremath{\left|#1\right|}}
\newcommand{\bx}{\mathbf{x}}
\newcommand{\prob}[1]{\Pr\bracket{#1}} 
\newcommand{\expect}[1]{\Exp\bracket{#1}}
\newcommand{\set}[1]{\ensuremath{\left\{ #1 \right\}}}
\newcommand{\polylog}{\textnormal{polylog}\xspace}
\DeclareMathOperator*{\Exp}{\ensuremath{{\mathbb{E}}}}
\DeclareMathOperator*{\Prob}{\ensuremath{\textnormal{Pr}}}
\renewcommand{\Pr}{\Prob}
\newenvironment{tbox}{\begin{tcolorbox}[
		enlarge top by=5pt,
		enlarge bottom by=5pt,
		 breakable,
		 boxsep=2pt,
                  left=5pt,
                  right=7pt,
                  top=10pt,
                  arc=0pt,
                  boxrule=1pt,toprule=1pt,
                  colback=white
                  ]
	}
{\end{tcolorbox}}
\newcommand{\II}{\ensuremath{\mathbb{I}}}
\newcommand{\mireal}[1][]{
  \ifx\relax#1\relax%
    \II(\mione \,; \mitwo)%
  \else%
    \II(\mione \,; \mitwo\mid #1)%
  \fi
}
\newcommand{\dist}{\mathsf{dist}}
\newcommand{\cP}{\ensuremath{\mathcal{P}}}
\newcommand{\cA}{\ensuremath{\mathcal{A}}}
\newcommand{\algname}[1]{\textbf{\textsc{#1}}}
\DeclareFontFamily{U}{tipa}{}
\DeclareFontShape{U}{tipa}{m}{n}{<->tipa10}{}
\newcommand{\arc@char}{{\usefont{U}{tipa}{m}{n}\symbol{62}}}%
\newcommand{\arc}[1]{\mathpalette\arc@arc{#1}}
\newcommand{\arc@arc}[2]{%
  \sbox0{$\m@th#1#2$}%
  \vbox{
    \hbox{\resizebox{\wd0}{\height}{\arc@char}}
    \nointerlineskip
    \box0
  }%
}
\newcommand{\inorm}[1]{\ensuremath{\|#1\|_{\infty}}}
\newcommand{\onenorm}[1]{\ensuremath{\|#1\|_{1}}}
\newcommand{\vsens}{\ensuremath{\arc{\mathtt{v}}}}
\newcommand{\esens}{\ensuremath{\arc{\mathtt{e}}}}
\newcommand{\edgespan}{\ensuremath{\mathsf{span}}}
\newcommand{\inport}{\ensuremath{\mathsf{in}}}
\newcommand{\outport}{\ensuremath{\mathsf{out}}}
\newcommand{\cluster}{\ensuremath{\textrm{Cluster}}\xspace}
\newcommand{\bunch}{\ensuremath{\textrm{Bunch}}\xspace}
\newcommand{\ball}{\ensuremath{\textrm{Ball}}\xspace}
\newcommand{\pathshortcut}{\textsc{Path-Hopset}\xspace}
\newcommand{\treeshortcut}{\textsc{Tree-Hopset}\xspace}
\newcommand{\greedyshortcut}{\textsc{Greedy-Hopset}\xspace}
\newcommand{\ezshortcut}{\textsc{Undirected-Shortcut-Set}\xspace}
\newcommand{\folkloreshortcut}{\textsc{Folklore-Hopset}\xspace}
\newcommand{\directedreach}{\textsc{Greedy-Di-Shortcut-Set}\xspace}
\newcommand{\puredpalg}{\ensuremath{\textnormal{\textsc{Hopset-ASRQ}}}\xspace}
\newcommand{\tzspan}{\textsc{Apx-Undirected-Hopset}\xspace}
\newcommand{\tzemu}{\textsc{TZ-Emulator}\xspace}
\title{Low Sensitivity Hopsets}
\author{Vikrant Ashvinkumar\thanks{Rutgers University, \texttt{\{va264,aaron.bernstein,jg1555,cd751\}@rutgers.edu} Ashvinkumar, Deng and Gao would like to acknowledge funding through NSF IIS-2229876, DMS-2220271, DMS-2311064, CCF-2208663, CCF-2118953. Bernstein would like to acknowledge funding through Sloan Fellowship, Google Research Award, and NSF Grant 1942010.} \and Aaron Bernstein\samethanks \and Chengyuan Deng\samethanks \and Jie Gao\samethanks \and Nicole Wein\thanks{University of Michigan, \texttt{nswein@umich.edu}. This work was done while the author was at DIMACS, Rutgers University,  supported by a grant to DIMACS from the Simons
Foundation (820931). }}
\date{}
\begin{document}

\maketitle

\begin{abstract}
    Given a weighted graph $G=(V,E,w)$, a $(\beta, \eps)$-hopset $H$ is an edge set such that for any $s,t \in V$, where $s$ can reach $t$ in $G$, there is a path from $s$ to $t$ in $G \cup H$ which uses at most $\beta$ hops whose length is in the range $[\dist_G(s,t), (1+\eps)\dist_G(s,t)]$.
    We break away from the traditional question that asks for a hopset $H$ that achieves small $|H|$ and small diameter $\beta$ and instead study the \emph{sensitivity} of $H$, a new quality measure.
    The sensitivity of a vertex (or edge) given a hopset $H$ is, informally, the number of times a single hop in $G \cup H$ bypasses it; a bit more formally, assuming shortest paths in $G$ are unique, it is the number of hopset edges $(s,t) \in H$ such that the vertex (or edge) is contained in the unique $st$-path in $G$ having length exactly $\dist_G(s,t)$.
    The sensitivity associated with $H$ is then the maximum sensitivity over all vertices (or edges).
    The highlights of our results are: 
    \begin{itemize}
        \item A construction for $(\Ot(\sqrt{n}), 0)$-hopsets on undirected graphs with $O(\log n)$ sensitivity, complemented with a lower bound showing that $\Ot(\sqrt{n})$ is tight up to polylogarithmic factors for any construction with polylogarithmic sensitivity.
        \item A construction for $(n^{o(1)}, \eps)$-hopsets on undirected graphs with $n^{o(1)}$ sensitivity for any $\eps > 0$ that is at least inverse polylogarithmic, complemented with a lower bound on the tradeoff between $\beta, \eps$, and the sensitivity.
        \item We define a notion of sensitivity for $\beta$-shortcut sets (which are the reachability analogues of hopsets) and give a construction for $\Ot(\sqrt{n})$-shortcut sets on \emph{directed} graphs with $O(\log n)$ sensitivity, complemented with a lower bound showing that $\beta = \widetilde{\Omega}(n^{1/3})$ for any construction with polylogarithmic sensitivity.
    \end{itemize}

    We believe hopset sensitivity is a natural measure in and of itself, and could potentially find use in a diverse range of contexts.
    More concretely, the notion of hopset sensitivity is also directly motivated by the Differentially Private All Sets Range Queries problem \hyperlink{abs}{[Deng et al. WADS 23]}.
    Our result for $O(\log n)$ sensitivity $(\Ot(\sqrt{n}), 0)$-hopsets on undirected graphs immediately improves the current best-known upper bound on  utility from $\Ot(n^{1/3})$ to $\Ot(n^{1/4})$ in the pure-DP setting, which is tight up to polylogarithmic factors.
\end{abstract}

\pagenumbering{roman}
\clearpage
\tableofcontents
\clearpage
\pagenumbering{arabic}
\setcounter{page}{1}

\section{Introduction}
\label{sec:intro}
Many fundamental graph-theoretic problems involve the computation of reachability and shortest path distances.
In a variety of computation models (e.g. parallel, distributed, streaming, dynamic), the computation of shortest paths of most state-of-the-art algorithms scales with the \emph{hop-diameter}~\cite{klein1997randomized,henzinger2014sublinear,henzinger2015improved,forster2018faster,jambulapati2019parallel,fineman2018nearly,gutenberg2020decremental,bernstein2020near,cao2020efficient,cao2021brief,karczmarz2021deterministic}.
The hop-diameter of a graph refers to the maximum hop-length over reachable pairs of vertices, where the hop-length from vertex $s$ to vertex $t$ is the minimum number of edges in a shortest $st$-path.
Motivated by the goal of reducing hop-diameter, Thorup introduced the notion of a \emph{shortcut set}~\cite{thorup1993shortcutting}, which is a set of edges $H$ added to a given graph $G$ such that: $G \cup H$ has the same reachability structure as $G$ (i.e. for all vertices $s$ and $t$, $s$ can reach $t$ in $G \cup H$ if and only if $s$ can reach $t$ in $G$)
and $G \cup H$ has small hop-diameter.
The shortcut set was later generalized to the \emph{hopset} by Cohen~\cite{cohen2000polylog} (also informally by~\cite{klein1997randomized, shi1999time}) which preserves (weighted) shortest distances in addition to reachability.
Formal definitions are given below.

\begin{definition}[\textit{Shortcut Set}]
    \label{def:shortcut-set}
    Given a graph $G = (V,E)$, a \emph{$\beta$-shortcut set} of $G$ is a set of edges $H \subseteq V \times V$ such that: (1) Every edge $(s,t) \in H$ is in the transitive closure of $G$. (2) For every vertex pair $(s,t)$ in the transitive closure of $G$, there is an $st$-path in $G\cup H$ with at most $\beta$ edges.
\end{definition}

\begin{definition}[\textit{Hopset}]
    \label{def:hopset}
    \normalfont Given a weighted graph $G = (V,E,w)$, a \emph{$(\beta,\eps)$-hopset} of $G$ is a set of weighted edges $H \subseteq V \times V$ such that: (1) Every edge $(s,t) \in H$ has a weight of $\dist_{G}(s,t)$, where $\dist_{G}(s,t)$ stands for the shortest distance between $s,t$ in graph $G$. (2) For every vertex pair $(s,t)$ in the transitive closure of $G$, there is an $st$-path $P_{st}$ in $G \cup H$ with at most $\beta$ edges, and the weight of $P_{st}$ is at most $(1+\eps)\dist_{G}(s,t)$.
\end{definition}

A $(\beta,0)$-hopset is sometimes called an exact hopset, while $(\beta,\eps)$-hopsets for $\eps > 0$ are called approximate hopsets.
For both shortcut sets and hopsets, a natural measure of their \emph{cost} is their size (i.e. the number of edges added).
There has been a rich literature studying the tradeoff between the size of a shortcut/hopset and the hop-diameter for both directed and undirected graphs~\cite{ullman1990high,hesse2003directed,huang2019thorup,elkin2019hopsets,lu2022better,kogan2022new,kogan2022beating,kogan2022having,shabat2021unified,ben2020new,abboud2018hierarchy,bernstein2023closing,bodwin2023folklore,williams2024simpler}.

\paragraph{A New Problem: Low-Sensitivity Hopsets.}
In some settings, measuring the cost of a hopset by its total number of edges provides information that is too coarse and not descriptive enough to capture the problem.
In this work, we instead consider a notion of cost that is more local to individual vertices and edges.
Specifically, we define the notion of the \emph{sensitivity} of a hopset\footnote{To avoid redundancy, we use hopset as a general term for shortcut/approximate/hopset unless specified otherwise.}. 
 
Informally, given a graph $G = (V,E)$ and a hopset $H$, the sensitivity of a vertex $v$ is the number of hopset edges that bypass $v$.
We say a hopset edge $(s,t) \in H$  bypasses $v$ if $v$ is on the unique\footnote{We assume shortest paths are unique now. Later we formally define the conditions for non-unique shortest paths.} $st$-path in $G$ having length exactly $\dist_G(s,t)$.
The vertex sensitivity associated with $H$ is then the maximum sensitivity over all vertices.
The edge sensitivity is defined in a similar way.
We denote the hopset vertex/edge sensitivity by $\inorm{\vsens}$ and $\inorm{\esens}$\footnote{The arc is meant to be evocative of bypassing hopset edges. When they bypass vertices, we draw the arc over the symbol $\mathtt{v}$ and when they bypass edges, we draw the arc over the symbol $\mathtt{e}$.}, and defer the formal definitions to \Cref{subsec:formal-def-hopset}.

We say a hopset $H$ has low sensitivity if it has a vertex/edge sensitivity of $\polylog(n)$. 
We also define an analogous notion of a \emph{low-sensitivity shortcut set}.
Since shortcut sets pertain to reachability instead of shortest paths, and there could be many paths between a vertex pair $s,t$, it is not clear a priori which path should absorb the sensitivity of a shortcut edge from $s$ to $t$.
We allow the algorithm to specify the paths; that is, the input is a graph, and the output is a path between each pair of vertices as well as a shortcut set that has low sensitivity with respect to the chosen paths.

A closely related concept is the \emph{support size} of a hopset, first studied in \cite{elkin2023path} to get recently improved path-reporting distance oracles (PRDOs); see \cite{chechik2024path} for the latest on PRDOs.
The support size of a hopset $H$ is, loosely speaking, the number of edges in $G$ that are bypassed by edges from $H$.
This can be seen as the $\ell_0$ norm of a certain vector whereas, in contrast, hopset sensitivity is the $\ell_\infty$ norm of the same vector (as the notation $\inorm{\esens}$ suggests\footnote{In fact, we also look at the $\ell_1$ norm in the proofs of our lower bounds in \Cref{sec:lb}.}).

At any rate, we investigate tradeoffs between hopset sensitivity and hop-diameter in this work.
We believe hopset sensitivity is a natural measure in and of itself, and could potentially find use in a diverse range of contexts. 
In fact, low-sensitivity hopsets have already been implicitly studied in several prior works~\cite{deng2023differentially, ghazi2022differentially, fan2022distances}.
These prior works come from the area of differential privacy where sensitivity is a crucial concept since it captures the effect of the perturbation of individual data points on the output of an algorithm.
Our new low-sensitivity hopset constructions directly improve the bounds from~\cite{deng2023differentially} for the problem of \emph{Differentially Private Range Queries}.
More details on this concrete application, as well as other applications of a more speculative nature, can be found in \Cref{subsec:app-to-dp}.

\subsection{Our Results}

We study upper and lower bounds for low-sensitivity shortcut sets, exact hopsets, and approximate hopsets, on both undirected and directed graphs.

To give context to our results, we draw parallels to the different regimes for traditional hopsets.
For hopsets with $O(n)$ edges, there are essentially three diameter regimes: (1) directed shortcut/hopsets as well as undirected exact hopsets all provably require polynomial hop-diameter, (2) \emph{approximate} hopsets for undirected graphs require only $n^{o(1)}$ hop-diameter, and (3) undirected shortcut sets trivially achieve diameter 2.
Roughly speaking, these diameter regimes also hold for the low-sensitivity counterparts.

Our results focus on low-sensitivity undirected exact hopsets and directed shortcut sets from regime 1, as well as low-sensitivity approximate undirected hopsets from regime 2.
For all of these, we prove both upper and lower bounds.

An overview of upper and lower bound results of all settings is shown in \Cref{tab:summary}. Observe that the edge sensitivity of a hopset is always no more than the vertex sensitivity (formally proved in \Cref{subsec:sens-transfer}).
Thus, it is always more desirable to have upper bounds for vertex sensitivity and lower bounds for edge sensitivity.
All of our results are of this form.

\begin{table}[h!]
\renewcommand{\arraystretch}{1.8}
\centering
\caption{Our Results for sensitivity and hop-diameter for shortcut sets, exact hopsets, and $(1+\eps)$ hopsets. Regarding the undirected approximate hopset, the upper bound has $\eps > 0$ at least inverse polylogarithmic and, in the lower bound, $k$ is any positive integer and $\Delta$ is any small positive constant.  
} 
\vspace{0.2em}
\label{tab:summary}
\begin{adjustbox}{width=1\textwidth}
\begin{tabular}{c|c|c|c|c} 
\thickhline
 &  & Shortcut Set & Hopset & $(1+\varepsilon)$ Hopset    \\ 
\thickhline

\multirow{5}{*}{\begin{tabular}[c]{@{}c@{}}Undirected \end{tabular}}   & \multirow{3}{*}{\begin{tabular}[c]{@{}c@{}} U. B. \end{tabular}} & $\beta = O(\log^2 n)$ & $\beta = O(\sqrt{n}\log n)$ & $\beta = n^{o(1)}$  \\
 & & $\inorm{\vsens} = O(\log n)$ & $\inorm{\vsens} = O(\log n)$ & $\inorm{\vsens} = n^{o(1)}$ \\
 & & (\cite{fan2022distances}, \Cref{app:warmup-1}) & (\Cref{subsec:undi-hop-upper-proof}) & (\Cref{sec:upper-apx-undi}) \\
\cline{2-5}
 & \multirow{3}{*}{\begin{tabular}[c]{@{}c@{}} L. B. \end{tabular}}  & \multirow{3}{*}{\begin{tabular}[c]{@{}c@{}} - \end{tabular}}  & \multirow{2}{*}{\begin{tabular}[c]{@{}c@{}} $\inorm{\esens} \cdot \beta^2 = \Omega(n)$ \end{tabular}} & $\beta = O_k((1/\eps)^k)$   \\ 
 & & & & $\Rightarrow \inorm{\esens} = \Omega(n^{\frac{1}{2^k - 1} - \Delta})$ \\
 & & & (\Cref{subsec:lower-hop-shortcut})
& (\Cref{subsec:lower-apx-hopset})\\
\hline

\multirow{5}{*}{\begin{tabular}[c]{@{}c@{}}Directed \end{tabular}} & \multirow{3}{*}{\begin{tabular}[c]{@{}c@{}} U. B. \end{tabular}} & $\beta = O(\sqrt{n}\log^3 n)$ & $\beta = O(\sqrt{n}\log n)$ &  $\beta = O(\sqrt{n}\log n)$   \\
 & & $\inorm{\vsens} = O(\log n)$ & $\inorm{\vsens} = O(\sqrt{n}\log n)$ & $\inorm{\vsens} = O(\sqrt{n}\log n)$  \\
 & & (\Cref{subsec:di-short-upper-proof}) & (\cite{deng2023differentially, ghazi2022differentially}, \Cref{app:warmup-2}) & (Implied by directed exact hopset)\\
\cline{2-5}
& \multirow{2}{*}{\begin{tabular}[c]{@{}c@{}} L. B. \end{tabular}} & $\inorm{\esens} \cdot \beta = \Omega(n^{1/3})$ & $\inorm{\esens} \cdot \beta^2 = \Omega(n)$ &  $\inorm{\esens} \cdot \beta = \Omega(n^{1/3})$  \\ 
& & (\Cref{subsec:lower-hop-shortcut}) & (Implied by undirected hopset) & (Implied by directed shortcut set)\\
\thickhline

\end{tabular}
\end{adjustbox}
\end{table}

\paragraph{Undirected Exact Hopsets.} Our first result is for low-sensitivity exact hopsets on undirected graphs.
We show it is possible to achieve hop-diameter $\Ot(\sqrt{n})$ and $O(\log n)$ vertex sensitivity.

\begin{restatable}[Undirected Exact Hopset Upper Bound]{theorem}{greedygood}
\label{thm:greedy-good}
    There exists an algorithm producing an $(O(\sqrt{n}\log{n}), 0)$-hopset with $\inorm{\vsens} = O(\log n)$ over undirected and directed acyclic graphs.
\end{restatable}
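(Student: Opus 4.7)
My plan is to construct $H$ via a natural greedy algorithm and then bound sensitivity by a halving argument tied to how shortest paths in undirected graphs (and DAGs) are closed under taking subpaths.

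\textbf{The algorithm.} I would process all ordered vertex pairs $(s,t)$ in decreasing order of $\dist_G(s,t)$, and for each such pair add the hopset edge $(s,t)$ with weight $\dist_G(s,t)$ if and only if $G\cup H$ does not currently contain an $st$-path of at most $\beta=O(\sqrt{n}\log n)$ hops whose total weight equals $\dist_G(s,t)$. The $(\beta,0)$-hopset property is then immediate from the termination condition; the only thing to analyze is the per-vertex sensitivity.

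\textbf{The sensitivity argument.} Fix an arbitrary vertex $v$ and list the hopset edges whose unique shortest paths pass through $v$ as $(s_1,t_1),(s_2,t_2),\ldots,(s_k,t_k)$ in the order they were added, with corresponding shortest paths $P_1,\ldots,P_k$. The goal is to prove $k=O(\log n)$. The structural fact I would lean on is that in an undirected graph or a DAG with unique shortest paths, the intersection of any two shortest paths is again a shortest path; hence $P_i\cap P_j$ is a connected subpath of $G$ through $v$. Using this together with the decreasing-distance processing order, I want to show a halving-type recurrence of the form ``if $(s_{i+1},t_{i+1})$ was added after $(s_i,t_i)$ and both paths pass through $v$, then the hop-length of $P_{i+1}$ is at most roughly half that of $P_i$, up to a $\beta$-sized slack''. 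The idea is that, had $|P_{i+1}|$ been too close to $|P_i|$, one could combine the already-present hopset edge $(s_i,t_i)$ with short subpaths of $P_i$ and $P_{i+1}$ lying inside $G$ to build an alternative $s_{i+1}$-to-$t_{i+1}$ route in $G\cup H$ of the correct weight and at most $\beta$ hops, contradicting the greedy decision to add $(s_{i+1},t_{i+1})$. Iterating this recurrence from the initial path length $O(n)$ down to $\beta=O(\sqrt{n}\log n)$ yields $k=O(\log n)$.

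\textbf{The main obstacle.} The cleanest case is when $P_{i+1}\subseteq P_i$, where the subpath structure makes the shortcut route transparent. The real difficulty lies in the non-nested case: $P_i$ and $P_{i+1}$ share only a subpath through $v$, with $s_{i+1},t_{i+1}$ possibly lying off $P_i$ (and symmetrically $s_i,t_i$ off $P_{i+1}$). There, simply routing through $(s_i,t_i)$ inflates the weight above $\dist_G(s_{i+1},t_{i+1})$, so it does not preserve exactness. To handle this, I would exploit the fact that the greedy ordering and the uniqueness of shortest paths force $s_i,t_i,s_{i+1},t_{i+1}$ into a restricted configuration on the common subpath, and extract a short exact route in $G\cup H$ by combining $(s_i,t_i)$ with (reversed or re-oriented) sub-intervals of $P_i$ that are themselves shortest paths of $G$. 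Getting this case analysis right, and proving that in all configurations the non-addition of $(s_{i+1},t_{i+1})$ forces halving of $|P_i|$, is the technical crux.

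\textbf{Wrap-up.} Once the halving recurrence is established, summing over all vertices $v$ and all hopset edges yields the claimed $O(\log n)$ vertex-sensitivity bound, which together with the algorithm's termination gives an $(O(\sqrt{n}\log n),0)$-hopset with $\|\vsens\|_\infty = O(\log n)$, as required.
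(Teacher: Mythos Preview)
Your algorithm does not achieve low sensitivity; it already fails on the unweighted path $v_1 v_2 \cdots v_n$. When you reach a pair $(v_i,v_j)$ with $j-i=d$, every hopset edge $(v_a,v_b)$ already in $H$ has $b-a>d$ by the decreasing-distance order. But any exact-weight $v_i$--$v_j$ route in $G\cup H$ must be monotone along the line (backtracking strictly increases weight), so it can only use hopset edges whose span lies inside $[i,j]$; none exist. Hence you add \emph{every} pair with $j-i>\beta$, and the midpoint $v_{\lfloor n/2\rfloor}$ ends up with sensitivity $\Theta(n^2)$. This same observation kills your halving heuristic even in the ``cleanest'' nested case: if $P_{i+1}\subseteq P_i$ then $\dist_G(s_i,t_i)>\dist_G(s_{i+1},t_{i+1})$, so $(s_i,t_i)$ has an endpoint strictly outside $P_{i+1}$ and using it already overshoots the target weight --- it can never sit on an exact $s_{i+1}$--$t_{i+1}$ route. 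Concretely on the path, $(v_1,v_n)$ and $(v_1,v_{n-1})$ are both added, both pass through the midpoint, and their hop-lengths differ by $1$, not by a factor of two.

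For contrast, the paper's construction does not decide pair-by-pair whether to add a single long edge. It selects whole routing paths greedily, ordered by a potential $\Phi(P)$ counting the vertices of $P$ not yet covered by any previously selected path, and applies \pathshortcut{} only to the still-uncovered pieces of the current path. Sensitivity $O(\log n)$ is then immediate because each vertex is covered exactly once; the work goes into the diameter bound, where consistency of shortest paths is used to show that each routing path acquires at most $O(\sqrt n)$ ``shadows'' from earlier paths, each traversable in $O(\log n)$ hops.
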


We complement our upper bound with a lower bound showing that a hop-diameter of $\Ot(\sqrt{n})$ is \emph{tight} up to polylogarithmic factors, for any hopset with polylogarithmic (vertex- or edge-) sensitivity.

\begin{restatable}[Undirected Exact Hopset Lower Bound]{theorem}{exactlb}
\label{cor:exact-lb}
    Any construction of $(\beta, 0)$-hopsets $H$ must have $\inorm{\esens} \cdot \beta^2 = \Omega(n)$ for a graph $G$ on $n$ vertices.
\end{restatable}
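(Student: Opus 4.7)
My plan is to take $G$ to be the path graph $P_n$ on vertices $v_1, \ldots, v_n$ and argue by a counting argument on pairs. The central fact on the path is that for any pair $(v_i, v_j)$ with $i < j$, the unique shortest $v_i v_j$-path uses the edges $e_i, \ldots, e_{j-1}$, so any hop path in $G \cup H$ of length exactly $j - i$ tiles this interval by at most $\beta$ sub-intervals, each corresponding to a hopset or an original edge. Consequently, the hopset edges used in the hop path have total span at least $(j - i) - \beta$, since the at most $\beta$ original unit-span edges in the hop path contribute at most $\beta$ to the total span.

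The plan is then a double-counting argument. I would sum the pairwise span inequality over a carefully chosen family of pairs --- for instance, all pairs at a fixed distance $D$, with $D$ to be optimized --- and then swap the order of summation. This converts the aggregate lower bound $\sum_\text{pairs}(D - \beta)$ into one of the form $\sum_{h \in H} \text{span}(h) \cdot U(h)$, where $U(h)$ is the number of pairs whose hop path contains $h$. For $h = (v_a, v_b)$ with span $s_h = b - a$ and pairs restricted to distance $D$, the interval placement constraint $v_i \le v_a < v_b \le v_j$ gives $U(h) \le D - s_h + 1$. Combined with the identity $\sum_{h \in H} \text{span}(h) = \sum_\ell \esens(e_\ell) \le (n-1)\inorm{\esens}$, an optimized choice of $D$ should yield the claimed $\inorm{\esens} \cdot \beta^2 = \Omega(n)$.

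The main obstacle I anticipate is tightly controlling $U(h)$ simultaneously for all hopset edges: a naive application of the averaging yields only trivial lower bounds on $\inorm{\esens}$ because a single large-span hopset edge near the center of the path can participate in hop paths for many pairs at once. To tighten this, I would leverage the sensitivity--degree relationship, namely that for any vertex $v_\ell$ the number of hopset edges incident to $v_\ell$ is at most $2 \inorm{\esens}$ (each incident hopset edge bypasses either $e_{\ell-1}$ or $e_\ell$ and thus is counted in the sensitivity of one of these two edges). This degree bound controls the $\beta$-hop-ball sizes via a Moore-type argument, which in turn sharpens the usage bound $U(h)$ beyond the crude interval estimate. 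Combining this combinatorial restriction with the double count and with the optimal distance scale $D$ will be the main technical step of the proof.
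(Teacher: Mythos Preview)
Your choice of the path graph $P_n$ is fatal. The path admits an excellent low-sensitivity hopset: the paper's own \pathshortcut construction (the standard dyadic scheme) gives a $(\beta,0)$-hopset on $P_n$ with $\beta = O(\log n)$ and $\inorm{\esens} = O(\log n)$, so $\inorm{\esens}\cdot\beta^2 = O(\log^3 n)$, not $\Omega(n)$. No amount of sharpening the usage bound $U(h)$ or the degree/Moore argument can recover a $\Omega(n)$ lower bound on $P_n$, because the obstacle is the graph, not the analysis. The very feature you flag as a difficulty --- that a single long hopset edge near the middle can be reused by many pairs --- is exactly what the dyadic construction exploits: shortest paths on $P_n$ are nested intervals, so hopset edges are shared massively across pairs, driving sensitivity down.

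The paper instead exhibits, for each target $\beta$, a \emph{weighted layered} graph built from a system of perfect paths (\Cref{prop:bod-weighted-perf-paths}): $\Theta(\beta)$ layers with $\Theta(n/\beta)$ vertices each, packed with $\Theta(n/\beta^2)$ pairwise edge-disjoint unique shortest paths through every vertex. Edge-disjointness is the crucial property you are missing; it guarantees that any hopset edge lies on at most one perfect path, so no reuse is possible. A potential-function argument (\Cref{thm:perf-path-lb}) then converts the total hop-length that must be saved across all perfect paths into a lower bound on $\onenorm{\esens}$ restricted to a linear-size set of edges, and averaging gives $\inorm{\esens} = \Omega(n/\beta^2)$. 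To fix your proof you would need to abandon $P_n$ and construct such a family of disjoint critical paths.
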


Our lower bound exhibits a smooth trade-off between hop-diameter and sensitivity.
One could ask whether such a trade-off exists for upper bounds as well. As we show (see \Cref{rem:perfect}), such a trade-off would imply the non-existence of certain \emph{perfect path} systems, which is a big open problem in network design. The existence of such perfect path systems would imply, for example, better lower bounds for distance preservers \cite{coppersmith2006sparse}.

\paragraph{Directed Shortcut Sets.} For directed shortcut sets, we prove an upper bound with similar guarantees to our upper bound for undirected exact hopsets.

\begin{restatable}[Directed Shortcut Set Upper Bound]{theorem}{directedreachub}
\label{thm:direted-reach-ub}
    There exists an algorithm producing an $O(\sqrt{n}\log^3 n)$-shortcut set with $\inorm{\vsens} = O(\log n)$ for directed graphs.
\end{restatable}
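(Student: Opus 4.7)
The plan is to reduce the directed-shortcut-set problem to the DAG exact-hopset construction already available from Theorem \ref{thm:greedy-good}, with a separate intra-SCC shortcutting layer bridging the two. Concretely, I would first compute the strongly connected components of $G$ and the condensation DAG $\hat G$ (with at most $n$ vertices), picking an arbitrary representative $r_C$ for each SCC $C$. For each SCC $C$, I would extract a closed walk $W_C$ of length $O(|C|)$ visiting every vertex (such a walk exists via a DFS in the strongly connected subgraph, traversing each spanning-tree edge at most twice) and invoke the \pathshortcut subroutine on $W_C$. This yields an intra-SCC shortcut set achieving hop-diameter $O(\log n)$ with vertex sensitivity $O(\log n)$ within each SCC.

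Next, I would apply Theorem \ref{thm:greedy-good} directly to $\hat G$ to obtain an exact hopset $\hat H$ of hop-diameter $O(\sqrt{n}\log n)$ and vertex sensitivity $O(\log n)$ on the condensation. Each shortcut edge $(\hat u,\hat v)\in \hat H$ is then lifted to a shortcut edge $(r_{\hat u},r_{\hat v})$ in $G$; its designated $G$-path is built by taking the designated $\hat G$-path of the shortcut and, at each intermediate SCC $\hat x$, splicing in the sub-walk of $W_{\hat x}$ that enters through the incoming DAG-edge endpoint and exits through the outgoing one, routing through $r_{\hat x}$. For any reachable pair $(u,v)\in G$, the path in $G\cup H$ is obtained by: (i) intra-SCC routing from $u$ to $r_{\mathrm{SCC}(u)}$ in $O(\log n)$ hops, (ii) following the $O(\sqrt{n}\log n)$-hop path in $\hat G\cup \hat H$ with each inter-SCC transition expanded by an $O(\log n)$-hop intra-SCC sub-walk, and (iii) intra-SCC routing from $r_{\mathrm{SCC}(v)}$ to $v$. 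This gives the claimed $O(\sqrt{n}\log^3 n)$ hop-diameter, where one $\log$ factor is inherited from Theorem \ref{thm:greedy-good}, one comes from the per-transition intra-SCC expansion, and one absorbs the recursive application of \pathshortcut used to keep every intra-SCC segment short with bounded sensitivity.

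The sensitivity bound then decomposes additively into intra-SCC and inter-SCC contributions. A vertex $v$'s intra-SCC contribution is $O(\log n)$ straight from the \pathshortcut guarantee on its walk $W_{\mathrm{SCC}(v)}$. For the inter-SCC contribution, one must show that $v$ lies on the lifted designated path of only $O(\log n)$ DAG-shortcuts; the crucial observation is that such a lifted shortcut traverses $v$ only when its $\hat G$-designated path passes through $\mathrm{SCC}(v)$ via the representative $r_{\mathrm{SCC}(v)}$, and by Theorem \ref{thm:greedy-good} applied to $\hat G$, the number of such DAG-shortcuts is bounded by the DAG vertex sensitivity $O(\log n)$ of $r_{\mathrm{SCC}(v)}$.

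The main obstacle I anticipate is ensuring the two sensitivity contributions genuinely add rather than multiply. Naively, each inter-SCC lifted shortcut through $\mathrm{SCC}(v)$ re-uses the entire intra-SCC walk and could incur an $O(\log n)$ sensitivity per shortcut at $v$, producing an $O(\log^2 n)$ total. Overcoming this requires the lifted designated path to route through $v$'s SCC via a \emph{single canonical representative-to-representative} sub-walk, so that the entire inter-SCC machinery collectively charges $v$ only once per DAG-shortcut, and the choice of this canonical sub-walk must be compatible with the \pathshortcut sensitivity bookkeeping. Handling this carefully, possibly by choosing $r_C$ adversarially and by slightly inflating the per-SCC walk via an extra \pathshortcut level (which is also where the third $\log$ in the diameter likely enters), is the step I expect to require the most care.
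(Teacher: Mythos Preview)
Your high-level decomposition --- shortcut the condensation DAG via Theorem~\ref{thm:greedy-good} and bridge with an intra-SCC layer --- is exactly the paper's approach, and your inter-SCC sensitivity accounting (charging each lifted shortcut through $\mathrm{SCC}(v)$ to the DAG vertex-sensitivity of that SCC) is correct and matches the paper.

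The genuine gap is the intra-SCC layer. Your claim that a strongly connected directed graph $C$ admits a closed walk of length $O(|C|)$ visiting every vertex is false; ``DFS and traverse each spanning-tree edge twice'' is undirected intuition that does not survive orientation. Concretely, take a complete binary out-tree on $|C|$ vertices with a back-edge from each leaf to the root: the only way to leave a leaf is to jump to the root, so any closed walk touching all $|C|/2$ leaves has length $\Omega(|C|\log|C|)$. Worse, even when an $O(|C|)$-length walk exists, a single vertex may appear $\Omega(|C|)$ times on it (e.g.\ the hub $r$ in a bidirectional star, where the only walk is $r,v_1,r,v_2,r,\ldots$). Feeding such a walk to \pathshortcut charges that vertex $O(\log n)$ per appearance, yielding $\inorm{\vsens}=\Omega(|C|\log n)$, not $O(\log n)$.

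The paper avoids both issues by applying \treeshortcut to an in-arborescence and an out-arborescence rooted at the SCC representative. Arborescences contain each vertex exactly once, so \Cref{obs:tree-shortcut} gives $O(\log n)$ sensitivity directly; the price is $O(\log^2 n)$ intra-SCC hop-diameter rather than $O(\log n)$, and that is exactly where the third logarithm in $O(\sqrt{n}\log^3 n)$ comes from --- not from any ``recursive \pathshortcut'' layer. Finally, your worry about sensitivities multiplying is a red herring: the intra-SCC and inter-SCC shortcut edges are disjoint sets, so their contributions to $\vsens(v)$ simply add.
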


We complement our upper bound with a lower bound.

\begin{restatable}[Directed Shortcut Set Lower Bound]{theorem}{reachabilitylb}
\label{cor:reachability-lb}
    Any construction of $\beta$-shortcut sets $H$ must have $\inorm{\esens} \cdot \beta = \Omega(n^{1/3})$ for a directed graph $G$ on $n$ vertices.
\end{restatable}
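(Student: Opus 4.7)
My plan is to exhibit a family of hard directed graphs $\{G_n\}$ and apply a counting argument to lower bound the maximum edge sensitivity of any $\beta$-shortcut set on $G_n$. The construction I have in mind is a layered DAG with unique reachability paths, containing a distinguished family $\mathcal{P}$ of $\Omega(n^{2/3})$ source--sink pairs whose canonical paths each have length $\Omega(n^{1/3})$. Moreover, the construction should have the \emph{spread} property that any sub-path of length $\Omega(n^{1/3}/\beta)$ is contained in at most $O(1)$ canonical paths; this is what prevents a single shortcut from cheaply serving many pairs. Suitable candidate constructions can be adapted from the layered DAG frameworks used in classical shortcut-set and distance-preserver lower bounds (Hesse; Coppersmith--Elkin; and their successors), with parameters tuned to exponent $1/3$.

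Given such an instance, the lower bound argument proceeds in two steps. First, for each pair $(a, b) \in \mathcal{P}$, the $\beta$-shortcut set definition guarantees a $\beta$-hop $a$-to-$b$ path in $G \cup H$. Since the canonical $a$-to-$b$ path in $G$ has length $\Omega(n^{1/3})$ while at most $\beta$ hops are allowed, pigeonhole forces at least one hopset edge used on this $\beta$-hop path to have chosen $G$-path of length $\Omega(n^{1/3}/\beta)$; call this the \emph{witness shortcut} for the pair. Second, by the spread property of $G$, each such witness shortcut can serve only $O(1)$ of the canonical pairs, so the number of distinct witness shortcuts is $\Omega(n^{2/3})$, contributing total chosen path length $\Omega(n^{2/3} \cdot n^{1/3}/\beta) = \Omega(n/\beta)$ to $\sum_e \esens(e)$. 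Since this sum is supported on the $O(n^{2/3})$ edges lying on canonical paths (by construction), an averaging step yields $\inorm{\esens} = \Omega(n^{1/3}/\beta)$, which rearranges to $\inorm{\esens} \cdot \beta = \Omega(n^{1/3})$.

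The main obstacle is designing a genuinely hard instance, since naive constructions admit shortcut sets with only polylogarithmic sensitivity and defeat the charging argument. For example, a single directed path can be shortcut via binary-tree hopsets achieving $\beta = O(\log n)$ and sensitivity $O(\log n)$; parallel disjoint paths behave the same way; a single shared bottleneck edge permits the algorithm to route all pairs through the original bottleneck edge while shortcutting both sides independently, giving zero sensitivity on the bottleneck. The delicate step is thus constructing a layered DAG with the spread property and the correct calibration of the $\Omega(n^{2/3})$ canonical pairs, the $\Omega(n^{1/3})$ canonical path length, and the $O(n^{2/3})$ canonical edge set -- so that shortcut \emph{consolidation} onto a few witnesses is ruled out.
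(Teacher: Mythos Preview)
Your outline --- hard layered DAG with unique paths, charge edge sensitivity to long shortcut spans, average over a small edge set --- matches the paper's shape, and your charging arithmetic is fine \emph{given} the instance you posit. The gap is that no such instance is known, and the parameters you list are internally inconsistent. With $\Omega(n^{2/3})$ canonical paths of length $\Omega(n^{1/3})$ there are $\Omega(n)$ path--edge incidences, so supporting them on only $O(n^{2/3})$ edges forces the average edge onto $\Omega(n^{1/3})$ canonical paths; then for any $\beta$ within a constant factor of $n^{1/3}$ your spread threshold $n^{1/3}/\beta$ is $O(1)$ and a single heavily shared edge already violates the spread property. The layered constructions that actually exist (the paper invokes the unweighted perfect-path graphs of \cite{bodwin2021new}, with $d=2$) give \emph{edge-disjoint} perfect paths --- which is your spread property at threshold $1$ --- and that forces $\Theta(\beta)$ layers, $\Theta(n/\beta)$ vertices per layer, and $\Theta(n^{4/3}/\beta^2)$ paths of length only $\Theta(\beta)$.

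On those correct instances your ``one long witness per pair'' charging is too weak by a factor of $\beta$: paths have length $\Theta(\beta)$, so pigeonhole yields a witness of span only $\Theta(1)$, and you end up proving $\inorm{\esens}\cdot\beta^2 = \Omega(n^{1/3})$ rather than $\inorm{\esens}\cdot\beta = \Omega(n^{1/3})$. The paper recovers the missing factor by accounting for \emph{all} hops on each perfect path, not just the longest: since perfect paths are edge-disjoint, every hopset edge of nontrivial span can shorten at most one perfect path, so the total hop-length reduction summed over all paths (which is $\Omega(\card{\Pi}\cdot\beta)$) is bounded by the total span length of $H$, i.e., by $\onenorm{\esens}$ restricted to a set of $\Theta(n)$ distinguished edges obtained by splitting each vertex into an edge. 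A potential-function argument makes this precise, and the final averaging over those $\Theta(n)$ split-edges gives $\inorm{\esens} = \Omega(n^{1/3}/\beta)$.
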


Unlike our bounds for exact undirected hopsets, our results for directed shortcut sets are not tight.
In particular, for polylogarithmic vertex-sensitivity, there is a gap between hop-diameter $\Ot(n^{1/3})$ and $\Ot(\sqrt{n})$, which we leave as an open problem.

\paragraph{Undirected Approximate Hopsets.} 

For undirected approximate hopsets, we prove that one can achieve much better hop-diameter than the polynomial bounds for the previously mentioned problems.
In particular, we prove bounds of $n^{o(1)}$ for both sensitivity and hop-diameter when $\eps > 0$ is at least inverse polylogarithmic.

\begin{restatable}{theorem}{apxhopsetub}
\label{thm:apx-hopset-ub}
    There exists an algorithm which produces a $\paren{O\paren{(k/\eps)^k \log^2 n}, \eps}$-hopset $H$ with $\inorm{\vsens} = O(kn^{1/k}\log^2 n)$ for undirected graphs.
    For any $\eps > 0$ that is at least inverse polylogarithmic, setting $k = \Theta(\sqrt{\log n})$ gives a $(n^{o(1)}, \eps)$-hopset with $\inorm{\vsens} = n^{o(1)}$.
\end{restatable}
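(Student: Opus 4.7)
My plan is to construct the hopset via a hierarchy of $k$ sampled subsets of $V$, combined with applications of the exact low-sensitivity hopset of \Cref{thm:greedy-good} at each level. In more detail, let $p = n^{-1/k}$ and form $V = V_0 \supseteq V_1 \supseteq \cdots \supseteq V_{k-1}$ by independently subsampling each element of $V_{i-1}$ with probability $p$, so $|V_i| = \tilde{O}(n^{1-i/k})$ with high probability. For each $v \in V$ and level $i$, let $p_i(v) \in V_i$ be the nearest level-$i$ vertex (breaking ties consistently so shortest paths are unique) and define the level-$i$ bunch
\[
B_i(v) = \{u \in V_i : \dist_G(v,u) < \dist_G(v, p_{i+1}(v))\}.
\]
Standard Thorup--Zwick analysis yields $|B_i(v)| = \tilde{O}(n^{1/k})$ w.h.p. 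The hopset $H$ then consists of: (i) a weighted edge of weight $\dist_G(v,u)$ from every $v$ to every member of every $B_i(v)$; and (ii) the $(\tilde{O}(\sqrt{|V_i|}), 0)$-hopset from \Cref{thm:greedy-good} installed on the induced shortest-path metric of each $V_i$.

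For the hop-diameter I would use a Cohen-style walk argument: given any pair $s, t$ with $\dist_G(s,t) = d$, build an approximate $(1+\eps)$-path by alternating jumps into bunches with uses of the level-$i$ exact hopsets of (ii). Partition distance scales into $k$ bands, and show that at each level a single $(1+\eps/k)$-approximate segment uses $O(k/\eps)$ hops; composing over the $k$ levels and absorbing the $\log n$ overhead of \Cref{thm:greedy-good} at each scale yields the claimed $\beta = O((k/\eps)^k \log^2 n)$. The $(1+\eps/k)^k \leq 1+\eps$ telescoping handles the approximation.

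For the sensitivity bound, fix a vertex $w$ and count hopset edges $(v,u)$ whose unique $v$--$u$ shortest path in $G$ passes through $w$. For bunch edges at level $i$, symmetry of the Thorup--Zwick bunches implies that, conditional on $w$ lying on the shortest $v$--$u$ path, the suffix from $w$ to $u$ must itself reach a level-$i$ pivot within $w$'s own bunch radius, and the number of such pairs $(v,u)$ is $\tilde{O}(n^{1/k})$ by the standard "$1/p$-geometric" reverse-bunch calculation, holding w.h.p. after a union bound over $w$. Summing over $k$ levels and adding the $O(\log n)$ per-level contribution from the embedded exact hopset of \Cref{thm:greedy-good} gives $\inorm{\vsens} = O(k n^{1/k} \log^2 n)$. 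Setting $k = \Theta(\sqrt{\log n})$ then yields $(k/\eps)^k = 2^{O(\sqrt{\log n} \cdot \log(k/\eps))} = n^{o(1)}$ and $n^{1/k} = 2^{O(\sqrt{\log n})} = n^{o(1)}$, provided $\eps \geq 1/\polylog(n)$.

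The hard part will be the sensitivity analysis: although bunch sizes and reverse bunches are well-controlled \emph{in expectation}, pinning down a worst-case high-probability bound for every $w$ simultaneously requires a delicate union bound over vertices and careful consistent tiebreaking so that the unique-shortest-paths hypothesis underlying both the bunch definition and the embedded appeals to \Cref{thm:greedy-good} is preserved throughout the hierarchy. A second subtlety is showing that the exact hopset of \Cref{thm:greedy-good}, when applied to the induced shortest-path metric on $V_i$ (which may differ from a true graph metric), still delivers its stated guarantees on hop-diameter and sensitivity; this should follow from the same greedy argument, but needs to be verified.
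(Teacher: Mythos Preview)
Your construction has a genuine gap in the sensitivity analysis, and in fact component (i) of your hopset---direct edges from every $v$ to every member of every $B_i(v)$---cannot achieve the claimed bound. Your ``reverse-bunch'' argument correctly shows that if $w$ lies on the shortest $v$--$u$ path and $u \in B_i(v)$, then $u \in \bunch_i(w)$ as well, so there are at most $\tilde{O}(n^{1/k})$ choices of $u$. But this does \emph{not} bound the number of pairs $(v,u)$: for a fixed $u$, the vertex $v$ can be any member of $\cluster(u)$ whose shortest path to $u$ goes through $w$, and there is no $\tilde{O}(n^{1/k})$ bound on that. Concretely, take a path $v_1,\dots,v_n$ with a single top-level vertex $u = v_n \in A_{k-1}$. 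Then $\cluster(u) = V$, so $u \in \bunch(v_j)$ for every $j$, and the direct edge $(v_j,u)$ has span passing through $v_{n-1}$ for all $j < n$. Thus $\vsens(v_{n-1}) \ge n-1$. The paper flags exactly this failure mode in its intuition paragraph: the Thorup--Zwick \emph{emulator} (which is what your part (i) is) has high sensitivity because top-level vertices are connected directly to all of $V$.

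The paper's fix is to replace the direct bunch edges by \treeshortcut applied to each cluster tree $T^*_u$ (the shortest-path tree rooted at $u$, truncated to $\cluster(u)$). Then the sensitivity of $w$ is increased only by the $O(\log n)$ contribution of \treeshortcut for each $u$ with $w \in \cluster(u)$, i.e.\ each $u \in \bunch(w)$, giving $O(|\bunch(w)| \cdot \log n) = O(kn^{1/k}\log^2 n)$. The hop-diameter is then obtained not by a fresh Cohen-style walk, but by observing that every edge of the Thorup--Zwick emulator can be simulated by an $O(\log^2 n)$-hop path inside the tree-shortcut structure, and then quoting the Huang--Pettie result that the emulator is already an $(O((k/\eps)^k),\eps)$-hopset. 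Your component (ii)---embedding \greedyshortcut on the induced metric of each $V_i$---is unnecessary for this argument, and in any case its sensitivity guarantee would not transfer to vertices of $V \setminus V_i$ lying on the underlying $G$-shortest paths.
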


We complement our upper bound with a lower bound.

\begin{restatable}{theorem}{apxhopsetlb}
\label{thm:apx-hopset-lb}
    Fix a positive integer $k$ and parameter $\eps > 1/n^{o(1)}$.
    Any construction of $(\beta, \eps)$-hopsets $H$ with $\beta = O\paren{\paren{\frac{1}{2^{k-2}(2k-1)\eps}}^k}$ has $\inorm{\esens} \ge n^{\frac{1}{2^k - 1} - \Delta}$, $\Delta > 0$.
\end{restatable}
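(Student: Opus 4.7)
The plan is to adapt the layered-graph hopset lower bound of Abboud--Bodwin(--Williams) from a statement about the total number of hopset edges to one about the per-edge sensitivity $\inorm{\esens}$. The hard instance will be an $n$-vertex graph $G$ with $|E(G)| = O(n)$ together with a family of $P$ critical pairs $(s_i, t_i)$, each admitting a unique shortest path $\pi_i$ of length $L = \Theta\paren{(1/\eps)^k}$, obtained from the standard recursive $k$-level blow-up of a perfect path system. I would tune the construction so that $P = \Omega(n^{1 + 1/(2^k - 1) - \Delta/2})$ and each edge of $G$ lies on at most $O(n^{1 - 1/(2^k-1) + \Delta/2})$ of the $\pi_i$'s, absorbing polylogarithmic slack into $\Delta$.

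The structural heart of the proof is exactly the Abboud--Bodwin rigidity lemma: when $\beta = O\!\paren{\paren{\frac{1}{2^{k-2}(2k-1)\eps}}^k}$, the multiplicative slack $(1+\eps)$ is too tight for any $\le\beta$-hop route from $s_i$ to $t_i$ in $G\cup H$ to ``detour'' off $\pi_i$. Consequently every critical pair is served by some hopset edge $(u,v) \in H$ whose unique shortest $uv$-path in $G$ is a subpath of $\pi_i$ of length $\Omega(L/\beta)$. The factor $2^{k-2}(2k-1)$ in the hypothesis is precisely what the Abboud--Bodwin induction accumulates across $k$ recursive levels, and I would re-run their argument with my notation rather than reprove it from scratch.

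With the structural lemma in hand, the conversion to $\inorm{\esens}$ is a short averaging step. Using the $\ell_1$ identity
\[
  \sum_{e \in E(G)} \esens(e) \;=\; \sum_{(u,v) \in H} |\pi_{uv}|,
\]
and lower-bounding the right-hand side by picking out the long hopset edge contributed by each critical pair,
\[
  \sum_{e \in E(G)} \esens(e) \;\geq\; \Omega\!\paren{\tfrac{P L}{\beta}}.
\]
Since $L/\beta = \Theta(1)$ (up to a factor polynomial in $k$, which is absorbed into $\Delta$) and $|E(G)| = O(n)$, averaging yields
\[
  \inorm{\esens} \;\geq\; \frac{1}{|E(G)|}\sum_{e \in E(G)} \esens(e) \;\geq\; \Omega\!\paren{\tfrac{P}{n}} \;\geq\; n^{\frac{1}{2^k-1} - \Delta}.
\]

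The main obstacle is the structural step: tracking the constants through the $k$-level recursion so that the hypothesis lands exactly at $\beta = O\!\paren{(1/(2^{k-2}(2k-1)\eps))^k}$, and verifying that each ``useful'' hopset edge charged to a critical pair stays on-path rather than merely approximating it. Once the rigidity lemma is recast in that form, the averaging argument above is essentially a one-liner, and all remaining work is bookkeeping of polylogarithmic and $k$-dependent factors into the $\Delta$ slack.
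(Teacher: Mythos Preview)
Your overall strategy --- lower bound the total ($\ell_1$) edge sensitivity and then average --- matches the paper's, but there is a real gap in how you set up the averaging. You assume you can tune the Abboud--Bodwin--Pettie instance so that $|E(G)| = O(n)$ while still retaining $P = \Omega(n^{1+1/(2^k-1)-\Delta/2})$ critical pairs. The paper explicitly remarks, immediately before this theorem, that the ABP graph has a \emph{super-linear} number of edges, and that this is precisely why naive averaging over $E(G)$ fails for $\inorm{\esens}$. You never explain how to force $|E(G)| = O(n)$, and it is not clear this can be done without sacrificing either the critical-pair count or the rigidity/ownership property that makes your charging step (one long hopset edge per critical pair, with no double counting) go through.

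The paper avoids this issue without modifying the ABP construction at all. It applies the standard vertex-splitting transformation $v \mapsto (v^{in}, v^{out})$ with a zero-weight edge between them, producing $G'$ together with a set $T$ of exactly $n$ distinguished edges. It then peeks once into the ABP proof to use that the size lower bound is really a lower bound on the number of \emph{long and owned} hopset edges --- those whose span passes through an entire layer of $G$. Any such edge, lifted to $G'$, contains at least one edge of $T$ in its span and hence contributes at least $1$ to $\mathds{1}_T \cdot \esens$. Averaging $\mathds{1}_T \cdot \esens \geq n^{1+1/(2^k-1)-\Delta}$ over $|T| = n$ gives the claim. This is cleaner than your route: it needs neither $|E(G)| = O(n)$ nor any tracking of span lengths $\Omega(L/\beta)$, and it uses ABP essentially as a black box apart from the single observation about long-and-owned edges.
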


This lower bound is an exact analogue of the best known lower bound on the size of approximate hopsets in \cite{abboud2018hierarchy} (we just divide their size bound by $n$ to get our sensitivity bound).
For $\eps > 0$ exactly inverse polylogarithmic, this lower bound says that it is not possible to have sensitivity and hop-diameter be polylogarithmic simultaneously, so we can only hope to improve either the sensitivity or diameter under this regime.
We leave open whether we can get sensitivity and diameter simultaneously polylogarithmic when $\eps$ is larger than polylogarithmic (for example, constant $\eps$).

To briefly address the problems that we do not focus on (recall that we focus on undirected exact hopsets, directed shortcut sets, and approximate undirected hopsets):
Low-sensitivity undirected shortcut sets are simple (but not quite as trivial as in the traditional setting) as noted by prior work and in \Cref{obs:tree-shortcut}.
For directed hopsets, the upper bounds implicit in prior work~\cite{deng2023differentially, ghazi2022differentially} remain the best-known (see \Cref{app:warmup-1,app:warmup-2}), and from the lower bounds side, our results for undirected exact hopsets and directed shortcut sets immediately carry over to exact and approximate directed hopsets, respectively.

\subsection{Applications}
\label{subsec:app-to-dp}

Our motivation for studying low-sensitivity hopsets is two-fold.
The more concrete motivation is that they are already implicitly used in the problem of differentially private range queries, and our new upper bound for low-sensitivity hopsets (Theorem \ref{thm:greedy-good}) immediately implies improved results for this problem (details below).

More generally, we believe that sensitivity is a natural measure for evaluating hopsets, as it is one way of modeling the ``robustness'' of a hopset: low sensitivity means that changing a vertex/edge in the underlying graph $G$ only affects a small number of shortcuts in $H$.
This corresponds, for example, to the following real-world scenario. In computer networking the notion of an overlay network~\cite{Tarkoma2010-zr} considers logical links layered on top of a physical network.
The logical links support functionality in the overlay protocol but they are actually implemented along a physical path in the underlying network.
Overlay networks have been widely adopted in practice (e.g. VPN, VoIP, content delivery, P2P services) due to benefits such as encapsulation, ease of deployment, and quality of service requirements.
The design of an overlay network could benefit from the resilience of low sensitivity hopsets -- changes in one vertex or one edge in the underlying network only affect relatively few overlay links.

\subsubsection{Differentially Private Range Query}

Low-sensitivity hopsets have (implicitly) found applications in differential privacy (DP)~\cite{dwork2014algorithmic}; this is the problem setting with which we are primarily motivated by.
On a high level, differential privacy protects sensitive information by introducing perturbation, such that the output stays immune to the presence or absence of any individual's data.
More formally,
\begin{definition}[\textit{Differential Privacy}]
    \label{def:dp}
    \normalfont For databases $Y$ and $Y'$ that differ on one data entry and $\epsilon>0, \delta\ge 0$, a randomized algorithm $\mathcal{M}$ is \emph{$(\epsilon,\delta)$-differentially private}, if for any  measurable set $A$ in the range of possible outputs, it holds that:
    $\prob{\mathcal{M}(Y) \in A}\leq e^{\epsilon} \prob{\mathcal{M}(Y') \in A} +\delta.$
\end{definition}
The algorithm $\mathcal{M}$ is called pure-DP if $\delta = 0$ and approximate-DP otherwise\footnote{Note that we are using `epsilon' in two different ways in this paper, but we disambiguate the symbols: $\epsilon$ for the differential privacy parameter, and $\eps$ for the approximation parameter of hopsets.}.
Low-sensitivity hopsets have a direct application to the \emph{All Sets Range Queries (ASRQ)} problem in differential privacy~\cite{deng2023differentially}.
The input to ASRQ is an undirected graph $G$ with public topology and shortest paths, and each edge is associated with a private attribute (which can be distinct from its weight).
The output of ASRQ is an $n\times n$ matrix $M$, where entry $(u,v)$ contains the summation of edge attributes along a shortest path between the vertex pair $u$, $v$.
The DP-ASRQ problem asks for a private mechanism to output a matrix $M'$ with the DP guarantee, while minimizing the utility or additive error, i.e. $\ell_\infty$ of $M-M'$. 
The best-known upper bound of the additive error from prior work is $\Ot(n^{1/3})$ for the pure-DP setting and $\Ot(n^{1/4})$ for the approximate-DP setting.
The latter is tight up to polylogarithmic factors~\cite{bodwin2024discrepancy}.
As we outline next, our low-sensitivity exact hopsets imply an improvement of the $\Ot(n^{1/3})$ bound for the pure-DP setting, down to the tight bound of $\Ot(n^{1/4})$, matching the approximate-DP setting.

We are able to show the connection between low-sensitivity hopsets and the DP-ASRQ problem by the following theorem, which is shown implicitly in~\cite{deng2023differentially}.

\begin{theorem}
    \label{thm:dp-hopset-relation}
    If there exists a $(\beta,0)$-hopset $H$ with edge-sensitivity $\inorm{\esens}$ for any given undirected graph, then for any $\epsilon>0$, there exists an $\epsilon$-DP algorithm for the ASRQ problem such that the additive error is at most $\Ot(\frac{1}{\epsilon}\cdot \sqrt{\inorm{\esens} \cdot \beta})$ with high probability.
\end{theorem}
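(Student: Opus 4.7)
The plan is to apply the Laplace mechanism to a vector of released values whose size and $\ell_1$ sensitivity are controlled by the hopset $H$, and then answer each ASRQ query by summing at most $\beta$ noisy coordinates along a short path in $G\cup H$. As a preliminary, fix a deterministic tiebreaking rule on the public topology of $G$ so that shortest paths are unique; this ensures that each hopset edge $(s,t)\in H$ is associated with a well-defined canonical shortest path $P_{st}\subseteq E$, and the quantity $\inorm{\esens}$ counts, for each original edge, the number of hopset edges whose canonical path contains it.

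Form a release vector $y$ indexed by $E \cup H$: for each original edge $e$ with private attribute $a_e$, set $y_e = a_e$; for each hopset edge $(s,t)\in H$, set $y_{st} = \sum_{e \in P_{st}} a_e$. Altering the attribute of a single edge $e$ by one unit changes $y_e$ by one and exactly those $y_{st}$ with $e \in P_{st}$ by one; by definition there are at most $\inorm{\esens}$ of the latter. Hence the $\ell_1$ sensitivity of $y$ is at most $1 + \inorm{\esens} = O(\inorm{\esens})$, and publishing $y$ with independent $\Lap{O(\inorm{\esens}/\epsilon)}$ noise added to each coordinate is $\epsilon$-differentially private by the standard Laplace mechanism.

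For utility, fix a query pair $(u,v)$ and let $Q$ be the $\beta$-hop path in $G \cup H$ guaranteed by the hopset definition to have length exactly $\dist_G(u,v)$. By uniqueness, every hopset edge appearing on $Q$ expands to its canonical sub-path $P_{st}$, so the edges of $Q$ induce a partition of the unique $uv$-shortest path in $G$. Summing the published noisy coordinates along $Q$ thus recovers the true ASRQ answer for $(u,v)$ plus a sum of at most $\beta$ independent Laplace random variables of scale $O(\inorm{\esens}/\epsilon)$. Standard subexponential concentration for sums of Laplace variables, combined with a union bound over all $n^2$ query pairs, then yields an additive error bounded by $\Ot\!\paren{\sqrt{\inorm{\esens}\cdot \beta}/\epsilon}$ with high probability, which is the claimed bound.

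The main subtlety will be getting the sensitivity accounting exactly right: one must ensure that the edges charged in the $\ell_1$ count are precisely the hopset coordinates whose canonical paths contain the changed edge, which is why uniqueness of shortest paths (and the matching definition of $\inorm{\esens}$ in \Cref{subsec:formal-def-hopset}) is essential rather than cosmetic. Once that alignment is established, both the privacy guarantee from the Laplace mechanism and the utility guarantee from Laplace concentration are textbook, which is consistent with this theorem being implicit in \cite{deng2023differentially}; our contribution in the subsequent sections is to plug in low-sensitivity hopsets with a much smaller $\inorm{\esens}$ and $\beta$ than was previously available.
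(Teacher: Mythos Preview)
Your approach is essentially the same as the paper's: release noisy per-edge and per-hopset-edge values via the Laplace mechanism, then answer each query by summing at most $\beta$ noisy coordinates along the low-hop path in $G\cup H$ and invoking Laplace concentration with a union bound. The only structural difference is that the paper (see \puredpalg in \Cref{subsec:proof-of-dp}) runs \emph{two} Laplace mechanisms with different scales---$\Lap{O(\inorm{\esens}/\epsilon)}$ on hop edges and $\Lap{O(1/\epsilon)}$ on original edges---and combines them by basic composition, whereas you fold everything into a single vector and add uniform noise of scale $O(\inorm{\esens}/\epsilon)$ to all coordinates. Both are valid; the paper's split is marginally tighter on original edges but does not change the asymptotics.

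One small slip to fix in your last paragraph: a sum of $\beta$ independent $\Lap{O(\inorm{\esens}/\epsilon)}$ variables concentrates (via \Cref{lem:lap-sum}) at $\Ot\!\paren{\inorm{\esens}\sqrt{\beta}/\epsilon}$, not $\Ot\!\paren{\sqrt{\inorm{\esens}\cdot\beta}/\epsilon}$. The two expressions coincide under $\Ot$ precisely when $\inorm{\esens}$ is polylogarithmic, which is the regime the paper actually uses downstream (plugging in \Cref{thm:greedy-good} to get \Cref{thm:pure-dp}), so your argument still delivers the intended application; but as written, the jump from ``$\beta$ Laplaces of scale $O(\inorm{\esens}/\epsilon)$'' to the stated bound is not justified for general $\inorm{\esens}$.
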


Plugging into \Cref{thm:dp-hopset-relation} our $(\sqrt{n}\log n, 0)$-hopset with  $\inorm\vsens = O(\log n)$ (and thus $\inorm\esens = O(\log n)$) from \Cref{thm:greedy-good}, we obtain the following result:

\begin{theorem}[$\epsilon$-DP upper bound]
    \label{thm:pure-dp}
    There exists an $\epsilon$-DP algorithm for the ASRQ problem such that the additive error is at most  $\Ot(\frac{n^{1/4}}{\epsilon})$ with high probability. 
\end{theorem}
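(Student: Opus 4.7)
The plan is to obtain Theorem \ref{thm:pure-dp} as an immediate corollary of Theorem \ref{thm:dp-hopset-relation}, instantiated with the low-sensitivity exact hopset from Theorem \ref{thm:greedy-good}. No new algorithmic idea is needed; the content is entirely a parameter substitution.

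First, I would recall the guarantees of Theorem \ref{thm:greedy-good}: on any undirected graph there exists a $(\beta, 0)$-hopset $H$ with $\beta = O(\sqrt{n}\log n)$ and vertex sensitivity $\inorm{\vsens} = O(\log n)$. As noted in the results overview and formalized in Section \ref{subsec:sens-transfer}, edge sensitivity is dominated by vertex sensitivity, so this same hopset satisfies $\inorm{\esens} = O(\log n)$.

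Second, I would feed these parameters into Theorem \ref{thm:dp-hopset-relation}, which converts any low-sensitivity exact hopset into an $\epsilon$-DP algorithm for ASRQ with additive error
\[
\Ot\!\left(\frac{1}{\epsilon}\sqrt{\inorm{\esens}\cdot \beta}\right) \;=\; \Ot\!\left(\frac{1}{\epsilon}\sqrt{O(\log n)\cdot O(\sqrt{n}\log n)}\right) \;=\; \Ot\!\left(\frac{n^{1/4}}{\epsilon}\right),
\]
with the polylogarithmic factors absorbed into the $\Ot(\cdot)$. The high-probability guarantee is inherited from Theorem \ref{thm:dp-hopset-relation}.

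There is essentially no obstacle to this proof: the conceptual work lives in Theorem \ref{thm:greedy-good} (the new low-sensitivity construction) and in Theorem \ref{thm:dp-hopset-relation} (already implicit in \cite{deng2023differentially}). The only point worth double-checking is that the reduction in \cite{deng2023differentially} really scales with $\sqrt{\inorm{\esens}\cdot\beta}$ rather than with $\sqrt{\inorm{\vsens}\cdot\beta}$; assuming the statement of Theorem \ref{thm:dp-hopset-relation} as given, and using the edge-sensitivity bound derived above, the final calculation is a one-liner. The resulting $\Ot(n^{1/4}/\epsilon)$ bound matches the known lower bound \cite{bodwin2024discrepancy} up to polylogarithmic factors, closing the gap between pure-DP and approximate-DP regimes for ASRQ.
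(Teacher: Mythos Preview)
Your proposal is correct and matches the paper's approach exactly: both obtain \Cref{thm:pure-dp} by plugging the $(O(\sqrt{n}\log n),0)$-hopset with $\inorm{\vsens}=O(\log n)$ (hence $\inorm{\esens}=O(\log n)$) from \Cref{thm:greedy-good} into the reduction of \Cref{thm:dp-hopset-relation}. The paper's appendix additionally unpacks the reduction by explicitly describing the Laplace-noise algorithm and invoking concentration of Laplace sums, but this is just the content of \Cref{thm:dp-hopset-relation} spelled out, not a different argument.
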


We show the proof of \Cref{thm:pure-dp} in \Cref{app:dp-rq-sd}.
We remark that low-sensitivity hopsets are also implicitly used in another problem studied in differential privacy: All Pairs Shortest Distances.
However a direct relation like \Cref{thm:dp-hopset-relation} is not available.
We further discuss this problem in \Cref{app:dp-rq-sd}.

\subsection{Organization}

We first provide, in \Cref{sec:tech-overview}, a high-level overview of the technical challenges and ideas in our results.
Formal preliminaries are then given in \Cref{sec:prelim}.
Next, we show an upper bound for undirected exact hopsets and directed shortcut sets in \Cref{sec:upper-greedy} and undirected approximate hopsets in \Cref{sec:upper-apx-undi}.
In \Cref{sec:lb}, we show lower bounds on the sensitivity-diameter tradeoff for shortcut sets and hopsets.
We finish by listing several open problems in \Cref{sec:open}.

\section{Technical Overview}
\label{sec:tech-overview}
In this section we outline the key technical ideas in our constructions.
A natural starting point for proving upper and lower bounds for low-sensitivity hopsets is to simply look at bounds for traditional hopsets.
As it turns out, these ideas can sometimes be massaged into the low-sensitivity setting, but they do not always give the best sensitivity bounds, which necessitates the development of new techniques. 

We begin by discussing a technique from prior work on low-sensitivity hopsets that illuminates a key difference between the traditional and low-sensitivity settings.

\paragraph{Techniques from Prior Work: Heavy-Light Decomposition.}

We will start with a simple example: shortcut sets for undirected graphs.
For traditional shortcut sets, this is a trivial problem.
Simply add a star centered at an arbitrary vertex $v$ to get a shortcut set with $n-1$ edges and diameter only $2$.
If we consider the sensitivity of this construction, we quickly realize that $v$ has sensitivity $n-1$.
Thus, this trivial solution does not work in the low-sensitivity setting.
However, as noted implicitly in prior work \cite{fan2022distances}, there is quite a simple construction that does work. 

The solution is to use a \emph{heavy-light decomposition} of the tree of routing paths rooted at $v$.
A heavy-light decomposition partitions the edges of a tree into ``light'' edges and ``heavy'' paths, where any root-to-leaf path has $O(\log n)$ light edges.
This is useful because we can independently shortcut each of the (vertex-disjoint) heavy paths down to diameter and sensitivity $O(\log n)$ using standard methods (see e.g. \cref{fig:path-shortcut}).
This results in a shortcut set with vertex-sensitivity $O(\log n)$ and diameter $O(\log^2 n)$.
See \Cref{sec:heavylight} for more details. 

This solution suggests a more general technique for constructing low-sensitivity hopsets: whenever a traditional hopset contains a star, simply replace it by the above heavy-light decomposition shortcutting scheme.
This technique works, for instance, to translate the folklore exact hopset into the low-sensitivity setting.
The folklore hopset with $\Ot(n)$ edges and $\Ot(\sqrt{n})$ hop-diameter is the following: randomly sample a set $S$ of $\Ot(\sqrt{n})$ vertices and add a hopset edge between all pairs of reachable vertices in $S$ (where the edge weight is the distance between its endpoints).
Viewing each vertex in $S$ as the center of a star, we can apply the heavy-light decomposition transformation to achieve both vertex-sensitivity and hop-diameter $\tilde{O}(\sqrt{n})$ \cite{ghazi2022differentially,deng2023differentially}.
This remains the best-known result for exact and approximate low-sensitivity directed hopsets.

We remark that it would also be natural to try to adapt Kogan and Parter's recent $\Ot(n^{1/3})$-diameter directed shortcut set~\cite{kogan2022new} on $\Ot(n)$ edges, to the low-sensitivity setting.
In short, it is not clear how to do this.
The first step of their construction picks $\Ot(n^{2/3})$ chains on $\Ot(n^{1/3})$ vertices each, and shortcuts each of them. 
Shortcutting a single chain could add sensitivity to $\Omega(n)$ vertices and edges (since it is a chain, not a path).
So, shortcutting $\Ot(n^{2/3})$ chains could already incur vertex and edge sensitivity $\Ot(n^{2/3})$. 

\paragraph{Our Main Technical Contribution.}
Although the known constructions for low-sensitivity hopsets use the strategy of starting with traditional hopsets and applying the above heavy-light decomposition transformation, it is not clear that this is the optimal strategy.
In particular, the low-sensitivity setting permits us freedom not allowed in the traditional setting: we can add as many edges as we'd like, as long as the sensitivity of each individual vertex is bounded.
Thus, we would like to take advantage of our ability to add many edges, while ensuring that not too many of our added edges have overlapping underlying routing paths.
In light of this new goal, we examine low-sensitivity hopsets from scratch, developing a new technique that is conceptually very different from any known techniques for the traditional setting.

Our main new technique allows us to improve above exact undirected hopset of prior work \cite{ghazi2022differentially,deng2023differentially} from vertex sensitivity $\Ot(\sqrt{n})$ all the way down to $O(\log n)$.
Our resulting hopset with $O(\log n)$ vertex-sensitivity and $\Ot(\sqrt{n})$ hop-diameter is tight up to polylogarithmic factors in both parameters.

This technique heavily relies on the assumption that the routing paths are chosen to be \emph{consistent}; that is, each pair of overlapping paths overlaps at one single contiguous subpath.
This assumption holds for shortest paths in undirected graphs and DAGs (with a consistent tie-breaking scheme), but not for shortest paths in general directed graphs.
This is why our technique does not apply to exact or approximate directed hopsets.
But it does apply to directed shortcut sets when we use the technique on the DAG of SCCs, in combination with other ingredients such as carefully choosing routing paths.

Now, we outline the technique.
It is a greedy algorithm for processing routing paths in a carefully chosen order.
Each time we process a routing path, we shortcut the path using a simple and standard method (see \cref{fig:path-shortcut}) which incurs $O(\log n)$ sensitivity.
We need to be careful because if we shortcut a path that overlaps with a previously shortcutted path, the vertices in the overlap incur double the sensitivity.
For this reason, when we process a path we only shortcut the segments of a path that have not been previously shortcutted.
We think of it this way: every time a path is shortcutted, it cast ``shadows'' on other overlapping paths, and we only shortcut the non-shadowed segments of each path.

Using this method, it is immediate from the standard method for shortcutting a path, that the vertex-sensitivity is only $O(\log n)$.
However, the hop-diameter is in question.
The hop-diameter can be determined by roughly the number of shadows cast onto a path, since each shadow chops the path into more segments, and so the goal is to bound the number of shadows cast onto each path.
Since the routing paths are consistent, we know that when we process a path it only casts at most one new shadowed segment (which may be further segmented by already existing shadows) onto every other path.
This helps, but we still need to choose the order to process the paths carefully.
Even with the consistency property, processing the paths in the wrong order could lead to $\Omega(n)$ vertex-sensitivity. 

To choose the processing order, we carefully define a potential function and choose the path with maximum potential.
The potential of a path $P$ is rather simple to define and fast to implement: the number of vertices on $P$ that are not on any previously chosen path.
In our analysis, we show that this potential function yields an algorithm with the desired hop-diameter of $\tilde{O}(\sqrt{n})$.

\paragraph{Techniques for Lower Bounds.}

For our lower bounds, we use modifications of constructions that have previously been applied to traditional hopsets as well as spanners, emulators, distance preservers, reachability preservers, and related problems.
These graph constructions are layered graphs defined by a collection of \emph{perfect paths}, which are unique paths (or unique shortest paths, depending on the setting) that go from the first to last layer and are pairwise edge-disjoint.
These graphs are useful for traditional hopsets because their properties imply that the addition of a single edge can only decrease the hop-length of one perfect path.
This condition is useful for low-sensitivity hopsets too, but the situation is more nuanced.
When we add a single edge to a traditional hopset it simply counts $1$ towards our budget of edges, whereas in the low-sensitivity setting the total amount of sensitivity added depends on the ``length'' of the added edge.
Thus, we need to take into account all possible edge lengths, and consider their contributions to both sensitivity and diameter. 

As a separate issue, we are interested in edge sensitivity for lower bounds.
For this reason we modify known perfect path constructions by replacing each vertex by an edge, which requires a slightly more careful analysis. 

Our final construction begins with a known perfect path construction and optimizes the parameters for the low-sensitivity hopset setting (which results in different graph parameters than constructions for traditional hopsets).
We obtain a lower bound showing that to achieve polylogarithmic vertex or edge-sensitivity for exact undirected hopsets, the hop-diameter must be $\widetilde{\Omega}(\sqrt{n})$, which is tight with our aforementioned upper bound.
We also obtain lower bounds for low-sensitivity directed shortcut sets and low-sensitivity approximate undirected hopsets, both by again appealing to known constructions of layered graphs.

\begin{remark}
\label{rem:optimal-hopset-lb}
    Bodwin and Hoppenworth \cite{bodwin2023folklore} are able to prove stronger lower bounds on the number of edges in a hopset by relaxing the requirement that perfect paths are disjoint.
    Unfortunately, their relaxation does not immediately seem to be useful for lower bounding the \textit{sensitivity} of a hopset; see \Cref{app:optimal-hopset-lb} for more details. 
\end{remark}

\paragraph{Techniques for Approximate Undirected Hopsets. }

For traditional hopsets with $O(n)$ size, there are essentially 3 diameter regimes: 
(1) directed shortcut/hopsets as well as undirected exact hopsets all require polynomial hop-diameter, 
(2) \emph{approximate} hopsets for undirected graphs require only $n^{o(1)}$ hop-diameter, and 
(3) undirected shortcut sets trivially achieve diameter $2$.
In the low-sensitivity setting, we have mainly discussed the polynomial diameter regime so far, but it also makes sense to consider the $n^{o(1)}$-hop-diameter regime for low-sensitivity approximate undirected hopsets.
To this end, we extend known results for traditional approximate undirected hopsets to the low-sensitivity setting, achieving both vertex-sensitivity and hop-diameter $n^{o(1)}$ for any $\eps > 0$ that is at least inverse polylogarithmic.

In the traditional setting, Huang-Pettie and Elkin-Neiman \cite{huang2019thorup,elkin2019hopsets} showed that Thorup-Zwick emulators \cite{thorup2006spanners} are optimal hopsets.
Examining this construction quickly reveals that it has high vertex and edge sensitivity.
Instead, we begin with the similar construction of Thorup-Zwick spanners (also in \cite{thorup2006spanners}).
Recall that emulators and spanners are both sparse graphs that approximately preserve distances in graphs, but a spanner is a subgraph while an emulator can have added edges.
Thus, it makes sense to consider an emulator as a hopset, but not a spanner (since it has no added edges).
However, we can use the heavy-light tree decomposition shortcutting on top of the Thorup-Zwick spanner to obtain a construction that is similar to the Thorup-Zwick emulator, but now it has low sensitivity.
Then, we can leverage the hopset analysis tools of Huang-Pettie to show that our $n^{o(1)}$ vertex sensitivity hopset also has hop-diameter $n^{o(1)}$ for any $\eps > 0$ that is at least inverse polylogarithmic.

\section{Preliminaries}
\label{sec:prelim}

\subsection{Notations and Definitions}
\label{subsec:formal-def-hopset}
With the aim of formally defining vertex and edge sensitivity, we first introduce a couple of notions.

\begin{definition}[\textit{Routing Paths}]
\label{def:routing-paths}
    \normalfont Given a graph $G = (V, E)$ we call $\cP$ a set of \emph{routing paths} for $G$ if for all $s,t \in V$ such that $s$ can reach $t$, there is exactly one path $P \in \cP$ such that $P$ is a path in $G$ with starting point $s$ and ending point $t$.
    If $G$ is weighted, we stipulate that the paths in $\cP$ must be shortest paths in $G$.
\end{definition}

Routing paths are meant to model paths that are deemed critical.
In the context of hopsets, the routing paths must thus be shortest paths, whereas in the context of shortcut sets, any path may be chosen as a routing path.

\begin{definition}[\textit{Span}]
\label{def:span}
    \normalfont Given a graph $G = (V, E)$ and a tuple of a set of routing paths and a shortcut/hopset $\cA(G) = (\cP, H)$, we define the \emph{span} of an edge $e = (s,t) \in H$ to be the unique path in $\cP$ with starting point $s$ and ending point $t$; we denote this with $\edgespan_{\cA(G)}(e)$.
    When the context is clear, we omit the subscript and write $\edgespan(e)$.
\end{definition}

We are now ready to define vertex and edge sensitivity.
These are properties of (the inputs) graphs, together with (the outputs) a set of routing paths and a shortcut/hopset.

\begin{definition}[\textit{Vertex Sensitivity}]
\label{def:vertex-sensitivity}
    \normalfont Given a graph $G = (V, E)$ and a tuple of a set of routing paths and a shortcut/hopset $\cA(G) = (\cP, H)$, the \emph{vertex sensitivity} vector, indexed by $V$, is denoted with $\vsens_{G,\cA(G)}$.
    When the context is clear, we omit any subset of the subscripts: $\vsens$.
    
    The value of $\vsens$ at index $v$ is
    \[\vsens(v) = \card{\set{f \in H : v \in V(\edgespan(f))}}.\]
    The worst case \emph{vertex sensitivity} (given $G, \cA(G)$) is
    \[\inorm{\vsens} = \max_{v \in V} \vsens(v).\]
\end{definition}

\begin{definition}[\textit{Edge Sensitivity}]
\label{def:edge-sensitivity}
    \normalfont Given a graph $G = (V, E)$ and a tuple of a set of routing paths and a shortcut/hopset $\cA(G) = (\cP, H)$, the \emph{edge sensitivity} vector, indexed by $E$, is denoted with $\esens_{G,\cA(G)}$.
    When the context is clear, we omit any subset of the subscripts: $\esens$.
    
    The value of $\esens$ at index $e$ is
    \[\esens(e) = \card{\set{f \in H : e \in E(\edgespan(f))}}.\]
    The worst case \emph{edge sensitivity} (given $G, \cA(G)$) is
    \[\inorm{\esens} = \max_{e \in E} \esens(e).\]
\end{definition}

A reason behind our choice of using vector notation to describe sensitivity is as follows: We sometimes consider the total sensitivity $\onenorm{\esens} = \sum_e \esens(e)$ and even the sum of sensitivities over a set $T \subseteq E$ which we write with $\mathds{1}_{T} \cdot \esens$ where $\mathds{1}_{T}$ is the indicator vector of $T$ and `$\cdot$' denotes the usual inner product over vectors.

Next, we define the notion of \emph{consistency}, which will prove to be a very useful property of carefully chosen routing paths in our upper bound constructions.

\begin{definition}[\textit{Consistency} (Modification from \cite{coppersmith2006sparse})]
\label{def:consistent}
    \normalfont A pair of paths $P, P'$ are said to be \emph{consistent} if their intersection contains at most one connected component.
    A set of paths $\cP$ is said to be \emph{consistent} if every pair of paths $P, P' \in \cP$ are consistent.
    
    A shortest path tiebreaking function chooses for each pair of vertices $s,t \in V$, where $s$ can reach $t$, exactly one shortest path from $s$ to $t$.
    We say a shortest path tiebreaking function is \emph{consistent} if its image is consistent.
\end{definition}

Our lower bounds go through via constructions of graphs inspired by \cite{bodwin2021new, abboud2018hierarchy}; in fact we directly use Theorems~5 and 6 from the first paper and Theorem~4.6 from the second paper for our constructions, but our proofs and the way we invoke the theorems differ.
A common object that appears in both papers, and that we use, are layered graphs.

\begin{definition}[\textit{Layered Graphs}]
\label{def:layered-graph}
    \normalfont A graph $G=(V,E)$ is \emph{layered} if the vertex set can be partitioned into $V = V_1 \sqcup V_2 \sqcup \ldots \sqcup V_\ell$ such that every edge goes between two consecutive parts $V_i$ and $V_{i+1}$.
    If $G$ is directed, then edges are oriented towards the parts with larger indices.
\end{definition}

\subsection{Relationship Between Vertex and Edge Sensitivity}
\label{subsec:sens-transfer}
We discuss how different notions of sensitivity relate to each other.
First, observe that for a fixed shorctut/hopset $H$, its $\inorm{\vsens}$ and $\inorm{\esens}$ could differ drastically.
For example, consider an $n/2$-tipped star with tips of length $2$; any $H$ comprising of all edges from the center $c$ to the ends of all the tips would have $\vsens(c) = n/2$ but $\esens(e) = 1$ for all edges $e$.
More generally, the shortcut/hopset edges with endpoint $v$ contribute one each to $\vsens(v)$ whereas their contribution to the edge sensitivity can be dispersed over the routing paths incident to $v$.
At any rate, observe that the edge sensitivity being no more than the vertex sensitivity generalizes to any graph.

\begin{observation}
\label{obs:esens-less-vsens}
    $\inorm{\esens_{G,\cA(G)}} \le \inorm{\vsens_{G,\cA(G)}}$.
\end{observation}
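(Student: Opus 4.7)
The plan is to prove the observation by a direct comparison of the sensitivity vectors at an arbitrary edge and one of its endpoints. The key structural fact to exploit is that for any hopset edge $f \in H$, its span $\edgespan(f)$ is a path in $G$, so the edge set $E(\edgespan(f))$ and vertex set $V(\edgespan(f))$ are tightly coupled: every edge in a path is incident to two of its vertices.

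Concretely, I would fix an arbitrary edge $e = (u,v) \in E$ and consider the set $S_e = \{f \in H : e \in E(\edgespan(f))\}$ whose cardinality is $\esens(e)$. For each $f \in S_e$, since $e \in E(\edgespan(f))$ and $\edgespan(f)$ is a path in $G$, the endpoint $u$ belongs to $V(\edgespan(f))$. Hence $S_e \subseteq \{f \in H : u \in V(\edgespan(f))\}$, and taking cardinalities gives $\esens(e) \le \vsens(u)$. The same inequality holds with $v$ in place of $u$.

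To finish, I would take the maximum over $e \in E$. For every edge $e = (u,v)$ we have $\esens(e) \le \vsens(u) \le \max_{w \in V}\vsens(w) = \inorm{\vsens}$, so
\[
\inorm{\esens} \;=\; \max_{e \in E}\esens(e) \;\le\; \inorm{\vsens},
\]
which is the claim. There is no real obstacle here: the proof is a containment of two index sets followed by a maximization, and it relies only on the definition of $\edgespan(f)$ as a path so that every edge on it has both endpoints on it.
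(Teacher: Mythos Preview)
Your proof is correct and follows essentially the same approach as the paper's: both fix an edge $e=(u,v)$, observe that any $f\in H$ with $e\in E(\edgespan(f))$ must have $u\in V(\edgespan(f))$, deduce $\esens(e)\le\vsens(u)$, and then pass to the maximum. The only cosmetic difference is that the paper directly picks the edge attaining $\inorm{\esens}$ rather than maximizing over all edges at the end.
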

\begin{proof}
    Consider the edge $e = (u,v)$ where $\esens(e) = \inorm{\esens}$.
    Observe in particular that $\esens(e) \le \vsens(u)$ since every edge in $\cA(G)$ contributing to $\esens(e)$ also contributes to $\vsens(u)$.
    We therefore conclude that $\inorm{\esens_{G,\cA(G)}} \le \inorm{\vsens_{G,\cA(G)}}$.
\end{proof}

It follows from \Cref{obs:esens-less-vsens} that proving upper bounds for $\inorm{\vsens}$ gets us the same bound for $\inorm{\esens}$ and, in the contrapositive, proving lower bounds for $\inorm{\esens}$ gets us the same bound for $\inorm{\vsens}$.
We thus strive to prove upper bounds on $\inorm{\vsens}$ and lower bounds on $\inorm{\esens}$ throughout the paper; all results presented here are of this form.

\subsection{Algorithmic Building Blocks}
In this section we describe primitive subroutines used in our constructions of shortcut/hopsets.
The first tool addresses how to ``shorten'' paths; the second, which can be viewed as a generalization of the first, addresses how to shorten trees.

\subsubsection{Path Shortcutting}

\pathshortcut is a well-known primitive which, when given a path, reduces hop-lengths of shortest paths to $O(\log n)$ while adding no more than $O(\log n)$ sensitivity to any vertex.
Intuitively, \pathshortcut adds an edge between the two endpoints of the path, then partitions the path into two equally sized subpaths and recurses into each of them.
An example of the output of \pathshortcut is depicted in \Cref{fig:path-shortcut}.

\begin{tbox}
    \algname{\pathshortcut} \\
    \textbf{Input:} A path $P$ from $v_0$ to $v_\ell$
    \begin{enumerate}
        \item If $\ell < 2$ then return $\{\}$
        \item Return $\{(v_0, v_\ell)\} \cup \pathshortcut(P[v_0..v_{\lfloor \ell/2\rfloor}]) \cup \pathshortcut (P[v_{\lfloor \ell/2\rfloor}..v_\ell])$
    \end{enumerate}
\end{tbox}

\begin{figure}[h!]
    \centering
    \includegraphics[scale=0.4]{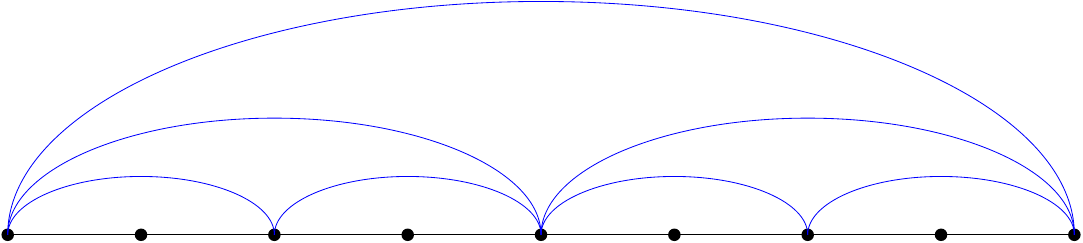}
    \caption{An example of \pathshortcut on a path of length 8 with vertices $v_0, \dots, v_8$. The collection of blue edges is the shortcut/hopset output.}
    \label{fig:path-shortcut}
\end{figure}

\begin{restatable}{observation}{obspathshortcut}
\label{obs:path-shortcut}
    Given a path $P = v_0, v_1, \ldots, v_n$ (possibly oriented in the $(v_i, v_{i+1})$ direction), \pathshortcut outputs a $O(\log n)$-shortcut/hopset with $\inorm{\vsens} = O(\log n)$.
\end{restatable}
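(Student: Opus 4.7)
The plan is to exploit the recursion tree of \pathshortcut, which has depth $O(\log n)$ because the subpath length halves (up to rounding) at every recursive call. Each node of this recursion tree corresponds to a contiguous subpath $P[v_a..v_b]$, and the unique shortcut edge produced at that node is $(v_a, v_b)$, whose span is exactly $P[v_a..v_b]$. Both parts of the observation (vertex sensitivity and hop-diameter) can be read off this structural description.

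For $\inorm{\vsens} = O(\log n)$: I would argue that every shortcut edge added by the algorithm is of the form $(v_a, v_b)$ for some node $[a,b]$ of the recursion tree, and its span is the entire subpath $P[v_a..v_b]$. A fixed vertex $v_i$ lies in the subpath of a recursion-tree node iff $a \le i \le b$. At each depth $d$ of the recursion, $v_i$ belongs to at most two such subpaths (it may coincide with the chosen midpoint, in which case it sits on both the left and right halves, but otherwise just one). Summing across the $O(\log n)$ depths gives $\vsens(v_i) = O(\log n)$, hence $\inorm{\vsens} = O(\log n)$.

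For the hop-diameter of $O(\log n)$: I would first establish a clean one-sided subproblem by induction on the subpath length $\ell = b - a$: for every recursion-tree node $[a,b]$ and every $k \in [a,b]$, the shortcut edges produced within that subtree contain a path from $v_a$ to $v_k$ (and, symmetrically, from $v_b$ to $v_k$) of length $O(\log \ell)$. The inductive step uses the top-level shortcut $(v_a, v_b)$ and the midpoint $v_m$: if $k$ lies in the opposite half from $v_a$, take the top-level edge to $v_b$ or descend to $v_m$ and then into the right half; otherwise descend directly into the left half. Each step contributes at most one hop plus an $O(\log(\ell/2))$-hop recursion. To finish, for arbitrary $v_i, v_j$, locate the deepest recursion-tree node whose subpath contains both indices; its midpoint $v_m$ separates them, so concatenating the one-sided bound from $v_m$ to $v_i$ (in the left half) and from $v_m$ to $v_j$ (in the right half) yields an $O(\log n)$-hop path. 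When $P$ is directed, every shortcut edge is oriented forward, and the same concatenations respect orientation. The edge weight of $(v_a,v_b)$ equals $\dist_G(v_a,v_b)$ (the length of $P[v_a..v_b]$), so the concatenated hop-path has total weight equal to $\dist_G(v_i,v_j)$, giving the exact hopset guarantee.

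The main technical point to handle carefully is the rounding in the split index $\lfloor \ell/2 \rfloor$ when $\ell$ is odd, which I would absorb by the standard observation that the two halves have length $\lceil \ell/2 \rceil$ and $\lfloor \ell/2 \rfloor \le \ell/2 + 1$, so the depth of recursion is still $O(\log n)$ and both the sensitivity and hop-diameter bounds are preserved up to constant factors; the base case $\ell < 2$ is immediate.
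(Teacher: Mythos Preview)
Your proposal is correct and follows essentially the same approach as the paper: both arguments use the recursion tree of \pathshortcut, bounding sensitivity by noting each vertex appears in at most two nodes per level across $O(\log n)$ levels, and bounding hop-diameter by first establishing an $O(\log \ell)$ endpoint-to-any-vertex path within each subproblem and then concatenating at the separating midpoint of the deepest common recursive instance. The paper's proof is terser but structurally identical.
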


For completeness, we provide a proof in \Cref{app:path-tree}.

\subsubsection{Tree Shortcutting via the Heavy-light Decomposition}
\label{sec:heavylight}

\treeshortcut, described in \cite{fan2022distances}, reduces the problem of shortcutting trees to that of shortcutting paths, where the instances passed to \pathshortcut are found via a heavy-light decomposition of the tree.
A heavy-light decomposition~\cite{harel1984fast} (also termed heavy path decomposition) of a rooted tree is a partitioning of its edges into `heavy' paths and `light' edges such that any root-to-leaf path passes through at most $O(\log n)$ light edges, which is captured by \Cref{prop:hl-decom}.

\begin{proposition}[\cite{harel1984fast}]
    \label{prop:hl-decom}
    Given a tree $T$ of size $n$, there is a heavy-light decomposition such that
    \begin{itemize}
        \item any root-to-leaf path has at most $O(\log n)$ light edges;
        \item all heavy paths are vertex disjoint.
    \end{itemize}
\end{proposition}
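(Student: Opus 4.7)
The plan is to define the decomposition explicitly and then verify the two stated properties. Root the tree $T$ at an arbitrary vertex $r$ and, for every non-leaf vertex $v$, let $c(v)$ denote a child of $v$ whose subtree has maximum size among children of $v$ (breaking ties arbitrarily). Designate the edge $(v, c(v))$ as a \emph{heavy} edge, and call every other tree edge \emph{light}. Once this labeling is fixed, define the \emph{heavy paths} to be the connected components of the subgraph formed by the heavy edges (together with their endpoints); any isolated vertex in this subgraph is treated as a trivial heavy path of length zero.

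For the disjointness property, I would observe that each non-root vertex has exactly one parent edge, and each vertex has at most one heavy edge descending to a child (namely $(v, c(v))$). Hence every vertex is incident to at most one heavy edge going up and at most one heavy edge going down, so each vertex belongs to exactly one maximal heavy path. This immediately gives that the heavy paths partition $V(T)$ and therefore are vertex-disjoint.

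For the logarithmic bound on light edges, the key observation is that if $(v, u)$ is a light edge with $u$ a child of $v$, then $|T_u| \le |T_v|/2$, where $T_x$ denotes the subtree rooted at $x$. Indeed, if $|T_u| > |T_v|/2$, then $T_u$ would be strictly larger than every other child subtree of $v$ (since all child subtrees together have size $|T_v|-1 < |T_v|$), forcing $(v, u)$ to be heavy. Now consider any root-to-leaf path $r = v_0, v_1, \ldots, v_k$, and let $i_1 < i_2 < \cdots < i_t$ index the light edges $(v_{i_j - 1}, v_{i_j})$ along this path. By the halving property applied iteratively, $|T_{v_{i_t}}| \le |T_r|/2^t \le n/2^t$. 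Since $|T_{v_{i_t}}| \ge 1$, we conclude $t \le \log_2 n = O(\log n)$.

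The main step to be careful about is the precise statement of the halving inequality when ties in child subtree sizes are possible; the argument above still works because any child $u$ with $|T_u| > |T_v|/2$ is uniquely the largest, so the heavy-child tiebreaking is only invoked when $|T_u| \le |T_v|/2$ for all children and hence the light-edge halving is automatic. No genuine obstacle arises; the proposition is classical and its proof is essentially the two observations above.
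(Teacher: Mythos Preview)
Your proof is correct and is exactly the classical argument for the heavy--light decomposition. The paper does not actually prove this proposition; it is stated with a citation to \cite{harel1984fast} and used as a black box, so there is no ``paper's own proof'' to compare against. Your write-up supplies the standard construction and both verifications cleanly, including the careful handling of ties.
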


Using the above proposition, \treeshortcut is specified as follows.

\begin{tbox}
    \algname{\treeshortcut} \\
    \textbf{Input:} A tree $T$ rooted at $v$
    \begin{enumerate}
        \item Return $\pathshortcut(P)$ for all heavy paths $P$ in a heavy-light decomposition of $T$
    \end{enumerate}
\end{tbox}

\begin{restatable}{observation}{obstreeshortcut}
\label{obs:tree-shortcut}
    Given a tree $T$ rooted at $v$ (with possibly all edges oriented away from the root or all edges oriented towards the root), \treeshortcut outputs an $O(\log^2 n)$-shortcut/hopset with $\inorm{\vsens} = O(\log n)$.
\end{restatable}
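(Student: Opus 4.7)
The plan is to bound $\inorm{\vsens}$ and the hop-diameter separately, using \Cref{prop:hl-decom} together with \Cref{obs:path-shortcut} as the two workhorses. The routing paths are tree paths, and every hopset edge output by \treeshortcut is added by some call $\pathshortcut(P)$ with $P$ a heavy path; by construction, its span is contained entirely within $P$.

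For the sensitivity bound: fix any vertex $v$. By the vertex-disjointness clause of \Cref{prop:hl-decom}, $v$ lies on a unique heavy path $P_v$. Hence the only hopset edges whose span contains $v$ come from $\pathshortcut(P_v)$, and \Cref{obs:path-shortcut} provides at most $O(\log n)$ such edges. Therefore $\vsens(v) = O(\log n)$, giving $\inorm{\vsens} = O(\log n)$.

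For the hop-diameter bound: consider the routing path between any two vertices $u, w$. In the undirected case, let $c$ be the LCA of $u$ and $w$ and decompose the routing path into $u \to c$ and $c \to w$; in the two directed cases the path is already monotone (descendant-to-root, or root-to-descendant). It therefore suffices to bound the number of hops along any monotone sub-path $Q$. Write $Q$ as an alternating sequence
\[
Q \;=\; Q_1\, e_1\, Q_2\, e_2\, \cdots\, e_{k-1}\, Q_k,
\]
where each $Q_i$ is a contiguous sub-path of a single heavy path and each $e_j$ is a light edge. By the root-to-leaf bound in \Cref{prop:hl-decom}, $k - 1 \le O(\log n)$. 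For each $Q_i \subseteq P$, its two endpoints are connected in $\pathshortcut(P) \cup P$ in $O(\log n)$ hops, since \Cref{obs:path-shortcut} guarantees an $O(\log n)$-shortcut set on $P$ (and in the directed case $\pathshortcut(P)$ remains a valid shortcut set along the oriented path). Summing, $Q$ is traversed in $k \cdot O(\log n) + (k - 1) = O(\log^2 n)$ hops, and joining the two halves at the LCA preserves this asymptotic bound.

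The one subtlety worth flagging is the usage of \Cref{obs:path-shortcut} not just for the two endpoints of a heavy path $P$ but for arbitrary pairs of vertices along $P$; this is exactly what it means for $\pathshortcut(P)$ to be an $O(\log n)$-shortcut/hopset on the path $P$ (\Cref{def:shortcut-set,def:hopset}), so no further argument is needed. Combining the two bounds yields the claimed $O(\log^2 n)$ hop-diameter with $\inorm{\vsens} = O(\log n)$.
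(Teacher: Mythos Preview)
Your proof is correct and follows essentially the same approach as the paper's: vertex-disjointness of heavy paths gives the $O(\log n)$ sensitivity bound, and decomposing any tree path into $O(\log n)$ heavy subpaths plus $O(\log n)$ light edges (each heavy subpath shortcut to $O(\log n)$ hops via \Cref{obs:path-shortcut}) gives the $O(\log^2 n)$ diameter bound. Your treatment is in fact slightly more careful than the paper's, making explicit the LCA split in the undirected case and the monotone orientation in the directed cases.
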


This result is implicitly shown in the analysis of~\cite{fan2022distances}, but we provide a proof (in a differential-privacy-free language) in \Cref{app:path-tree} for completeness.
Elementary use of \treeshortcut yields the following quick results.
A $O(\log^2 n)$-shortcut set construction with $\inorm{\vsens} = O(\log n)$ on undirected graphs.
A directed $(O(\sqrt{n} \log n),0)$-hopset construction with $\inorm{\vsens} = O(\sqrt{n} \log n)$.
See \Cref{app:warmup-1,app:warmup-2} for details.

\section{Upper Bounds via a New Greedy Approach}
\label{sec:upper-greedy}
This section describes a new way to construct low-sensitivity shortcut/hopsets when we are able to choose consistent routing paths; this is always possible, for example, in undirected graphs and DAGs.
Using this approach, we can quite immediately get \Cref{thm:greedy-good}.
With some slight care, we apply this approach to shortcut sets on directed graphs and obtain \Cref{thm:direted-reach-ub}.

\subsection{The Greedy Construction}

At a schematic-level, the construction is very simple to describe: for each routing path $P$, processed in \emph{some order}, apply \pathshortcut to the \emph{uncovered pieces} of $P$.
The details are filled in as follows.

\paragraph{Finding an Order to Process Paths. }
The order for which we process the shortest paths is described via a potential function $\Phi$ on routing paths $P$, which counts the number of vertices in $P$ that have not yet been contained in a path that has thus far been processed.
\begin{definition}
\label{def:pot-1}
    \normalfont Let $\cP$ be the set of all routing paths, and let $\cP' \subseteq \cP$ be a subset of the set of all routing paths.
    We define, for all $P \in \cP$, the potential
    \[\Phi_{\cP'}(P) = \card{\set{v \in V(P) : v \not\in V(P') \textrm{ for all paths } P' \in \cP'}}.\]
    If it is clear from the context, we omit the subscript and write $\Phi(P)$.
\end{definition}
The $i^{th}$ routing path in the order is then the one which maximizes $\Phi_{\cP^{(i-1)}}$, where $\cP^{(i-1)}$ is the set of the first $i-1$ routing paths in the order (that is, the ones that have already been processed).

The following is an observation that we will make use of later; it says that the potential of a path $P$ is weakly decreasing throughout the construction process, which can be seen by noting that if $P$ is processed then it has $0$ potential and, otherwise, its vertices that are not contained in already processed paths can only drop as more and more paths are processed.
\begin{observation}
\label{obs:pot-monotone}
    $\Phi_{\cP^{(i)}}$ is weakly decreasing in $i$.
\end{observation}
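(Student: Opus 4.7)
The plan is to reduce the observation to the elementary monotonicity of the set of \emph{uncovered} vertices under enlargement of the covering family. I would fix an arbitrary $P \in \cP$ and, for each index $i$, introduce the set $U_i := \set{v \in V(P) : v \notin V(P') \textrm{ for all } P' \in \cP^{(i)}}$, so that $\Phi_{\cP^{(i)}}(P) = \card{U_i}$ by \Cref{def:pot-1}. The goal is then just to argue $\card{U_{i+1}} \le \card{U_i}$ for every $i$.

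The key step is to note that, by the construction of the processing order, $\cP^{(i+1)}$ is obtained from $\cP^{(i)}$ by adjoining a single newly chosen routing path, so $\cP^{(i)} \subseteq \cP^{(i+1)}$. Consequently the "for all $P' \in \cP^{(i+1)}$" quantifier defining $U_{i+1}$ is at least as strong as the "for all $P' \in \cP^{(i)}$" quantifier defining $U_i$: any $v \in U_{i+1}$ avoids every path in the larger family and therefore, a fortiori, every path in the smaller one. This yields $U_{i+1} \subseteq U_i$, so $\Phi_{\cP^{(i+1)}}(P) \le \Phi_{\cP^{(i)}}(P)$, and since $P$ was arbitrary the weak monotonicity follows.

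There is essentially no obstacle: the claim is just a repackaging of the fact that intersecting more complements produces a smaller set. The only boundary case worth a parenthetical remark is the step $j$ at which $P$ itself is selected, at which point $P \in \cP^{(j)}$ forces $U_j = \emptyset$ and $\Phi_{\cP^{(j)}}(P) = 0$, and the potential remains $0$ for all subsequent indices; this is of course subsumed by the general inclusion $U_{i+1} \subseteq U_i$ established above.
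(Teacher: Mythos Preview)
Your proposal is correct and matches the paper's own reasoning, which is given only as an informal one-line justification preceding the observation: if $P$ has been processed its potential is $0$, and otherwise the set of uncovered vertices can only shrink as more paths are processed. You have simply made this precise via the inclusion $\cP^{(i)} \subseteq \cP^{(i+1)} \Rightarrow U_{i+1} \subseteq U_i$.
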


\paragraph{Processing a Path. }
The uncovered pieces of an unprocessed path $P$ depend on the the history of the algorithm up to the point where $P$ is processed.
These are the maximal subpaths of $P$ for which no vertex is part of an already processed path.
\begin{definition} [\textit{Uncovered pieces of a shortest path}]
\label{def:relevant-pieces}
    \normalfont Let $\cP'$ be the set of paths processed over so far.
    The \emph{uncovered pieces} of $P$, given $\cP'$, are then the maximal connected components of $\set{v \in V(P) : v \not\in V(P') \textrm{ for all paths } P' \in \cP'}$.
\end{definition}
A path $P$ is processed by running \pathshortcut on each of its uncovered pieces.

\paragraph{Putting Things Together. }
The algorithm producing an $\Ot(\sqrt{n},0)$-hopset with $\inorm{\vsens} = O(\log n)$ is then described by \greedyshortcut which selects the path that maximizes the potential $\Phi$ and processes it as above, repeating until every path is processed.
\begin{tbox}
    \algname{\greedyshortcut}\\
    \textbf{Input:} A graph $G = (V, E, w)$ with a consistent set of routing paths $\cP$.
    \begin{enumerate}
        \item $\cP' \gets \emptyset, H \gets \emptyset$
        \item While $\card{\cP'} < {n \choose 2}$
            \begin{enumerate}
                \item Select the next routing path $P^* \gets \argmax\limits_{P \in \cP \setminus \cP'} \Phi_{\cP'}(P)$, breaking ties arbitrarily
                \item $H \gets H \cup \pathshortcut(p)$ for all uncovered pieces $p$ of $P^*$
                \item $\cP' \gets \cP' \cup \set{P^*}$
            \end{enumerate}
        \item Return $H$
    \end{enumerate}
\end{tbox}

One can quickly observe that $\greedyshortcut$ produces a low-sensitivity shortcut/hopset.

\begin{observation}
\label{obs:greedy-sens}
    The hopset produced by \greedyshortcut has $\inorm{\vsens} = O(\log n)$.
\end{observation}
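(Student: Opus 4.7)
The plan is to show that each vertex $v \in V$ participates in at most one invocation of \pathshortcut throughout the execution of \greedyshortcut, after which the claim follows immediately from Observation~\ref{obs:path-shortcut}.

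The key ingredient is Definition~\ref{def:relevant-pieces}: at the moment the algorithm processes a routing path $P^*$, an uncovered piece of $P^*$ consists only of vertices that do not appear on any previously processed path. In particular, once a vertex $v$ lies on some processed path, it is permanently excluded from every uncovered piece of every subsequently processed path. Thus if $P^*$ denotes the first path in the processing order whose vertex set contains $v$, then $v$ belongs to exactly one uncovered piece over the entire run of the algorithm, namely the uncovered piece of $P^*$ containing $v$ (and if no processed path ever contains $v$, then $v$ contributes $0$ to $\inorm{\vsens}$).

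It remains to verify that \pathshortcut calls on uncovered pieces not containing $v$ do not produce hopset edges whose span contains $v$. For this I would invoke the consistency of $\cP$ (which \greedyshortcut takes as input): for any edge $(u,u')$ added by a \pathshortcut call on an uncovered piece $p$ of a routing path $P$, both $u$ and $u'$ lie on $p$, and the subpath of $p$ from $u$ to $u'$ is itself a shortest $uu'$-path. If that subpath differed from the routing path from $u$ to $u'$ in $\cP$, the two shortest $uu'$-paths would intersect in at least two disjoint components (at $u$ and at $u'$), violating consistency. Hence $\edgespan((u,u'))$ coincides with the subpath of $p$ from $u$ to $u'$, and in particular it lies entirely inside $p$.

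Combining these two points, every hopset edge whose span contains $v$ is produced by the single \pathshortcut call on the uncovered piece of $P^*$ that contains $v$. By Observation~\ref{obs:path-shortcut}, that call contributes at most $O(\log n)$ to $\vsens(v)$, and maximizing over $v$ yields $\inorm{\vsens} = O(\log n)$. There is essentially no obstacle here; the argument is an unpacking of the definitions, with consistency playing the only non-trivial role by aligning each edge's span with the subpath \pathshortcut actually sees.
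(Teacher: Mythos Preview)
Your proof is correct and follows the same approach as the paper's: show that $v$'s sensitivity is only increased in the iteration where the first path containing $v$ is processed, then invoke \Cref{obs:path-shortcut}. You are in fact more careful than the paper on one point: the paper simply asserts that ``[the sensitivity of $v$] is only ever increased in one iteration,'' implicitly assuming that the span of a hopset edge produced by \pathshortcut on an uncovered piece $p$ is contained in $p$, whereas you explicitly justify this via consistency.

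One small phrasing issue: your consistency argument compares the subpath of $p$ from $u$ to $u'$ with the $uu'$-routing path $Q \in \cP$, but the subpath of $p$ is not a priori a member of $\cP$, so the definition of consistency does not apply to that pair directly. The clean way to say it is to compare $Q$ with the full processed path $P \in \cP$: both $u$ and $u'$ lie in $V(P)\cap V(Q)$, so by consistency this intersection is a single connected subpath of each; since it contains both endpoints of $Q$ it equals $V(Q)$, forcing $Q$ to coincide with the $uu'$-subpath of $P$, which lies inside $p$. This is exactly the content of your argument, just applied to the right pair of paths.
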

\begin{proof}
    Let $v$ be an arbitrary vertex.
    Its sensitivity is only ever increased in one iteration of \greedyshortcut, namely the first time a path $P$ is chosen such that $v \in V(P)$.
    We know from \Cref{obs:path-shortcut} that the contribution to $\vsens(v)$ in one iteration is bounded by $O(\log n)$, completing the proof.
\end{proof}

Before we turn to the remaining claim that \greedyshortcut is an $\Ot(\sqrt{n}, 0)$-hopset, we need a couple of definitions.
Henceforth, let the paths be processed in the order $P_1, P_2, \ldots, P_{O(n^2)}$.
First we define the \emph{shadows} and \emph{penumbras} cast onto $P_i$.
Shadows are, loosely speaking, complementary to the uncovered pieces of $P_i$ when it is processed; they serve two main purposes: (i) to bound the number of uncovered pieces when $P_i$ is processed, and (ii) as shortcuts that can be used to traverse the covered pieces of $P_i$.

\begin{definition}[\textit{Shadows, Penumbras}]
   \normalfont For any $j < i$, the \emph{shadows} cast by $P_j$ onto $P_i$ are the segments of $P_j$ that coincide with $P_i$ after removing all shadows cast before $P_j$ onto $P_i$.
    Formally, they are the maximal subpaths of
    \begin{align*}
        P_i^{(j)} = \set{v \in P_i \mid v \in P_j \textrm{ and } v \not\in P_k \textrm{ for } k < j},
    \end{align*}
    defined for all $j < i$.

    Let $S$ be a shadow cast onto $P_i$.
    We call the edges in $E(S, P_i \setminus S)$ the \emph{penumbras} of $S$ cast onto $P_i$, where $E(S, P_i \setminus S)$ denotes the edges with one endpoint in $S$ and the other in $P_i \setminus S$.
\end{definition}

Observe that a single path $P_j$ may cast $\Omega(j)$ shadows onto $P_i$.
Nevertheless, if $P_i$ has many shadows cast onto it, there must be almost as many paths which cast shadows onto $P_i$.

\begin{observation}
\label{obs:lots-of-shadows-lots-of-casters}
    Suppose after iterating $P_1$ through to $P_t$, there are $k$ shadows cast onto $P_i$ for $i > t$.
    Then, at least $(k-1)/2$ distinct paths among $P_1$ through to $P_t$ cast at least one shadow onto $P_i$.
\end{observation}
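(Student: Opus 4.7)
I would approach this by considering a ``first-coverer'' labeling of $P_i$: for each $v \in V(P_i)$, set $f(v) = \min\{j \in [1,t] : v \in V(P_j)\}$ if that set is nonempty, and $f(v) = \infty$ otherwise. By the definition of a shadow, a vertex $v$ belongs to a shadow cast by $P_j$ if and only if $f(v) = j$, so the shadows on $P_i$ are exactly the maximal subpaths on which $f$ takes the same finite value. If $J \subseteq [1,t]$ denotes the set of indices of casters contributing at least one shadow, then each shadow is associated with a unique $j \in J$, and the number of shadows $k$ equals the number of maximal constant-valued finite runs in the sequence $f(v_1), f(v_2), \ldots$ along $P_i$.

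The proof would then bound the number of positions at which color changes occur. By consistency of the routing paths, $I_j := V(P_j) \cap V(P_i)$ is a single contiguous subpath of $P_i$ (possibly empty), and hence contributes at most two boundary positions on $P_i$. I would then show that every color change in the $f$-sequence happens at a boundary of some $I_j$ with $j \in J$. Indeed, suppose adjacent vertices $v, v' \in V(P_i)$ satisfy $f(v) \neq f(v')$, and let $j = \min(f(v), f(v'))$ (treating $\infty$ as larger than any index); note $j$ must be finite. The vertex with color $j$ lies in $I_j$ while the other does not, since having color strictly greater than $j$ means being excluded from $I_j$ by the definition of $f$; hence $v, v'$ straddle an endpoint of $I_j$. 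Since $j$ is actually realized as an $f$-value, $j \in J$.

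Because each of the $m = \card{J}$ casters contributes at most two boundary positions, the total number of color changes along the $f$-sequence is at most $2m$, so the sequence has at most $2m+1$ maximal constant-valued runs, and therefore at most $2m + 1$ shadows. We conclude $k \le 2m + 1$, i.e., $m \ge (k-1)/2$, as desired.

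The main subtlety to handle cleanly is that only casters in $J$ should be counted against $m$: for $j \notin J$, every vertex of $I_j$ was already covered by some earlier caster, so the color of each such vertex is strictly less than $j$, and the endpoints of $I_j$ cannot cause any color change attributable to index $j$. Once this point is isolated, the rest is a short counting argument that rests entirely on the consistency of $\cP$.
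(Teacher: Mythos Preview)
Your proof is correct and essentially the same as the paper's: your ``color changes'' are exactly the paper's \emph{penumbras} (edges of $P_i$ at a shadow boundary), and your charging argument---each caster $P_j$ accounts for at most two such edges because $I_j = V(P_j)\cap V(P_i)$ is contiguous by consistency---is identical to the paper's. The only cosmetic difference is that you phrase it via the first-coverer labeling $f$ and count runs, whereas the paper goes directly through penumbras; both yield $k\le 2m+1$.
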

\begin{proof}
    First, note that since there are $k$ shadows cast onto $P_i$, there are at least $k-1$ edges in $P_i$ which are penumbras.
    While the path $P_j$ may cast many penumbras onto $P_i$ for $j < i$, by the consistency of routing paths $P_j$ casts at most $2$ penumbra onto $P_i$ that have not already been cast before $P_j$; these edges are namely the ones incident to boundary of the intersection between $P_j$ and $P_i$ (see \Cref{fig:penumbra-charging}).
    We conclude that there are at least $(k-1)/2$ distinct paths among $P_1$ through to $P_t$ that cast at least one shadow onto $P_i$.
\end{proof}

\begin{figure}[h]
    \centering
    \includegraphics[scale=0.7]{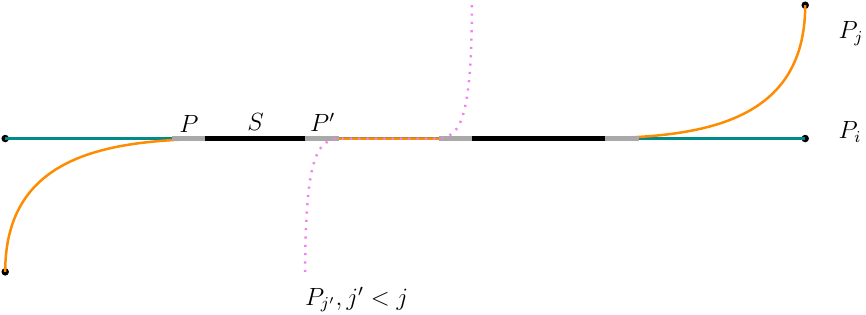}
    \caption{The penumbra $P$ of $S$ (a shadow cast by $P_j$ onto $P_i$) is new, but the penumbra $P'$ of $S$ is also the penumbra of a shadow cast by $P_{j'}$ and is thus not new.
    }
    \label{fig:penumbra-charging}
\end{figure}

Suppose a routing path $P$ from $s$ to $t$ has no more than $k$ shadows cast onto it.
Then the $G \cup \greedyshortcut(G)$ hop-length from $s$ to $t$ is at most $\Ot(k)$; each shadow has shortcut-distance at most $O(\log n)$ by an application of \Cref{obs:path-shortcut} to the time the shadow was first cast, and the remaining parts of $P$, of which there are $O(k)$ of them, each have shortcut-distance at most $O(\log n)$ by another application of \Cref{obs:path-shortcut} at the time $P$ is processed.
Motivated by this, we now show that every routing path has at most $O(\sqrt{n})$ shadows cast onto it and every shadow on $P$ has hop-length $O(\log n)$.

\begin{lemma}
\label{lem:bounded-shadows}
    $P_i$ has $O(\sqrt{n})$ shadows cast onto it, for all $i$.
\end{lemma}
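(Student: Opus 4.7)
The plan is to bound the number of shadow-\emph{casting} paths $m$ (rather than the number of shadows $k$ directly), since \Cref{obs:lots-of-shadows-lots-of-casters} yields $k \le 2m+1$. So it suffices to show $m = O(\sqrt{n})$. Let $P_{j_1}, P_{j_2}, \ldots, P_{j_m}$ (with $j_1 < j_2 < \cdots < j_m < i$) be the paths processed before $P_i$ that each cast at least one shadow onto $P_i$. For each $\ell \in [m]$ introduce three bookkeeping quantities:
\[
a_\ell \;=\; \Phi_{\cP^{(j_\ell - 1)}}(P_{j_\ell}), \qquad
b_\ell \;=\; \Phi_{\cP^{(j_\ell - 1)}}(P_i), \qquad
s_\ell \;=\; b_\ell - b_{\ell+1}.
\]
Here $a_\ell$ is the potential of the path selected at step $j_\ell$, $b_\ell$ is the potential of $P_i$ just before that step, and $s_\ell$ counts the vertices of $P_i$ that are covered anew by $P_{j_\ell}$.

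The argument rests on three clean properties. First, the greedy selection rule guarantees $a_\ell \ge b_\ell$, because $P_i$ is unprocessed at time $j_\ell$ and hence is a candidate whose potential $P_{j_\ell}$ beats. Second, $s_\ell \ge 1$ for every $\ell$: by the definition of $P_{j_\ell}$ being a shadow-caster, at least one vertex of $V(P_i) \cap V(P_{j_\ell})$ is still uncovered just before step $j_\ell$, and between times $j_\ell$ and $j_{\ell+1}$ no other path touches $P_i$ (else it would have been on our list), so the entire drop $b_\ell - b_{\ell+1}$ is contributed by $P_{j_\ell}$ alone. Third, $\sum_{\ell=1}^m a_\ell \le n$, because the sets $V(P_{j_\ell}) \setminus \bigcup_{k < j_\ell} V(P_k)$ counted by $a_\ell$ are pairwise disjoint subsets of $V$.

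With these in hand, I split on a threshold of $\sqrt{n}$. If $b_\ell \ge \sqrt{n}$ for all $\ell \le m$, then $a_\ell \ge \sqrt{n}$ for every $\ell$, and the disjointness bound $\sum_\ell a_\ell \le n$ immediately yields $m \le \sqrt{n}$. Otherwise, let $\ell^*$ be the smallest index with $b_{\ell^*} < \sqrt{n}$. For $\ell < \ell^*$ we still have $a_\ell \ge b_\ell \ge \sqrt{n}$, so again by disjointness $\ell^* - 1 \le \sqrt{n}$. For $\ell \ge \ell^*$, since $s_\ell \ge 1$ and $b$ is monotonically decreasing to a nonnegative value,
\[
m - \ell^* + 1 \;\le\; \sum_{\ell=\ell^*}^{m} s_\ell \;=\; b_{\ell^*} - b_{m+1} \;\le\; b_{\ell^*} \;<\; \sqrt{n}.
\]
Adding the two halves gives $m \le 2\sqrt{n}$, and hence $k \le 4\sqrt{n}+1 = O(\sqrt{n})$.

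The conceptual obstacle is to recognize that the greedy rule transfers a lower bound on $P_i$'s potential into a lower bound on $a_\ell$, which can then be aggregated against the global budget $\sum a_\ell \le n$; the case split on $\sqrt{n}$ is what balances these two directions. Consistency of the routing paths enters only through \Cref{obs:lots-of-shadows-lots-of-casters}, which is what converts the bound on $m$ into a bound on $k$.
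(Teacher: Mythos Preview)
Your proof is correct and follows essentially the same approach as the paper: threshold $\Phi(P_i)$ at $\sqrt{n}$, bound the casters before the threshold via the disjointness budget $\sum_\ell a_\ell \le n$, and bound those after via the unit drops $s_\ell \ge 1$. The paper packages the same idea as a proof by contradiction with less explicit bookkeeping, but the substance is identical (one cosmetic nit: ``no other path touches $P_i$'' between consecutive $j_\ell$ should read ``no other path newly covers a vertex of $P_i$'', since intersections in already-covered vertices are allowed but do not change $\Phi(P_i)$).
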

\begin{proof}
    Suppose, for a contradiction, that there is a routing path $P_i$ which has at least $4 \sqrt{n}$ shadows cast onto it.
    We aim to show that there are many ``highly disjoint'' paths that cast at least one shadow onto $P_i$, which will lead to a contradiction since too many such paths imply $G$ has more than $n$ vertices.

    Towards this, let $P_t$ for $t < i$ be the first path processed after which $\Phi(P_i) < \sqrt{n}$.
    Such a $t$ must exist since, otherwise, we can invoke the monotonicity of $\Phi$, which is weakly decreasing (see \Cref{obs:pot-monotone}), and \Cref{obs:lots-of-shadows-lots-of-casters}, which says that there must be at least $(4 \sqrt{n} - 1)/2$ distinct paths casting shadows onto $P_i$, to show that $G$ has $\sqrt{n} \cdot (4 \sqrt{n} - 1)/2 > n$ vertices which is a contradiction.

    After $P_t$ is processed, there can be at most $\sqrt{n}$ more shadows cast onto $P_i$ since $\Phi(P_i) < \sqrt{n}$ and each shadow cast onto $P_i$ reduces $\Phi(P_i)$ by at least $1$.
    There are therefore at least $3 \sqrt{n}$ shadows cast onto $P_i$ by paths processed before $P_t$ (inclusive).
    Then, we can repeat the same argument as before: by \Cref{obs:pot-monotone} and \Cref{obs:lots-of-shadows-lots-of-casters} $G$ has $\sqrt{n} \cdot (3 \sqrt{n} - 1)/2 > n$ vertices, leading to a contradiction.
\end{proof}

\begin{observation}
\label{obs:short-shadows}
    Let $P$ be a routing path, and let $S$ be any shadow with endpoints $s,t$ cast onto $P$.
    Then the $G \cup \greedyshortcut(G)$ hop-length from $s$ to $t$ (using edges contained in $S$ or whose span is contained in $S$)\footnote{This is an important constraint, since we wish for this claim to apply to $(\beta, 0)$-hopsets} is bounded by $O(\log n)$.
\end{observation}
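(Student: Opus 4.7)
The plan is to exhibit $S$ as a subpath of an uncovered piece of $P_j$ at the moment $P_j$ is processed by \greedyshortcut, and then to argue that the call to \pathshortcut on that piece already contains an $O(\log n)$-hop walk from $s$ to $t$ every one of whose edges either lies inside $S$ (if it is a $G$-edge) or has its span in $\cP$ contained in $S$ (if it is a hopset edge).

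The first step is to verify that $S$ sits inside a single uncovered piece of $P_j$. By the consistency of $\cP$, the set $V(P_i)\cap V(P_j)$ induces a single contiguous subpath in both $P_i$ and $P_j$. Since $S$ is, by definition, a maximal connected subpath of $P_i$ whose vertices all lie in $V(P_j)$, it must therefore also be a contiguous subpath of $P_j$ with the same endpoints $s,t$. The definition of $P_i^{(j)}$ additionally forces the vertices of $S$ to avoid every $P_k$ with $k<j$, so at the time $P_j$ is processed they are all uncovered, placing $S$ inside some uncovered piece $p$ of $P_j$ on which the algorithm invokes \pathshortcut.

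Second, I would establish the following canonical-decomposition property of \pathshortcut by induction on its recursion tree: for any two vertices $u,v$ on a path $p$, the edges of $\pathshortcut(p)$ together with the $G$-edges of $p$ admit a $u$-to-$v$ walk of $O(\log|p|)$ hops whose every edge has both endpoints lying on the subpath of $p$ between $u$ and $v$. If $u$ and $v$ lie in the same half of the bisection we recurse there; if they lie in opposite halves we recursively build a walk from $u$ to the midpoint inside the left half and from the midpoint to $v$ inside the right half, each of which by the inductive hypothesis stays inside the relevant subrange. The base case is immediate when the subpath has at most one edge.

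Applying this decomposition with $u=s$ and $v=t$ yields an $O(\log n)$-hop walk whose edges all have endpoints on the subpath of $p$ from $s$ to $t$, which is precisely $S$. The last point to check is that the span in $\cP$ of each hopset edge $(u',v')$ so produced coincides with the subpath of $P_j$ between $u'$ and $v'$: both $u'$ and $v'$ lie on $P_j$, and any routing path from $u'$ to $v'$ differing from this subpath would intersect $P_j$ in a disconnected set containing $\{u',v'\}$, violating consistency. Hence the span is exactly this subpath and is contained in $S$. I expect this consistency-based identification of spans to be the main subtlety; the canonical-decomposition induction itself is routine bookkeeping that directly mirrors the recursion in \pathshortcut.
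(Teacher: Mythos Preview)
Your proposal is correct and follows essentially the same approach as the paper's proof. The paper's argument is a terse two-liner: it lets $Q$ be the path casting $S$, observes that $S$ lies inside an uncovered piece of $Q$ when $Q$ is processed, and appeals to \Cref{obs:path-shortcut}. You do exactly this but make explicit two points the paper treats as routine --- that \pathshortcut on a piece $p$ admits, for any $s,t \in p$, an $O(\log n)$-hop walk staying entirely within the $st$-subpath (your ``canonical decomposition'', which is implicit in the appendix proof of \Cref{obs:path-shortcut}), and that consistency forces the span of each hopset edge so produced to coincide with the $P_j$-subpath between its endpoints and hence lie in $S$. These extra justifications are correct and address precisely the footnoted constraint the paper glosses over.
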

\begin{proof}
    Let $Q$ be the routing path casting $S$ onto $P$.
    The claim goes through since at the time $Q$ is processed, $S$ is contained in an uncovered piece of $Q$; \greedyshortcut makes a call to \pathshortcut on the uncovered piece of $Q$ containing $S$ when $Q$ is being processed, finishing things up by \Cref{obs:path-shortcut}.
\end{proof}

\subsection{An Upper Bound for Undirected Exact Hopsets}
\label{subsec:undi-hop-upper-proof}
We are able to get upper bounds for exact hopsets on undirected graphs and DAGs immediately using \greedyshortcut.

\greedygood*
\begin{proof}
    We show the theorem for \greedyshortcut.
    This follows from combining \Cref{lem:bounded-shadows} and \Cref{obs:path-shortcut}, and also \Cref{obs:short-shadows}.
    To spell things out, let $G$ be an undirected or directed acyclic graph.
    Any routing path in $G$ has $O(\sqrt{n})$ shadows, each shadow can be crossed within $O(\log n)$ hops (in the correct direction since $G$ is undirected or a DAG), and the uncovered pieces of $P$ at the time it is being processed (of which there are $O(\sqrt{n})$ many) can be crossed within $O(\log n)$ hops since \greedyshortcut makes a call to \pathshortcut on each of these pieces.

    The sensitivity claim follows directly from \Cref{obs:greedy-sens}.
\end{proof}

\subsection{An Upper Bound for Directed Shortcut Sets}
\label{subsec:di-short-upper-proof}
Recall that shortcut sets pertain to the problem of reachability.
The matter of finding low-sensitivity directed shortcut sets can be resolved by a more careful application of \greedyshortcut.
The crux here is that now we have to be slightly careful about our choice of routing paths in order to apply \greedyshortcut effectively; directed graphs may unavoidably contain inconsistent paths (see \Cref{fig:directed-never-consistent}).

\begin{figure}[h]
    \centering
    \includegraphics[scale=0.7]{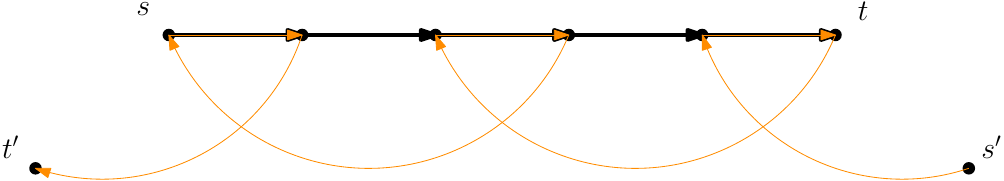}
    \caption{No choice of an $st$-path and an $s't'$-path yields a consistent set of routing paths.}
    \label{fig:directed-never-consistent}
\end{figure}

\paragraph{Routing Paths. }
Fix a directed graph $G = (V, E)$ and consider its strongly connected components (SCCs) $C^1, C^2, \ldots, C^k$.
Our shortcut set construction will contain two types of edges: (i) intra-SCC edges, and (ii) inter-SCC edges.

To handle the intra-SCC edges of $C^i$, we arbitrarily pick exactly one representative vertex $v^i \in C^i$ and use any shortest path tiebreaking function to find a shortest path arborescence of $G[C^i]$, the graph induced by $C^i$, towards $v^i$ and another away from $v^i$.
We denote these arborescences by $T^{in}_{v^i}$ and $T^{out}_{v^i}$ respectively.
All paths in $T^{in}_{v^i}$ from descendent towards ancestor and in $T^{out}_{v^i}$ from ancestor towards descendent are included in the set of routing paths.
Notice that the routing paths are well defined.
For a pair of vertices $s,t \in C^i$ it might seem that we have mistakenly defined two distinct routing paths (one from $T^{in}_{v^i}$ and the other from $T^{out}_{v^i}$); this is, however, not possible since we have used only one shortest path tiebreaking function from which there is a unique shortest $st$-path.

To handle the inter-SCC edges, for any $C^i$ and $C^j$ with $E(C^i, C^j) \neq \emptyset$, we arbitrarily pick exactly one representative edge $e^{i,j} = (u,v) \in E(C^i, C^j)$ and call $u$ an out-port and $v$ an in-port, which we access with $\outport(e^{i,j})$ and $\inport(e^{i,j})$ respectively.
The shortcut set will only ever add inter-SCC edges from out-ports to in-ports and so only routing paths for these types of edges are left to be defined (the other routing paths can be defined arbitrarily).
Let $D$ be the directed acyclic graph (DAG) formed by contracting the SCCs of G, and removing for all $i,j$ parallel edges between the super-vertices $C^i$ and $C^j$ except for the representative edge $e^{i,j}$, if it exists. 
Let $\cP_D$ be the set of shortest paths of $D$ over a \emph{consistent} shortest path tiebreaking function (see \Cref{def:consistent}).
Then for any $P \in \cP_D$, where $P = e_1, e_2, \ldots, e_\ell$, we define the routing path in $G$ from $\outport(e_1)$ to $\inport(e_\ell)$ by connecting, for all $i \in [\ell-1]$, $\outport(e_i)$ to $\inport(e_{i+1})$ via an arbitrary simple path contained in the SCC they both belong to (see \Cref{fig:directed-reach-path}).

\begin{figure}[h]
    \centering
    \includegraphics[scale=1.3]{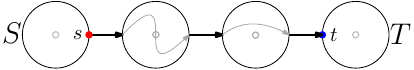}
    \caption{The $ST$-path in $D$ (super-vertices, which are SCCs, are represented by large circles) yields an $st$-path in $G$ where $s$ is an out-port and $t$ is an in-port. Gray paths contained in SCCs are chosen arbitrarily.}
    \label{fig:directed-reach-path}
\end{figure}

\paragraph{The Construction. }
Having defined the relevant routing paths to our shortcut set, the complete construction is as follows.
\begin{tbox}
    \algname{\directedreach}\\
    \textbf{Input:} A directed graph $G = (V, E)$ with a representative vertex $v^i$ for each strongly connected component $C^i$ and a representative edge for each pair of adjacent SCCs.
    \begin{enumerate}
        \item $H \gets \emptyset$
        \item $H \gets H \cup \treeshortcut(T^{in}_{v^i}) \cup \treeshortcut(T^{out}_{v^i})$ for every strongly connected component $C^i$ where $T^*_{v^i}$ is a shortest path arborescence of $G[C^i]$ rooted at $v^i$
        \item $D \gets $ contraction of SCCs of $G$, only keeping representative edges
        \item For every $e \in \greedyshortcut(D)$ update $H \gets H \cup \set{(\outport(e_1), \inport(e_\ell))}$ where $e_1, e_\ell$ are defined by $\edgespan_D(e) = e_1, e_2, \ldots, e_\ell$
        \item Return $H$
    \end{enumerate}
\end{tbox}

It remains to bound the diameter and sensitivity of \directedreach.

\directedreachub*
\begin{proof}\ \\
    We show the theorem for \directedreach.\\
    \textbf{Output is a $O(\sqrt{n}\log^3 n)$-shortcut set. }
    Let $G$ be a directed graph and suppose the vertex $s$ reaches the vertex $t$ in $G$.
    Denote the SCC of $s$ by $S$ and that of $t$ by $T$.
    There must be a shortest path from $S$ to $T$ in the directed acyclic graph $D$.
    Thus, $D \cup \greedyshortcut(D)$ produces an $O(\sqrt{n}\log n)$-shortcut set for $D$, by \Cref{thm:greedy-good}.
    This translates to a path in $G \cup \directedreach(G)$ that uses $O(\sqrt{n}\log n)$ edges to connect $S$ to $T$ (not counting the edges required the move from in-port to out-port within each SCC).
    Moving from an in-port to an out-port of an SCC $C$ takes $O(\log^2 n)$ by the call to $\treeshortcut(C)$, and using \Cref{obs:tree-shortcut}.
    The $G \cup \directedreach(G)$ hop-length from $s$ to $t$ is thus $O(\sqrt{n}\log n) \cdot O(\log^2 n) = O(\sqrt{n}\log^3 n)$.
    \\
    \textbf{Sensitivity is bounded by $O(\log n)$. }
    Consider any vertex $v$.
    The contributions to $\vsens(v)$ come from the call to $\treeshortcut(C)$ where $v \in C$ (the intra-SCC edges) and from the shortcut set edges added by the call to $\greedyshortcut(D)$ (the inter-SCC edges).
    The contribution of $\treeshortcut(C)$ is $O(\log n)$ by \Cref{obs:tree-shortcut}.
    The contribution by the latter is also $O(\log n)$ since any inter-SCC edge which contributes to $\vsens(v)$ has a distinct corresponding edge which contributes to $\vsens_D(C)$ in the call to $\greedyshortcut(D)$, and we know that $\inorm{\vsens_D} = O(\log n)$ by \Cref{obs:greedy-sens}.
\end{proof}
\section{Approximate Undirected Low-Sensitivity Hopset Constructions}
\label{sec:upper-apx-undi}
Here we give an $(n^{o(1)}, \eps)$-hopset with vertex sensitivity $n^{o(1)}$ for any $\eps > 0$ that is at least inverse polylogarithmic.
As a reminder, the main theorem of this section is as follows.
\apxhopsetub*

\paragraph{Intuition. }
Huang-Pettie and Elkin-Neiman showed in \cite{huang2019thorup,elkin2019hopsets} that the emulators constructed by Thorup and Zwick in \cite{thorup2006spanners} are $\paren{O\paren{(k/\eps)^k}, \eps}$-hopsets for every $\eps > 0$.
Briefly, the emulators are constructed by placing vertices in a hierarchy of roughly $k$ levels and, for each vertex $v$, connecting $v$ with a direct edge to all vertices in $\ball[v]$, where $\ball[v]$ are the vertices in $v$'s level that are closer to $v$ than $v$ is to the level one higher than $v$'s (formal details to come later).

Ideally, setting $k = \Theta(\sqrt{\log n})$, the emulator would immediately give us an $(n^{o(1)}, \eps)$-hopset.
But things are hardly ever ideal; for one, vertices in the last level of the hierarchy are connected directly to all of $V$ and thus would witness high sensitivity.
To circumvent this, we construct a low-sensitivity hopset heavily based on (if not entirely lifted from) \cite{thorup2005approximate}, another result of Thorup and Zwick pertaining to distance oracles (and, later, also to spanners in \cite{thorup2006spanners}).
We then use the emulator of Thorup and Zwick as an analysis tool: for any pair $(u,v)$ in the emulator, we show that there is a shortest path using $O(\log^2 n)$ hops to get from $u$ to $v$ in our hopset.

\paragraph{The Thorup-Zwick Spanner Based Construction. }
For the remainder of this section, we work with an undirected weighted graph $G = (V, E, w)$.
We also set the routing paths under a consistent shortest path tiebreaking function (see \Cref{def:consistent}) and refer to them as \emph{the} shortest paths.
To proceed, we first make a definition (to the extent possible, all definitions in this section coincide with those in \cite{thorup2005approximate,thorup2006spanners,huang2019thorup}).

\begin{definition}[\textit{Cluster}]
\label{def:cluster}
    Let $V = A_0 \supseteq A_1 \supseteq \ldots \supseteq A_{k-1}$ be a hierarchy of vertices, and $A_k = \emptyset$.
    The \emph{cluster} of $u \in A_{i} \setminus A_{i+1}$, defined for all $i \in [0,k-1]$, is
    \[\cluster(u) = \set{v \in V : \dist_w(v,u) < \dist_w(v,A_{i+1})}.\]
\end{definition}

Note that if $u \in A_{k-1}$, then $\cluster(u) = V$.
We first give an observation about clusters in advance, to be used later in the analysis of our construction.

\begin{observation}
\label{obs:clus-trunc}
    For all $u \in V$, there is some truncation $T^*_u$ of the shortest path tree $T_u$ rooted at $u$ such that $V(T^*_u) = \cluster(u) \cup \set{u}$.
\end{observation}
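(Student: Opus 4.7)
The plan is to verify that $\cluster(u) \cup \{u\}$ is closed under taking ancestors in the (consistent) shortest-path tree $T_u$; once this is established, defining $T^*_u$ as the subtree of $T_u$ induced by $\cluster(u) \cup \{u\}$ exhibits $T^*_u$ as a valid truncation, as desired.

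Concretely, fix the level $i$ so that $u \in A_i \setminus A_{i+1}$ (with the convention that $A_k = \emptyset$ if $u \in A_{k-1}$), and let $v \in \cluster(u)$, so by \Cref{def:cluster} we have the strict inequality $\dist_w(v,u) < \dist_w(v, A_{i+1})$. The main step is to show that every internal vertex $x \ne u$ on the unique $uv$-path in $T_u$ also satisfies $x \in \cluster(u)$. Since $x$ lies on a shortest $uv$-path, we have $\dist_w(u,x) + \dist_w(x,v) = \dist_w(u,v)$. Assume for contradiction that $x \notin \cluster(u)$, i.e., there exists $y \in A_{i+1}$ with $\dist_w(x,y) \le \dist_w(x,u)$. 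By the triangle inequality,
\[
\dist_w(v,y) \;\le\; \dist_w(v,x) + \dist_w(x,y) \;\le\; \dist_w(v,x) + \dist_w(x,u) \;=\; \dist_w(v,u).
\]
On the other hand, $\dist_w(v, A_{i+1}) \le \dist_w(v,y)$ since $y \in A_{i+1}$, which combined with the displayed inequality would give $\dist_w(v,A_{i+1}) \le \dist_w(v,u)$, contradicting $v \in \cluster(u)$. Hence $x \in \cluster(u)$.

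This shows that for any $v \in \cluster(u)$, every vertex on the unique $u$-to-$v$ path in $T_u$ lies in $\cluster(u) \cup \{u\}$, so this vertex set induces a connected subtree of $T_u$ rooted at $u$, i.e., a truncation. The main obstacle I anticipate is simply being careful about strict vs.\ non-strict inequalities (the cluster definition uses strict $<$, which is exactly what makes the contradiction go through) and making sure $T_u$ is a well-defined tree; the latter follows since we work under a consistent shortest-path tiebreaking function, so the chosen shortest paths out of $u$ indeed assemble into a tree.
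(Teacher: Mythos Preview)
Your proof is correct and follows essentially the same approach as the paper: both fix $u\in A_i\setminus A_{i+1}$, take $v\in\cluster(u)$, and argue by contradiction that any intermediate vertex on the shortest $uv$-path must also lie in $\cluster(u)$ via the triangle inequality, yielding $\dist_w(v,A_{i+1})\le\dist_w(v,u)$. The only cosmetic difference is that you name a witness $y\in A_{i+1}$ explicitly while the paper works directly with $\dist_w(\cdot,A_{i+1})$.
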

\begin{proof}
 Let $u \in A_{i} \setminus A_{i+1}$ and $v \in \cluster(u)$.
    We establish this claim by showing that all vertices along the shortest path from $u$ to $v$ also belong to $\cluster(u)$.
    Let $v'$ be any such vertex, and suppose for a contradiction that $v' \not\in \cluster(u)$.
    Then
    \begin{align*}
        &
        \dist_w(v', A_{i+1}) \le \dist_w(v',u)
        \tag{$v' \not\in \cluster(u)$}
        \\
        \implies
        &
        \dist_w(v, v') + \dist_w(v', A_{i+1}) \le \dist_w(v, v') + \dist_w(v',u)
        \\
        \implies
        &
        \dist_w(v, v') + \dist_w(v', A_{i+1}) \le \dist_w(v, u)
        \tag{$\dist_w(v, v') + \dist_w(v',u) = \dist_w(v, u)$}
        \\
        \implies
        &
        \dist_w(v, A_{i+1}) \le \dist_w(v, u)
        \tag{Triangle inequality: $\dist_w(v, A_{i+1}) \le \dist_w(v, v') + \dist_w(v', A_{i+1})$}
        \\
        \implies
        &
        v \not\in \cluster(u),
    \end{align*}
    which is a contradiction.
\end{proof}

We are now ready to present the low-sensitivity hopset, given by $\tzspan$ which runs \treeshortcut on the truncated shortest path trees rooted at $v \in V$ whose union over all vertices forms the classic Thorup-Zwick spanner.

\begin{tbox}
    \algname{\tzspan}\\
    \textbf{Input:} A weighted undirected graph $G = (V, E, w)$
    \begin{enumerate}
        \item Let $V = A_0 \supseteq A_1 \supseteq \ldots \supseteq A_{k-1}$ be a hierarchy of vertices chosen according to Theorem~3.7 in \cite{thorup2005approximate}
        \footnote{Theorem~3.7 in \cite{thorup2005approximate} derandomizes the process where each vertex in $A_i$ is independently promoted to $A_{i+1}$ with probability $n^{-1/k}$, and $A_k = \emptyset$.}
        .
        \item $H \gets \emptyset$
        \item For $u \in V$
            \begin{enumerate}
                \item Let $T_u$ be the shortest path tree rooted at $u$, truncated to $T^*_u$ so that $V(T^*_u) = \cluster(u) \cup \set{u}$
                \item $H \gets H \cup \treeshortcut(T^*_u \textrm{ rooted at } u)$
            \end{enumerate}
        \item Return $H$
    \end{enumerate}
\end{tbox}

Let us first see that this construction indeed has low sensitivity, before we come back to the Thorup-Zwick emulator alluded to in the preamble of this section.
Towards this, we define the inverse notion of $\cluster(u)$.

\begin{definition}[\textit{Bunch}]
\label{def:bunch}
    Let $V = A_0 \supseteq A_1 \supseteq \ldots \supseteq A_{k-1}$ be a hierarchy of vertices, and $A_k = \emptyset$.
    The \emph{$i$ level bunch of $v \in V$} is defined as follows:
    \[\bunch_i(v) = \set{u \in A_i : \dist_w(v,u) < \dist_w(v,A_{i+1})}.\]
    The \emph{bunch of $v$} is
    \[\bunch(v) = \bigcup_i \bunch_i(v).\]
\end{definition}

Observe that $v \in \cluster(u)$ if and only if $u \in \bunch(v)$.
We are now ready to show that $\tzspan$ returns a low-sensitivity hopset.
\begin{lemma}
\label{lem:apx-sens}
    $\tzspan$ outputs a hopset $H$ such that $\inorm{\vsens} = O(kn^{1/k}\log^2 n)$.
\end{lemma}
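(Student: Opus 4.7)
The plan is to bound, for each vertex $v \in V$, both the number of iterations of $\tzspan$ that contribute to $\vsens(v)$ and the per-iteration contribution. First I would observe that a call $\treeshortcut(T^*_u)$ can only affect $\vsens(v)$ when $v \in V(T^*_u) = \cluster(u) \cup \set{u}$, since any edge added in that call has its span contained in $T^*_u$ (by \Cref{obs:clus-trunc}, $T^*_u$ really is a subtree of the shortest-path tree rooted at $u$, so the spans of the hopset edges output by the corresponding \pathshortcut calls live inside $T^*_u$). Using the duality between clusters and bunches, namely $v \in \cluster(u)$ if and only if $u \in \bunch(v)$, this can happen for at most $|\bunch(v)| + 1$ choices of $u$ (the ``$+1$'' accounts for $u = v$ itself).

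Next I would invoke the size bound on $\bunch(v)$ coming from the derandomized Thorup--Zwick hierarchy (Theorem~3.7 of \cite{thorup2005approximate}), which yields $|\bunch(v)| = O(kn^{1/k}\log n)$ uniformly over $v \in V$. Combined with the per-call sensitivity bound of $O(\log n)$ from \Cref{obs:tree-shortcut} applied to each $\treeshortcut(T^*_u)$ (the truncated shortest-path tree $T^*_u$ is still a tree rooted at $u$, so the observation applies directly), this gives
\[
\vsens(v) \le (|\bunch(v)|+1) \cdot O(\log n) = O(kn^{1/k}\log^2 n).
\]
Taking the maximum over $v \in V$ yields the claimed bound $\inorm{\vsens} = O(kn^{1/k}\log^2 n)$.

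The only nontrivial external input is the bound on $|\bunch(v)|$ coming from the derandomized Thorup--Zwick construction; the remainder is a direct composition of the cluster/bunch duality built into \Cref{def:cluster,def:bunch} with the sensitivity guarantee for $\treeshortcut$. I expect the main subtlety to be accounting for the extra $\log n$ factor in the final bound: the randomized Thorup--Zwick construction gives $|\bunch(v)| = O(kn^{1/k})$ in expectation, while the derandomized hierarchy loses an additional $\log n$ factor, which is precisely what propagates into the $O(kn^{1/k}\log^2 n)$ sensitivity bound.
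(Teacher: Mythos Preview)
Your proposal is correct and follows essentially the same approach as the paper: identify that $\treeshortcut(T^*_u)$ contributes to $\vsens(v)$ only when $u = v$ or $u \in \bunch(v)$, bound each such contribution by $O(\log n)$ via \Cref{obs:tree-shortcut}, and invoke Theorem~3.7 of \cite{thorup2005approximate} for $|\bunch(v)| = O(kn^{1/k}\log n)$. The paper presents the same case split and the same arithmetic; your added remark about where the extra $\log n$ factor comes from in the derandomized hierarchy is accurate and matches the paper's citation.
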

\begin{proof}
    Observe that $\vsens(v)$ for any $v \in V$ is only ever increased by $\tzspan(G)$ in two cases:
    \begin{itemize}
        \item The call to $\treeshortcut(T^*_v \textrm{ rooted at } v)$;
        \item Calls to $\treeshortcut(T^*_u \textrm{ rooted at } u)$ when $v \in T^*_u$ which holds if and only if $u \in \bunch(v)$.
    \end{itemize}
    Each case contributes $O(\log n)$ to $\vsens(v)$, using \Cref{obs:tree-shortcut}.
    Since the first case only occurs once, what remains is to address the second case by bounding $\card{\bunch(v)}$.

    Theorem~3.7 in in \cite{thorup2005approximate} shows that the choice of $A_0, A_1, \ldots, A_k$ results in $\card{\bunch(v)} = O(k n^{1/k} \log n)$ for all $v \in V$.

    Putting everything together, $\inorm{\vsens}$ is bounded above by $\log n \cdot \paren{1 + O(k n^{1/k} \log n)} = O(kn^{1/k}\log^2 n)$.
\end{proof}

Setting $k = \Theta(\sqrt{\log n})$ in \Cref{lem:apx-sens} gives subpolynomial sensitivity: $\inorm{\vsens} = n^{o(1)}$.
Now it remains to show that what $\tzspan$ returns is indeed a good hopset.

\paragraph{Using the Thorup-Zwick as an Analysis Tool. }
We now describe the Thorup-Zwick Emulator of \cite{thorup2006spanners,huang2019thorup}, so that we may presently show that $\tzspan$ is a good hopset.
Let us first make a few more definitions.

\begin{definition}[$p_i(v)$ and its tiebreaking]
    Let $V = A_0 \supseteq A_1 \supseteq \ldots \supseteq A_{k-1}$ be a hierarchy of vertices, and $A_k = \emptyset$.
    For $i \in [0,k-1]$, $p_i(v) \in A_i$ is a vertex such that $\dist_w(v, p_i(v)) = \dist_w(v, A_i)$.
    
    There could be many possible candidates for $p_i(v)$.
    Ties for $p_{k-1}(v)$ are broken arbitrarily.
    If $\dist_w(v, A_i) = \dist_w(v, A_{i+1})$, then we set $p_i(v) = p_{i+1}(v)$; otherwise, we break ties for $p_i(v)$ by setting it to an arbitrary candidate.
    
    Finally, $p_k(v)$ is undefined, but we say that $\dist_w(v, p_k(v)) = \infty$.
\end{definition}

While the tiebreaking is not necessary in the emulator construction in \cite{thorup2006spanners}, it will be useful for our purposes; what it buys us is that we are ensured $p_i(v) \in \bunch(v)$, shown in Lemma~4.1 of \cite{thorup2005approximate} which uses the same tiebreaking scheme.

\begin{definition}[Open and Closed Balls]
\label{def:op-cl-ball}
    Let $V = A_0 \supseteq A_1 \supseteq \ldots \supseteq A_{k-1}$ be a hierarchy of vertices, and $A_k = \emptyset$.
    Let $v \in A_i \setminus A_{i+1}$ for $i \in [0,k-1]$.
    We define the \emph{open ball} of $v$ to be
    \[\ball(v) = \set{u \in V_i : \dist_w(v,u) < \dist_w(v, p_{i+1}(v)}.\]
    The \emph{closed ball} of $v$ is
    \[\ball[v] = \ball(v) \cup \set {p_{i+1}(v)}.\]
\end{definition}

Note in particular that, since we have earlier defined $\dist_w(v, p_k(v)) = \infty$, it follows that for $v \in A_{k-1}$ we have $\ball(v) = \ball[v] = V$.
The emulator construction is then given below by $\tzemu$.

\begin{tbox}
    \algname{\tzemu}\\
    \textbf{Input:} A weighted undirected graph $G = (V, E, w)$
    \begin{enumerate}
        \item Let $V = A_0 \supseteq A_1 \supseteq \ldots \supseteq A_{k-1}$ be a hierarchy of vertices where each vertex in $A_i$ is independently promoted to $A_{i+1}$ with probability $n^{-1/k}$, and set $A_k = \emptyset$
        \item Return $\bigcup\limits_{v \in V} \set{(v,u) : u \in \ball[v]}$
    \end{enumerate}
\end{tbox}

Huang and Pettie showed in \cite{huang2019thorup} the following proposition with regards to $\tzemu$ constructing a hopset.

\begin{proposition}[Theorem~1 in \cite{huang2019thorup}]
\label{prop:huang-pettie}
    Fix any weighted graph $G$ and integer $k \ge 1$.
    $\tzemu$ returns a $(\beta, \eps)$-hopset for $G$ with with size $O(n^{1 + \frac{1}{2^{k+1}-1}})$ and $\beta = 2\paren{\frac{(4+o(1))k}{\eps}}^k = O\paren{(\frac{k}{\eps})^k}$.
\end{proposition}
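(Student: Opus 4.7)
The plan is to follow the strategy of Huang and Pettie, which splits naturally into a size analysis and a hop/stretch analysis, both of which exploit the geometric structure of the random hierarchy $A_0 \supseteq A_1 \supseteq \cdots \supseteq A_{k-1}$.

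For the size bound, I would bound $\mathbb{E}[|\ball[v]|]$ level-by-level. Fix $v \in A_i \setminus A_{i+1}$ and order the vertices of $A_i$ by their $w$-distance from $v$ as $v = v_0, v_1, v_2, \ldots$. Each $v_j$ is independently placed in $A_{i+1}$ with probability $n^{-1/k}$, and $\ball(v)$ terminates at the first such $v_j$, which is exactly $p_{i+1}(v)$. Hence $|\ball[v]|$ is geometrically distributed with mean $O(n^{1/k})$, yielding a first-order size bound of $O(k n^{1+1/k})$ after summing over levels and vertices. The sharper bound $O\!\bigl(n^{1+\frac{1}{2^{k+1}-1}}\bigr)$ requires the refinement of Huang--Pettie in which one re-samples the hierarchy with geometrically decreasing probabilities $n^{-1/(2^{i+1}-1)}$ across levels; I would verify that with this choice the dominant level contributes $O(n \cdot n^{1/(2^{k+1}-1)})$ expected emulator edges and all other levels contribute no more.

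For the hop/stretch bound, I would set up an inductive walk along a fixed shortest path $P = (s = x_0, x_1, \ldots, x_m = t)$ in $G$. Maintain a ``current pointer'' $y$ along $P$ and a ``current level'' $i$, initially $y = s$ and $i = 0$. At each step, try to take a single emulator hop from $y$ to the farthest $x_j$ with $x_j \in \ball[y]$; if $x_j \neq p_{i+1}(y)$ this is a ``progress'' hop that strictly advances the pointer along $P$ while losing no approximation, and if $x_j = p_{i+1}(y)$ it is a ``level-up'' hop that increases $i$ by one while incurring an additive error of at most $\dist_G(y, A_{i+1})$. The crucial invariant, which is where the $(k/\eps)^k$ dependence comes from, is that between two successive level-up events at level $i$ one can afford at most $O(k/\eps)$ progress hops before one must either terminate or level up again, because otherwise the accumulated detour error would exceed an $\eps$-fraction of the traversed prefix of $P$. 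Unrolling this recursion across the $k$ levels gives $\beta = 2\bigl(\tfrac{(4+o(1))k}{\eps}\bigr)^k$.

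The main obstacle will be the careful accounting in the inductive step: (i) ensuring every iteration strictly advances along $P$ so the hop count is bounded, (ii) proving that the cumulative pivoting error telescopes to at most $\eps \cdot \dist_G(s,t)$, and (iii) handling the case where edges of $P$ are heavier than the ``current scale'' $\dist_G(y, A_{i+1})$, which requires a scale-aware subdivision of $P$ into blocks whose weights are comparable. Huang--Pettie's key trick here is to use the open/closed distinction in $\ball[v]$ to guarantee that $p_{i+1}(y)$ is itself reachable in one hop, so the walk can always level up without getting stuck. I would encode the whole argument as a potential function $\Phi(y, i)$ that decreases by a quantifiable amount on every hop, thereby certifying termination in $O((k/\eps)^k)$ hops.
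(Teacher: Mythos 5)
First, note that the paper offers no proof of this proposition: it is imported verbatim as Theorem~1 of Huang--Pettie \cite{huang2019thorup} and used as a black box in the analysis of \tzspan, so there is no in-paper argument to compare yours against. Judged on its own terms, your sketch correctly identifies the two halves of the Huang--Pettie argument (a size bound via the sampling hierarchy, and a hop bound via a level-climbing walk along a shortest path), and you are right to flag that the stated size $O\paren{n^{1+\frac{1}{2^{k+1}-1}}}$ forces geometrically decreasing sampling probabilities rather than the uniform $n^{-1/k}$ written in the \tzemu pseudocode --- that tension is genuinely present in the statement as the paper records it.

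The hop/stretch half, however, has a real gap. Your rule ``hop to the farthest $x_j\in\ball[y]$ on $P$; this strictly advances the pointer'' only guarantees one vertex of progress per hop, which bounds the hop count by the hop-length of $P$, not by $(k/\eps)^k$; strict advancement is not the mechanism that drives the bound. (Also, $p_{i+1}(y)$ is generally not a vertex of $P$, so the level-up move is a detour off the path rather than the case ``farthest $x_j$ equals $p_{i+1}(y)$''.) The missing ingredient is the scale-based dichotomy at the heart of Thorup--Zwick/Huang--Pettie: for the current vertex $y$ at level $i$ and a chosen target $z$ ahead on $P$, either $z\in\ball(y)$ and a single emulator edge reaches $z$ exactly, or $z\notin\ball(y)$, which forces $\dist_w(y,A_{i+1})\le\dist_w(y,z)$, so the level-up hop to $p_{i+1}(y)\in\ball[y]$ costs an additive detour of at most $2\dist_w(y,z)$. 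It is this dichotomy, applied to targets chosen at geometrically calibrated distance scales, that yields the recurrence $\beta_i\le O(k/\eps)\cdot\beta_{i+1}$ on hops per level while keeping the accumulated error below $\eps\cdot\dist_G(s,t)$. Your assertion that ``one can afford at most $O(k/\eps)$ progress hops before one must level up'' presupposes exactly this, and your own items (i)--(iii) are where the proof actually lives. As written, the proposal is a faithful table of contents for the Huang--Pettie proof rather than a proof of it.
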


In order to ascertain that $\tzspan$ gives an $\paren{\Ot\paren{(k/\eps)^k}, \eps}$-hopset, it therefore suffices to show that all edges in $\tzemu$ can be traversed using a polylogarithmic number of edges of $\tzspan$.

\begin{lemma}
\label{lem:tzspan-contains-tzemu}
    For any edge $(v,u)$ in the set returned by $\tzemu$, the set returned by $\tzspan$ contains a path with the same weight as $(v,u)$ and, moreover, has $O(\log^2 n)$ hop-length.
\end{lemma}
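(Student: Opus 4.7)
The plan is to analyze any emulator edge $(v,u) \in \tzemu(G)$ by splitting on which part of $\ball[v]$ the vertex $u$ lies in, and then exhibit a suitable low-hop-length $vu$-path using shortcuts already added by $\tzspan$. Suppose $v \in A_i \setminus A_{i+1}$. Since $u \in \ball[v] = \ball(v) \cup \{p_{i+1}(v)\}$, there are two cases to handle.

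In the first case, $u \in \ball(v)$, so by \Cref{def:op-cl-ball} we have $\dist_w(v,u) < \dist_w(v, p_{i+1}(v)) = \dist_w(v, A_{i+1})$, which by \Cref{def:cluster} puts $u \in \cluster(v)$. Hence $u \in V(T^*_v)$, and the algorithm invokes $\treeshortcut(T^*_v \text{ rooted at } v)$. By \Cref{obs:tree-shortcut}, this yields a path from $v$ to $u$ using at most $O(\log^2 n)$ hops in $T^*_v \cup \treeshortcut(T^*_v) \subseteq G \cup \tzspan(G)$. Moreover, every shortcut edge produced by \treeshortcut on $T^*_v$ spans a contiguous sub-path of a root-to-descendant path in $T^*_v$, which is a shortest path in $G$; therefore each shortcut edge carries weight exactly equal to the distance between its endpoints, and summing along the exhibited path gives total weight equal to $\dist_w(v,u)$.

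In the second case, $u = p_{i+1}(v)$, and I will argue that $u \in \bunch(v)$, which by the symmetry $v \in \cluster(u) \iff u \in \bunch(v)$ places $v$ in $V(T^*_u)$. We can then run the same argument as the first case on $T^*_u$, using the fact that $G$ is undirected so we can traverse the resulting path in either direction. To verify $u \in \bunch(v)$: if $\dist_w(v, A_{i+1}) < \dist_w(v, A_{i+2})$ then $u = p_{i+1}(v) \in \bunch_{i+1}(v)$ directly; otherwise the tiebreaking rule gives $p_{i+1}(v) = p_{i+2}(v) = \cdots = p_j(v)$ where $j \le k-1$ is the first index at which the strict inequality $\dist_w(v, A_j) < \dist_w(v, A_{j+1})$ holds (and such a $j$ exists because $\dist_w(v, A_k) = \infty$), placing $u$ in $\bunch_j(v) \subseteq \bunch(v)$. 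The remainder of the argument is identical to the first case.

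The main technical point to be careful about is not the hop-length (which is immediate from \Cref{obs:tree-shortcut}) but rather that the exhibited path has weight \emph{exactly} $\dist_w(v,u)$ rather than merely an upper bound. This is resolved by the observation above that \treeshortcut's shortcut edges on a shortest path tree always carry weights equal to shortest path distances in $G$, so concatenating them along the tree path from $v$ to $u$ preserves the weight of that tree path.
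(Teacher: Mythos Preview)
Your Case~1 contains a genuine error: from $u \in \ball(v)$ you correctly derive $\dist_w(v,u) < \dist_w(v,A_{i+1})$, but this does \emph{not} put $u \in \cluster(v)$. Reread \Cref{def:cluster}: for $v \in A_i \setminus A_{i+1}$, membership of $u$ in $\cluster(v)$ requires $\dist_w(u,v) < \dist_w(u,A_{i+1})$, i.e.\ the comparison is between the candidate element's distance to the center and the \emph{candidate element's} distance to the next level, not the center's distance to the next level. (The variable names in \Cref{def:cluster} are swapped relative to yours, which may be the source of the confusion.) A concrete counterexample: take a path $v\!-\!u\!-\!b$ with unit weights and an extra edge $v\!-\!a$ of weight~$3$, with $A_{i+1}=\{a,b\}$ and $v,u \in A_i\setminus A_{i+1}$. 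Then $\dist_w(v,u)=1 < 2 = \dist_w(v,A_{i+1})$, so $u \in \ball(v)$, but $\dist_w(u,A_{i+1})=1$, so $u \notin \cluster(v)$. Hence $u$ need not lie in $T^*_v$ at all, and your appeal to $\treeshortcut(T^*_v)$ fails.

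The fix is easy and is exactly what the paper does: the inequality you derived, together with $u \in A_i$, is precisely the statement $u \in \bunch_i(v) \subseteq \bunch(v)$ (compare \Cref{def:bunch}; indeed $\ball(v)=\bunch_i(v)$). By the cluster/bunch duality this gives $v \in \cluster(u)$, so $v \in V(T^*_u)$, and you should use $\treeshortcut(T^*_u)$ --- the same tree you correctly use in Case~2. Your Case~2 argument (spelling out why $p_{i+1}(v) \in \bunch(v)$ via the tiebreaking rule) is fine and amounts to reproving what the paper cites as Lemma~4.1 of \cite{thorup2005approximate}.
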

\begin{proof}
    Fix any edge $(v,u)$ in the set returned by $\tzemu$, where $v \in A_i \setminus A_{i+1}$ and $u \in \ball[v]$.
    We split into two cases.

    Case 1: $u \in \ball(v)$.
    By comparing definitions, note that $\ball(v) = \bunch_i(v) \subseteq \bunch(v)$.
    Thus $u \in \bunch(v)$ or, equivalently, $v \in \cluster(u)$.

    Case 2: $u = p_{i+1}(v)$.
    By Lemma~4.1 of \cite{thorup2005approximate}, we have $u \in \bunch(v)$ and so $v \in \cluster(u)$.
    
    In either case, by \Cref{obs:clus-trunc} and \Cref{obs:tree-shortcut} used on $T^*_u$ (for the definition see the main loop of $\tzspan$), $H$ contains a shortest path from $u$ to $v$ (and thus $v$ to $u$ since the graph is undirected) using $O(\log^2 n)$ edges.
\end{proof}

And thus we arrive at the main result of this section.

\apxhopsetub*
\begin{proof}
    We show the theorem for \tzspan.
    This follows immediately from combining \Cref{lem:tzspan-contains-tzemu} with \Cref{prop:huang-pettie} to show that $H$ is a $\paren{O\paren{(k/\eps)^k \log^2 n}, \eps}$-hopset, and \Cref{lem:apx-sens} to finish the claim on sensitivity.
\end{proof}

\section{Lower Bounds: Sensitivity-Diameter Tradeoffs}
\label{sec:lb}
In this section we show unconditional lower bounds for how low both the sensitivity and diameter of a construction can simultaneously be.
Namely, we show \Cref{cor:exact-lb,cor:reachability-lb} in \Cref{subsec:lower-hop-shortcut} and \Cref{thm:apx-hopset-lb} in \Cref{subsec:lower-apx-hopset}.

\subsection{Tradeoffs via Perfect Paths}
\label{subsec:lower-hop-shortcut}
Our lower bounds for exact hopsets and shortcut sets go through layered graphs endowed with a set of so-called perfect paths.

\begin{definition}[Perfect Paths, Definition~8 in \cite{bodwin2021new}]
\label{def:perfect-path}
    Let $G=(V_1 \sqcup V_2 \sqcup \ldots \sqcup V_\ell, E)$ be a layered graph.
    A set of paths $\Pi$ is \emph{perfect} if each $\pi \in \Pi$ is the unique shortest path between its endpoints, each $\pi$ starts in $V_1$ and ends in $V_\ell$ with exactly one node in each layer, and each $e \in E$ is in exactly one $\pi \in \Pi$.
\end{definition}
Layered graphs with perfect paths have been used to prove lower bounds for traditional shortcut/hopsets \cite{hesse2003directed,coppersmith2006sparse,huang2021lower,kogan2022having}, and here we repurpose these constructions, with some changes, for our lower bounds.
The ``engine'' of this section is \Cref{thm:perf-path-lb} below, for which we first give some context before proving.

\begin{restatable}{theorem}{perfpathlb}
\label{thm:perf-path-lb}
    If there exists an $\ell$-layered graph $G$ with $n$ nodes in each layer, and a set of perfect paths $\Pi$, then there exists a $2\ell$-layered graph $G'$ with $n$ nodes in each layer such that any $\ell$-shortcut set or $(\ell, 0)$-hopset $H$ on $G'$ must have $\inorm{\esens} \ge \frac{\card{\Pi}}{2n}$. 
\end{restatable}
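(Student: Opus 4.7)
The plan is to construct $G'$ by replacing each vertex $v \in V(G)$ with an edge $(v^{\text{in}}, v^{\text{out}})$ and rewiring each original edge $(u,v) \in E(G)$ as $(u^{\text{out}}, v^{\text{in}})$. This yields a $2\ell$-layered graph with $n$ vertices per layer, and each perfect path $\pi = v_1 v_2 \cdots v_\ell \in \Pi$ lifts to a path $\pi'$ on $2\ell - 1$ edges of $G'$ alternating between ``vertex-edges'' (added by subdivision) and ``original-edges'' (inherited from $G$). Each lifted $\pi'$ remains the unique shortest (or uniquely routed) path between its endpoints in $G'$, inherited from $\pi$ being uniquely shortest in $G$; the same holds for every subpath.

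The first step I would carry out is a structural claim: any $h \in H$ with $\card{E(\edgespan(h))} \geq 2$ has its span contained in exactly one lifted perfect path. The reason is that such a span must contain at least one original-edge (since two vertex-edges in $G'$ are never adjacent), and each original-edge, being an edge of $G$, lies on exactly one $\pi \in \Pi$ by perfectness. This single-$\pi'$ containment ensures no double-counting when summing sensitivity contributions across $\Pi$.

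The second step is the core counting. For any $(\ell, 0)$-hopset or $\ell$-shortcut set $H$, each lifted $\pi'$ must admit a compression into at most $\ell$ hops. If this compression uses $k$ edges from $H$ and $j$ original hops with $k + j \leq \ell$, then the $H$-edges span a total of $(2\ell - 1) - j \geq \ell - 1 + k \geq \ell$ edges of $\pi'$. Summing over $\pi \in \Pi$ and invoking the single-$\pi'$ containment from step 1 to avoid double-counting yields
\[
\onenorm{\esens} \;=\; \sum_{h \in H}\card{E(\edgespan(h))} \;\geq\; \ell \cdot \card{\Pi}.
\]
The third step concludes via pigeonhole: the graph $G'$ has $\ell n$ vertex-edges plus $(\ell - 1)\card{\Pi}$ original-edges, which gives $\card{E(G')} \leq 2n\ell$ in the parameter regime used to derive \Cref{cor:exact-lb,cor:reachability-lb}, and hence $\inorm{\esens} \geq \ell\card{\Pi}/(2n\ell) = \card{\Pi}/(2n)$.

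The main obstacle will be handling the shortcut set case, where the routing span of each $h \in H$ is chosen by the algorithm rather than being pinned down by shortest-path uniqueness. I would address this either by endowing $G'$ with edge weights that preserve uniqueness of shortest paths along subpaths of the lifted $\pi'$s (forcing any shortcut-edge span between two vertices of $\pi'$ to coincide with the corresponding subpath of $\pi'$), or by arguing directly from the specific layered construction that the lifted $\pi'$s are the only paths between their endpoints in $G'$, collapsing the algorithm's routing-path choice to a subpath of $\pi'$ regardless. A secondary delicacy is the edge-count step: when $\card{\Pi}$ grows past $n\ell/(\ell-1)$ the straightforward pigeonhole over all of $E(G')$ weakens, but restricting the pigeonhole to the $\ell n$ vertex-edges and using the sharper observation that each hopset edge with span $l$ covers at least $\lfloor l/2 \rfloor$ vertex-edges recovers the same $\card{\Pi}/(2n)$ bound up to constants.
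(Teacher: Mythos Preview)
Your approach is essentially the same as the paper's: the same $G'$ construction, the same observation that any hopset edge used to compress a lifted perfect path has its span confined to that single path, and the same averaging over edges. The paper goes directly to your ``secondary delicacy'' fix---it averages only over the $n\ell$ vertex-edges $T$ from the start (via a potential-function argument showing that twice the total $T$-sensitivity dominates the drop in aggregate hop-length, rather than your direct per-path span count)---and it handles the shortcut-set case by your second proposal: in a directed layered graph every $st$-path has the same length, so any shortcut set is automatically an exact hopset and the two cases collapse.
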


\paragraph{Simplifying Assumptions. }
First, observe that if $G$ is unweighted and we are concerned with directed $\ell$-shortcut sets, then for any $s$ that reaches $t$, the length of any $st$-path is completely determined by the layers they belong to; any shortcut set on $G$ is thus also a hopset on $G$ (where all edges of $G$ are given weight $1$).
It is therefore sufficient to show the theorem for $(\ell, 0)$-hopsets.

\paragraph{Definition of the graph $G'$.}
We can show a lower bound for vertex sensitivity using $G$ from the hypothesis of \Cref{thm:perf-path-lb} directly (in fact, the proof is simpler).
However, to get a lower bound for \emph{edge} sensitivity, a little more care needs to be taken: we add a $0$ weight edge in the ``middle'' of each vertex of $G$, which gives us $G'$.
We now describe $G'$ more formally.

$G'$ is a standard transformation of $G$ comprising of (i) vertices $v^{in}, v^{out}$ for all $v \in V(G)$; (ii) $0$ weight edges $(v^{in},v^{out})$ for all $v \in V(G)$; (iii) edges $(u^{out}, v^{in})$ for all $(u,v) \in E(G)$, with the same weight.
The proof will only look at the set of paths $\Pi'$ in $G'$ which are ``lifted'' from $\Pi$ (dually, which project down back to $\Pi$), formally defined as follows:
\[\Pi' = \set{(v^{in}_1, v^{out}_1, v^{in}_2, v^{out}_2, \ldots, v^{in}_\ell, v^{out}_\ell) : (v_1, v_2, \ldots, v_\ell) \in \Pi}.\]
Observe that every path $\pi' \in \Pi'$ is the unique shortest path between its endpoints (which starts in the first layer and ends in layer $2\ell$), a property inherited from $\Pi$ being a set of perfect paths.
From this, notice that there is no choice in selecting the routing path from $s$ to $t$ for any $s,t \in \pi'$ for all $\pi' \in \Pi'$; any such routing path must be the subpath of $\pi'$ from $s$ to $t$ since $\pi'$ is the unique shortest path between its endpoints, and so the span of hopset edges $(s,t)$ are fixed.
Moreover, each edge of the form $(u_{out}, v_{in})$ is in exactly one $\pi' \in \Pi'$ since $\Pi$ is perfect.
Finally, note that $\card{\Pi'} = \card{\Pi}$.
We are now ready to state the proof.

\begin{proof}[Proof of \Cref{thm:perf-path-lb}]
    Let $G'$ and $\Pi'$ be defined as above, and let $H$ be any $(\ell,0)$-hopset for $G'$.
    Let $H' \subseteq H$ be a minimal set of edges which ensures that there are $\ell$ hop-length shortest paths in $G' \cup H'$ between the endpoints of $\pi'$ for all $\pi' \in \Pi'$.
    Observe that $\inorm{\esens_{G',H}} \ge \inorm{\esens_{G',H'}}$ since $H' \subseteq H$ and so it suffices to lower bound the latter which we now write without subscripts.

    Let $T$ be the set of all edges of the form $(v_{in}, v_{out})$ and $\mathds{1}_{T}$ be its indicator vector; note for later that $\card{T} = \onenorm{\mathds{1}_{T}} = n \ell$.
    We will lower bound the total sensitivity of edges in $T$ and use an averaging argument to get a bound on $\inorm{\esens}$.
    More specifically, we will show \[\inorm{\esens} \ge \mathds{1}_{T} \cdot \esens / \onenorm{\mathds{1}_{T}} \ge x/2\] where $\inorm{\esens} \ge \esens \cdot \mathds{1}_{T} / \onenorm{\mathds{1}_{T}}$ comes from a simple averaging argument\footnote{To be overly indulgent, from an application of H{\"o}lder's inequality.}.
    To this end, consider the process where we add the edges of $H' = \set{e_1, e_2, \ldots, e_{\card{H'}}}$ to $G'$ one by one so that $H'_0 = \set{}$ and $H'_i = H'_{i-1} \cup \set{e_i}$, and define the potential \[\Phi_i = \sum_{\pi' \in \Pi'} \card{E(P_{G' \cup H'_i}(\pi'))}\] where $P_{G' \cup H'_i}(\pi')$ denotes a shortest path in $G' \cup H'_i$ with the smallest hop-length between the endpoints of $\pi'$.
    $\Phi_i$ is then the sum of said hop-lengths running over $\pi' \in \Pi'$ at stage $i$.
    
    Observe that the initial value $\Phi_0$ is $2\card{\Pi}\ell$ (since there are $\card{\Pi} = \card{\Pi'}$ paths in $\Pi'$, each with hop-length exactly $2 \ell$), and the final value $\Phi_{\card{H'}}$ is at most $\card{\Pi}\ell$ (since $H'$ is an $(\ell,0)$-hopset).
    Moreover, at most one summand of $\Phi_i$ differs from the corresponding summand of $\Phi_{i-1}$ for the following reason: $\edgespan(e_i)$ is contained entirely and only in one $\pi' \in \Pi'$ by the minimality of $H'$ and uniqueness of $\pi'$.
    For the same reason, $\edgespan(e_i)$ alternates between edges not in $T$ and edges in $T$ and so twice the contribution of $e_i$ to $\mathds{1}_{T} \cdot \esens$ is at least the decrease in potential $\Phi_{i-1} - \Phi_i$.
    That is, \[\Phi_i + 2\mathds{1}_{T} \cdot \esens_{G',H'_i} > \Phi_{i-1} + 2\mathds{1}_{T} \cdot \esens_{G',H'_{i-1}}.\]
    Using our observation about the initial and final values of $\Phi$, we conclude that
    \begin{align*}
        &
        \card{\Pi}\ell + 2\mathds{1}_{T} \cdot \esens \ge \Phi_{\card{H'}} + 2\mathds{1}_{T} \cdot \esens > \Phi_0 + 2\mathds{1}_{T} \cdot \esens_{G',H'_0} = 2\card{\Pi}\ell
        \\
        \implies
        &
        \mathds{1}_{T} \cdot \esens \ge \frac{\card{\Pi}\ell}{2}
        \\
        \implies
        &
        \frac{\mathds{1}_{T} \cdot \esens}{\onenorm{\mathds{1}_T}} \ge \frac{\card{\Pi}}{2n}. \tag{$\onenorm{\mathds{1}_T} = n\ell$}
    \end{align*}
    We conclude from the last line that $\inorm{\esens} \ge \frac{\card{\Pi}}{2n}$.
\end{proof}

Finally, we can invoke \Cref{thm:perf-path-lb} with known constructions to get unconditional lower bounds on the sensitivity-diameter tradeoff.

\begin{proposition}[Theorem~5 of \cite{bodwin2021new}, earlier versions of which appear in 
\cite{coppersmith2006sparse,szemeredi1983extremal}]
\label{prop:bod-weighted-perf-paths}
    For any integers $n, \ell \le n$, and $x \le n/\ell$, there is an $\ell$-layered weighted graph $G$ with $n$ nodes in each layer and a set of perfect paths $\Pi$ such that each node is in exactly $x$ paths in $\Pi$.
\end{proposition}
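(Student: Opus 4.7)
The plan is to construct $G$ via a classical convex-set layered-graph construction, in the style of Coppersmith--Elkin and refined by Bodwin--Williams. The idea is to identify each layer with a cyclic group of size $n$, fix a set $S$ of $x$ integer shifts, and define edges between consecutive layers by translation by each shift, with edge weights strictly convex in the shift. The set $\Pi$ of perfect paths will then be the constant-shift paths, one for each pair $(v_0, s) \in V_1 \times S$.

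Concretely, I would take $V_i = \mathbb{Z}/n\mathbb{Z}$ and $S = \{1, 2, \dots, x\}$. For every $v$ in layer $i$ and every $s \in S$, place an edge from $v$ to $v + s \pmod{n}$ in layer $i+1$ with weight $f(s) = s^2$. Let $\Pi$ consist, for each $v_0 \in V_1$ and each $s \in S$, of the path $v_0,\ v_0 + s,\ v_0 + 2s,\ \dots,\ v_0 + (\ell-1)s$ (arithmetic mod $n$). This gives $|\Pi| = nx$ candidate perfect paths of hop-length $\ell$, with every node lying in exactly $x$ of them (one per shift) by translation-invariance, and with each edge lying in exactly one of them (namely, the constant-$s$ path through it). What remains is to verify that each $\pi \in \Pi$ is the unique shortest path between its endpoints: given $\pi$ with constant shift $s$, any alternate path uses some multiset of shifts $s_1, \dots, s_{\ell-1} \in S$ with $\sum_i s_i \equiv (\ell-1)s \pmod{n}$, and the hypothesis $x \le n/\ell$ forces $\sum_i s_i \le (\ell-1)x < n$, so the only achievable integer sum is $(\ell-1)s$. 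Strict convexity of $f$ together with Jensen's inequality then gives $\sum_i f(s_i) \ge (\ell-1)f(s)$ with equality iff all $s_i = s$.

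The main obstacle I would expect is in reconciling the ``exactly $x$ paths per node'' condition with the uniqueness of shortest paths. With a non-cyclic $V_i = \{0, 1, \dots, n-1\}$, boundary nodes would lie on strictly fewer paths (since constant-shift paths could spill off the end), violating the exact-$x$ count; the cyclic identification cures this by translation-invariance, but at the risk of introducing extra competing paths via wraparound. The hypothesis $x \le n/\ell$ is precisely what suppresses wraparound: it keeps all candidate shift-multiset sums inside a single integer residue class modulo $n$, so the convexity argument closes without interference. If the bound $x \le n/\ell$ were relaxed, one would have to upgrade to a higher-dimensional shift set (for example, $x$ vectors in convex position in $\mathbb{Z}^2$) to separate competing shift multisets by total weight; the current bound is matched to what the simplest one-dimensional construction allows.
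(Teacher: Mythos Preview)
The paper does not give its own proof of this proposition: it is quoted as Theorem~5 of \cite{bodwin2021new} and used as a black box in the proof of \Cref{cor:exact-lb}. So there is no ``paper's proof'' to compare against; you have supplied a full construction where the paper only cites one.

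Your construction is correct and is essentially the classical Coppersmith--Elkin / Bodwin convex-weight layered graph. The verification is sound: translation-invariance over $\mathbb{Z}/n\mathbb{Z}$ gives exactly $x$ paths through every vertex; each edge $(u,u+s)$ lies in the unique constant-$s$ path through it; and for uniqueness of shortest paths, the bound $x \le n/\ell$ indeed forces any competing shift multiset to satisfy $\sum_i s_i = (\ell-1)s$ over the integers (since both sides lie in $[\ell-1,(\ell-1)x] \subseteq [1,n-1]$), after which strict convexity of $f(s)=s^2$ finishes via Jensen with the correct equality case. One small point worth making explicit is that distinct pairs $(v_0,s)$ yield distinct endpoint pairs---but this is already implied by your uniqueness argument, since any two paths sharing endpoints would each have to be the unique shortest path between them. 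Your closing remark about why the one-dimensional shift set suffices exactly when $x \le n/\ell$, and how higher-dimensional convex position would be needed beyond that, accurately reflects how the cited constructions scale.
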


\exactlb*
\begin{proof}
    For large enough $N$, we invoke \Cref{prop:bod-weighted-perf-paths}, setting
    \begin{itemize}
        \item $n \gets \frac{N}{2\beta}$
        \item $\ell \gets 2\beta$
        \item $x \gets \frac{N}{4\beta^2}$;
    \end{itemize}
    call this graph $G_N$.
    Note that $\card{\Pi}/(2n) = x/2$.
    We then pass each $G_N$ into \Cref{thm:perf-path-lb} to get
    \[\inorm{\esens} \ge \frac{x}{2} = \frac{N}{8\beta^2} \implies \inorm{\esens} \cdot \beta^2 \ge \frac{N}{8},\]
    which completes the proof.
\end{proof}

Observe that \Cref{cor:exact-lb} shows that \Cref{thm:greedy-good} is tight up to polylogarithmic factors, but it remains open whether we can explicitly trade between the sensitivity and diameter.
For example, can we find a $(\Ot(n^{1/2-c}),0)$-hopset with $\inorm{\vsens} = \Ot(n^{2c})$ for any $c$?
We close the discussion on exact hopsets with a remark on a connection between the problem of packing perfect paths in layered graphs and a possible threshold effect in the upper bounds for sensitivity-diameter tradeoffs.

\begin{remark}\label{rem:perfect}
    Observe that \Cref{thm:perf-path-lb} connects the existence of layered graphs with many perfect paths to sensitivity-diameter tradeoffs, and this concretely manifests as a $\inorm{\esens}\beta^2 = \Omega(N)$ bound by using the construction of \Cref{prop:bod-weighted-perf-paths}.
    What are the ramifications of being able to pack even more perfect paths into an $\ell$-layered graph?
    Let us explore one particular hypothetical scenario.
    Suppose it is possible to pack $N$ perfect paths into $\sqrt{N}$-layered graphs.
    Then, we are able to show (in exactly the same way as \Cref{cor:exact-lb}, but with different parameters) an $\inorm{\esens}\beta = \Omega(N)$ bound for $\beta \le \sqrt{N}/2$.
    It is conceivable that such a construction may exist\footnote{And such a construction does not contradict \Cref{thm:greedy-good} since, there, $\beta \ge \sqrt{N}\log n > \sqrt{N}/2$.}.
    If such a packing does indeed exist for infinitely many values of $N$, this would rule out the possibility of achieving a smooth sensitivity-diameter tradeoff on the upper bound side.
    By \Cref{thm:greedy-good}, \greedyshortcut produces a $(\Ot(\sqrt{n}),0)$-hopset with $O(\log n)$ sensitivity, and any polynomial improvement to $\beta$ would raise the sensitivity very abruptly from $O(\log n)$ to $\Omega(\sqrt{n})$.
    That is to say, a $(\Ot(n^{1/2-c}),0)$-hopset construction could hope for no better than $\inorm{\vsens} = \widetilde{\Omega}(n^{1/2+c})$ if we can pack $N$ perfect paths into $\sqrt{N}$-layered graphs.
    On the flip side, this means that achieving a tradeoff like a $(\Ot(n^{1/2-c}),0)$-hopset construction with $\inorm{\vsens}$ less than $\Ot(n^{1/2+c})$ by a polynomial factor, for some constant $c > 0$, would rule out the existence of $\sqrt{N}$-layered graphs packed with $N$ perfect paths.
    A similar line of reasoning also shows that any polynomial improvement to $\beta$ (keeping $\inorm{\vsens}$ = $\Ot(\sqrt{n})$) over \folkloreshortcut will rule out the existence of $\sqrt{N}$-layered graphs packed with $N$ perfect paths.
    The existence of layered graphs packed with as many perfect paths as possible is an open problem related to many lower bounds in the hopset/distance preservers/etc literature \cite{bodwin2023folklore,coppersmith2006sparse}; particularly, we would get more streamlined lower bounds for hopsets matching the result of \cite{bodwin2023folklore}.
\end{remark}

We next show a lower bound for reachability in directed graphs.

\begin{proposition}[Theorem~6 of \cite{bodwin2021new}, earlier versions of which appear in \cite{abboud20174,alon2002testing,behrend1946sets,coppersmith2006sparse,huang2021lower}]
\label{prop:bod-unweighted-perf-paths}
    For any integers $n, d \ge 2, \ell \le n^{1/d}$, and
    \[
        x = O\Paren{n^{\frac{d-1}{d+1}}\ell^{-d\frac{d-1}{d+1}}},
    \]
    there is an $\ell$-layered unweighted graph $G$ with $n$ nodes in each layer and a set of perfect paths $\Pi$ such that each node is in exactly $x$ paths in $\Pi$.
\end{proposition}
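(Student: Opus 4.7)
My plan is to construct the graph explicitly using a Behrend--Andrews-style scheme based on lattice points of strictly convex bodies. Identify the vertex set of each layer with $[N]^d \subset \mathbb{Z}^d$ where $N = n^{1/d}$, and let $M = \Theta(N/\ell)$ be a parameter to be optimized. I choose a set $B \subseteq [M]^d$ of integer lattice points on the boundary of a strictly convex body $K \subset \mathbb{R}^d$, for instance a sphere whose radius has many sum-of-squares representations. By Andrews's theorem on the number of lattice points on the boundary of a convex body, one can arrange $|B| = \Theta\paren{M^{d(d-1)/(d+1)}}$.

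Next, I define the perfect paths: for each $s \in B$ and each starting point $a \in [N]^d$ such that $a + (i-1)s \in [N]^d$ for all $i \in [\ell]$, add the path $\pi_{a,s}$ whose vertex in layer $i$ is $a + (i-1)s$. The edge set of $G$ is exactly the union of the edges of these paths. Edge-disjointness follows immediately, since an edge of the form $((i,y),(i+1,y+s))$ determines both $s \in B$ and the starting point $a = y - (i-1)s$, hence lies in a unique $\pi_{a,s}$.

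The crux is the unique-shortest-path property. Since $G$ is unweighted and layered, the hop distance from $(i,y_1)$ to $(j,y_2)$ is exactly $j-i$, and any $(j-i)$-hop path in $G$ corresponds to a sequence $s_1, \ldots, s_{j-i} \in B$ with $\sum_k s_k = y_2 - y_1$. For two vertices on $\pi_{a,s}$ we have $y_2 - y_1 = (j-i)s$, so the mean $(s_1 + \cdots + s_{j-i})/(j-i)$ equals $s$ and lies on $\partial K$. Because $K$ is strictly convex, a proper convex combination of boundary points lies on $\partial K$ only if all those points coincide; hence $s_k = s$ for every $k$, and $\pi_{a,s}$ is the unique shortest path between its endpoints. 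This argument simultaneously handles APs of every length $\le \ell$, which is why strict convexity of $K$ is the right axiom.

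Finally, I count. Each ``interior'' vertex $(i,y)$, meaning one for which all $|B|$ choices of $s$ yield a path through it entirely inside the box, lies on exactly $|B|$ paths. Setting $M = \Theta(N/\ell)$ keeps paths inside the box with only a constant-factor boundary loss, absorbed by restricting to a sub-box of starting points (or by a torus quotient where wraparound does not occur within a single path). This yields $x = \Theta\paren{(N/\ell)^{d(d-1)/(d+1)}} = \Theta\paren{n^{(d-1)/(d+1)} \ell^{-d(d-1)/(d+1)}}$, and the hypothesis $\ell \le n^{1/d}$ ensures $M \ge 1$ so that $B$ is nonempty. The main obstacles I anticipate are (i) realizing Andrews's upper bound as a matching \emph{lower} bound on the number of integer boundary points, which typically requires a number-theoretic choice of radius or an explicit algebraic construction in the spirit of Jarn\'{\i}k's theorem, and (ii) trimming or padding near the box boundary so that every node lies in \emph{exactly} $x$ paths rather than merely $\Theta(x)$, which is standard but somewhat fiddly bookkeeping.
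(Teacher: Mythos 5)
The paper itself offers no proof of this proposition; it is imported verbatim as Theorem~6 of \cite{bodwin2021new}, so the only meaningful comparison is with that source's construction --- and your proposal is essentially that construction: arithmetic-progression paths whose common differences range over a set $B$ of lattice points in convex position, with edge-disjointness from the fact that an edge determines its difference vector, and unique shortest paths from the layered structure plus extremality of the average. One instantiation detail is genuinely wrong, though you partly flag it yourself: taking $B$ to be the lattice points on a sphere cannot give the claimed count in \emph{any} dimension. A sphere of radius $r$ in $\mathbb{Z}^d$ carries at most $r^{\,d-2+o(1)}$ lattice points (and only $r^{o(1)}$ for $d=2$), whereas you need $\Theta\paren{M^{d(d-1)/(d+1)}}$, and $d(d-1)/(d+1)-(d-2)=2/(d+1)>0$ for every $d\ge 2$. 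The standard fix is to take $B$ to be the vertex set of the convex hull of $\mathbb{Z}^d\cap B(0,M)$; the B\'ar\'any--Larman theorem supplies the matching lower bound $\card{B}=\Theta\paren{M^{d(d-1)/(d+1)}}$, and your strict-convexity argument goes through unchanged because a vertex of a polytope cannot be a proper convex combination of other points of that polytope. Finally, the ``exactly $x$'' normalization you defer really is inessential for this paper: the only invocation, in \Cref{cor:reachability-lb} via \Cref{thm:perf-path-lb}, uses the quantity $\card{\Pi}/(2n)$, so a per-node count of $\Theta(x)$ (equivalently $\card{\Pi}=\Theta(nx)$) would already suffice.
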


\reachabilitylb*
\begin{proof}
    For large enough $N$, we invoke \Cref{prop:bod-unweighted-perf-paths}, setting
    \begin{itemize}
        \item $n \gets \frac{N}{2\beta}$
        \item $d \gets 2$
        \item $\ell \gets 2\beta$
        \item $x \gets \Theta(\frac{N^{1/3}}{\beta})$;
    \end{itemize}
    call this graph $G_N$.
    Note that $\card{\Pi}/(2n) = x/2$.
    We then pass $G_N$ into \Cref{thm:perf-path-lb} to get
    \[\inorm{\esens} \ge \frac{x}{2} = \Theta(\frac{N^{1/3}}{\beta}) \implies \inorm{\esens} \cdot \beta \ge \Omega(N^{1/3}),\]
    which completes the proof.
\end{proof}

Curiously, none of our upper bounds and lower bounds for directed shortcut/hopsets match each other.
New ideas are probably needed to better understand sensitivity-diameter tradeoffs (for both shortcut sets and hopsets) in the directed setting.

\subsection{Tradeoffs for Approximate Hopsets}
\label{subsec:lower-apx-hopset}
In this section we give a tradeoff lower bound between $\eps, \beta$, and $\inorm{\esens}$ for approximate hopsets.
While the sensitivity of a hopset is not equal to its size, we can convert some claims pertaining to the size of a hopset to a slightly weaker claim about the sensitivity of a hopset.
\cite{abboud2018hierarchy} gives a tradeoff between $\eps, \beta$, and the size of a hopset $\card{H}$, which we use directly to get the aforementioned type of sensitivity tradeoff.

\begin{proposition}[Theorem~4.6 in \cite{abboud2018hierarchy}]
\label{prop:abboud-bodwin-pettie}
    Fix a positive integer $k$ and parameter $\eps > 1/n^{o(1)}$.
    Any construction of $(\beta, \eps)$-hopsets with size less than $n^{1 + \frac{1}{2^k - 1} - \Delta}$, $\Delta > 0$, has $\beta = \Omega\paren{\paren{\frac{1}{2^{k-2}(2k-1)\eps}}^k}$.
\end{proposition}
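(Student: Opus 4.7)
The plan is to establish this size-versus-diameter lower bound by constructing an extremal family of layered graphs and running a counting argument, following the Erd\H{o}s-girth style template that underlies the Abboud--Bodwin--Pettie proof. Throughout, $n$ is taken sufficiently large and $\eps > 1/n^{o(1)}$.

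First I would build, for each $k$, a recursively defined layered graph $G_k$ on $n$ vertices. The base object $G_1$ would be an Alon-type construction: a graph with girth $\ge 2k+2$ and $\Theta(n^{1+1/(2^k-1)})$ edges, so that any two adjacent vertices are joined by a \emph{unique} short path (the edge itself) and all alternative routes are long. The inductive step would compose copies of $G_{k-1}$ in series through a carefully chosen layered ``blow-up'' that multiplies path lengths by a factor $\Theta(1/\eps)$ per level. The resulting $G_k$ has $n$ vertices, shortest path length $L = \Theta\!\paren{(k/\eps)^k}$ between many pairs, and a family $\mathcal{P}$ of \emph{critical pairs} of size $|\mathcal{P}| = n^{1+1/(2^k-1)-\Delta}$, each with a unique shortest path $\pi_{s,t}$ of length exactly $L$. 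The length $L$ is calibrated so that the approximation slack $\eps L$ matches precisely the $\beta$-threshold in the statement.

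Next I would prove two lemmas. A \emph{charging lemma}: for every $(\beta,\eps)$-hopset $H$ with $\beta$ below the stated threshold and every $(s,t) \in \mathcal{P}$, the hopset must contain at least one edge $e(s,t) \in H$ whose span lies along $\pi_{s,t}$. If no such edge existed, the only paths of hop-length $\le \beta$ in $G_k \cup H$ between $s$ and $t$ would combine $\beta$ jumps of average length $\gg L/\beta$, and for $\beta \ll \paren{1/(2^{k-2}(2k-1)\eps)}^k$ a routine calculation shows the stretch would exceed $(1+\eps)L$, contradicting the hopset guarantee. A \emph{disjointness lemma}: any single edge $e \in H$ can play the role of $e(s,t)$ for only $O(1)$ distinct pairs $(s,t) \in \mathcal{P}$. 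This is where the girth of the base graph $G_1$ is used: if $e$'s span lay along two distinct critical paths $\pi_{s,t}$ and $\pi_{s',t'}$ sharing too long a common subpath, lifting the overlap back through the recursive blow-up would produce a closed walk in $G_1$ strictly shorter than $2k+2$, contradicting the girth. Combining the two lemmas gives $|H| \ge \Omega(|\mathcal{P}|) = \Omega\!\paren{n^{1+1/(2^k-1)-\Delta}}$, contradicting the size hypothesis; therefore $\beta = \Omega\!\paren{\paren{1/(2^{k-2}(2k-1)\eps)}^k}$.

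The main obstacle is the disjointness lemma together with the precise calibration of constants. Tracking how the approximation budget $\eps L$ distributes across the $k$ recursion levels, and showing that a hopset edge of \emph{any} length can serve only $O(1)$ critical pairs without producing a forbidden short cycle in $G_1$, requires an intricate induction on $k$. The factor $2k-1$ in the $\beta$ threshold traces back to the girth parameter of the base graph, while the $2^{k-2}$ factor accumulates because the allowable ``per-level deviation'' must be halved each time one descends into the recursion in order to keep the total stretch within $(1+\eps)L$. Getting these constants tight, and ensuring the critical-pair count is preserved through all $k$ levels of blow-up, is the technically most delicate part of the proof.
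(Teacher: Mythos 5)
The paper offers no proof of this proposition at all: it is imported verbatim as Theorem~4.6 of \cite{abboud2018hierarchy} and used strictly as a black box (the paper's own contribution in \Cref{subsec:lower-apx-hopset} is only the conversion from the \emph{size} bound of that theorem to a \emph{sensitivity} bound, via the $v^{in}/v^{out}$ edge-splitting and an averaging argument over the edges $(v_{in},v_{out})$). So there is no in-paper argument to compare yours against; what you have written is an attempted reconstruction of the external Abboud--Bodwin--Pettie proof.

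Judged on its own terms, your reconstruction has the right overall shape (a recursive layered construction with a calibrated path length $L$, a set of critical pairs with unique shortest paths, a charging lemma forcing a hopset edge onto each critical path, and a bound on how many pairs one hopset edge can serve), but it is a plan rather than a proof: both load-bearing lemmas are asserted with the hard work explicitly deferred, so it cannot be accepted as establishing the statement. One concrete concern beyond incompleteness: in the actual construction the bounded reuse of a hopset edge does not come from a girth argument on an Alon-type base graph but from the critical paths being pairwise edge-disjoint unique shortest paths (exactly the ``perfect path'' structure this paper formalizes in \Cref{def:perfect-path}); an edge whose span lies on one such path cannot lie on another, which gives reuse $1$, not merely $O(1)$ via forbidden short cycles. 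Your girth-based disjointness lemma would need to be substantiated from scratch and is not obviously equivalent, and the claimed provenance of the constants $2k-1$ and $2^{k-2}$ from the girth parameter is speculative. If you want to include a proof here, either reproduce the argument of \cite{abboud2018hierarchy} faithfully or simply cite it as the paper does.
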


Using this, we get the following observation as a warmup.

\begin{observation}
\label{obs:apx-hopset-vert-lb}
    Fix a positive integer $k$ and parameter $\eps > 1/n^{o(1)}$.
    Any construction of $(\beta, \eps)$-hopsets $H$ with $\beta = O\paren{\paren{\frac{1}{2^{k-2}(2k-1)\eps}}^k}$ has $\inorm{\vsens} \ge n^{\frac{1}{2^k - 1} - \Delta}$, $\Delta > 0$.
\end{observation}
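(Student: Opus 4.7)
The plan is to reduce \Cref{obs:apx-hopset-vert-lb} to the size lower bound of \Cref{prop:abboud-bodwin-pettie} via a simple double-counting / averaging argument, converting the known lower bound on the number of hopset edges into a lower bound on per-vertex sensitivity. This is the hopset analogue of the standard trick that $\ell_\infty$ bounds follow from $\ell_1$ bounds after dividing by the dimension.

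First, I would invoke the contrapositive of \Cref{prop:abboud-bodwin-pettie}. Fix any $\Delta > 0$; applying the proposition with $\Delta/2$ in place of $\Delta$, on the hard instance $G$ guaranteed by the proposition, any $(\beta, \eps)$-hopset of size less than $n^{1+\frac{1}{2^k-1}-\Delta/2}$ forces $\beta = \Omega\!\left((1/(2^{k-2}(2k-1)\eps))^k\right)$. Since our hypothesis is $\beta = O\!\left((1/(2^{k-2}(2k-1)\eps))^k\right)$ with a hidden constant chosen small enough to violate that $\Omega$-bound, any $(\beta,\eps)$-hopset $H$ on $G$ must therefore satisfy $|H| \ge n^{1+\frac{1}{2^k-1}-\Delta/2}$.

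Next, I would relate $|H|$ to $\inorm{\vsens}$ via a double-counting identity. By \Cref{def:vertex-sensitivity},
\[\sum_{v \in V} \vsens(v) \;=\; \sum_{f \in H} |V(\edgespan(f))| \;\ge\; 2\,|H|,\]
since every hopset edge's span is a shortest path in $G$ between its endpoints, and therefore contains at least those two vertices. A standard averaging argument then yields
\[\inorm{\vsens} \;\ge\; \frac{1}{n}\sum_{v \in V} \vsens(v) \;\ge\; \frac{2|H|}{n} \;\ge\; 2\,n^{\frac{1}{2^k-1}-\Delta/2} \;\ge\; n^{\frac{1}{2^k-1}-\Delta},\]
for all sufficiently large $n$; the constant $2$ and the gap between $\Delta/2$ and $\Delta$ in the exponent are easily absorbed once $n$ grows, which is legitimate because the claim is only meaningful asymptotically.

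The argument is largely syntactic once \Cref{prop:abboud-bodwin-pettie} is in hand, so the real ``obstacle'' is only bookkeeping: being careful about the interpretation of the $O$-versus-$\Omega$ constants when taking the contrapositive and about swallowing constants into the slack $\Delta$. The subsequent main theorem \Cref{thm:apx-hopset-lb} on \emph{edge} sensitivity would follow by the same template, with $|V(\edgespan(f))|\ge 2$ replaced by $|E(\edgespan(f))|\ge 1$ and the denominator $n$ replaced by the number of edges $m$ of the hard instance; provided the hard graph of \cite{abboud2018hierarchy} has $m = n^{1+o(1)}$, the $\ell_\infty$-to-$\ell_1$ averaging still yields the same exponent $\frac{1}{2^k-1}$ after another adjustment of $\Delta$.
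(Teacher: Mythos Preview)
Your proof of the observation itself is correct and essentially identical to the paper's: both take the contrapositive of \Cref{prop:abboud-bodwin-pettie} to force $|H|$ large, then average $\onenorm{\vsens} \ge |H|$ over the $n$ vertices (the paper uses a contribution of $1$ per hopset edge rather than your $2$, and does not bother with the $\Delta/2$ slack, but these are cosmetic).

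One correction to your closing remark about \Cref{thm:apx-hopset-lb}: the paper explicitly points out that the hard instance of \cite{abboud2018hierarchy} has a super-linear number of edges, so your proposed averaging over all $m$ edges would \emph{not} recover the exponent $\frac{1}{2^k-1}$. Instead, the paper splits each vertex $v$ into $v^{in}, v^{out}$ joined by a $0$-weight edge, observes that every ``long and owned'' hopset edge must span at least one such edge, and averages only over those $n$ special edges.
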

\begin{proof}
    Observe that any edge in $H$ contributes at least $1$ to $\onenorm{\vsens}$.
    Consequently, a hopset with size $n^{1 + \frac{1}{2^k - 1} - \Delta}$ contributes a total of at least $n^{1 + \frac{1}{2^k - 1} - \Delta}$ to $\onenorm{\vsens}$.
    The average vertex sensitivity is then $n^{\frac{1}{2^k - 1} - \Delta}$.
    Since at least one vertex must incur at least the average, it follows that $\inorm{\vsens} \ge n^{\frac{1}{2^k - 1} - \Delta}$.
    
    Taking the contrapositive of \Cref{prop:abboud-bodwin-pettie}, any hopset with $\beta = O\paren{\paren{\frac{1}{2^{k-2}(2k-1)\eps}}^k}$ must have size at least $n^{1 + \frac{1}{2^k - 1} - \Delta}$ which, by our earlier discussion, must have $\inorm{\vsens} \ge n^{\frac{1}{2^k - 1} - \Delta}$.
\end{proof}

\Cref{obs:apx-hopset-vert-lb} is not quite as strong as what we would like to prove; it bounds $\inorm{\vsens}$ but we would like to get a bound on $\inorm{\esens}$.
Averaging over the number of edges does not go through since the graph from \cite{abboud2018hierarchy} contains a super-linear number of edges.
To this end, we have to peek into the guts of the proof of \Cref{prop:abboud-bodwin-pettie} just a little bit.

\apxhopsetlb*
\begin{proof}
    First, we take the recursive construction $G$ used in the proof of \Cref{prop:abboud-bodwin-pettie} and do a standard graph transformation to get $G'$ comprising of (i) vertices $v^{in}, v^{out}$ for all $v \in V(G)$; (ii) $0$ weight edges $(v^{in},v^{out})$ for all $v \in V(G)$; (iii) edges $(u^{out}, v^{in})$ for all $(u,v) \in E(G)$, with the same weight.
    Let $T$ be the set of all edges of the form $(v_{in}, v_{out})$ and $\mathds{1}_{T}$ be its indicator; note for later that $\card{T} = \onenorm{\mathds{1}_{T}} = n$.
    Our goal is as follows: for any $(\beta, \eps)$-hopset $H'$ of $G'$ satisying the hypothesis, we seek to lower bound $\esens \cdot \mathds{1}_{T}$, the sum of sensitivies over edges in $T$.
    
    Inspecting the result of \cite{abboud2018hierarchy}, the proof of \Cref{prop:abboud-bodwin-pettie} actually gives a lower bound on the size of a subset of edges in a hopset with $\beta = O\paren{\paren{\frac{1}{2^{k-2}(2k-1)\eps}}^k}$; these are the so-called \emph{long and owned} edges.
    Long and owned edges $e$ have the following property: $\edgespan(e)$ goes \emph{through} a layer of $G$.
    This means that such an edge ``lifted'' into $G'$ must contain an edge of the form $(v_{in},v_{out})$ in $\edgespan(e')$ and therefore contribute at least $1$ to $\esens \cdot \mathds{1}_{T}$.

    Taking the contrapositive of \Cref{prop:abboud-bodwin-pettie}, any hopset with $\beta = O\paren{\paren{\frac{1}{2^{k-2}(2k-1)\eps}}^k}$ must have size at least $n^{1 + \frac{1}{2^k - 1} - \Delta}$ which, by our earlier discussion, must mean that $\esens \cdot \mathds{1}_{T} \ge n^{1 + \frac{1}{2^k - 1} - \Delta}$.
    By an averaging argument, we get
    \begin{align*}
        \inorm{\esens} \ge \frac{\esens \cdot \mathds{1}_{T}}{\onenorm{\mathds{1}_{T}}} \ge n^{\frac{1}{2^k - 1} - \Delta},
    \end{align*}
    completing the proof.
\end{proof}

Observe that \Cref{thm:apx-hopset-lb} shows that, in the regime where $\eps$ is exactly inverse polylogarithmic, having a polylogarithmic $\beta$ (which is of the form $\beta = O_k\paren{(1/\eps)^k}$ for some constant $k$) necessitates at least a polynomial edge sensitivity of roughly $n^{1/2^k}$.
That is to say, at least one of $\beta$ or $\inorm{\esens}$ needs to be superpolylogarithmic when $\eps$ is inverse polylogarithmic.
We leave open whether we can say something similar for other ranges of $\eps$ (particularly, for $\eps$ constant) which, if true, would show that our upper bound of $(n^{o(1)},\eps)$-hopsets with $\inorm{\vsens} = n^{o(1)}$ is not too far from the best one could hope for: we could hope to improve one of $\beta$ or $\inorm{\vsens}$ to polylogarithmic, but not both simultaneously.
Finally, note that a similar gap is present in the traditional hopset literature: it is not known whether for a constant $\eps > 0$ if we can get a $(O(\log^{k_1} n), 0)$-hopset with $O(n\log^{k_2} n)$ size, for constants $k_1, k_2$.
In fact, the lower bound we have here matches the best known traditional hopset lower bound \cite{abboud2018hierarchy}, where a size $s$ there corresponds to a sensitivity $s/n$ here.

\section{Open Problems}
\label{sec:open}
In this section we collect a list of problems left open by our results.
First and foremost, the picture pertaining to \emph{directed} graphs is least clear.

\begin{itemize}
    \item The bounds for shortcut sets (see \Cref{thm:direted-reach-ub} and \Cref{cor:reachability-lb}) on directed graphs are not tight.
        Can we provide tight results for any specific value of sensitivity (say $O(\log n)$) if not for all values?
    \item The best known upper bound for directed exact hopsets is based on the folklore algorithm for traditional hopsets.
        It had been a long standing open problem whether this was the best we could do, and it is only within the last year that the folklore algorithm has been shown to be optimal \cite{bodwin2023folklore}.
        Can similarly fresh ideas close the gap between the folklore algorithm (see \Cref{lem:folklore}) and lower bounds here (see \Cref{cor:exact-lb})?
    \item We know essentially nothing insightful about the situation for directed approximate hopsets, besides the fact that our shortcut set lower bound in \Cref{cor:reachability-lb} applies (since a directed approximate hopset is a directed shortcut set).
        Find a non-trivial upper bound.
\end{itemize}
While more is known for undirected graphs, there are certainly gaps left to fill.
\begin{itemize}
    \item The upper bound given by \greedyshortcut only works for a specific sensitivity regime (i.e. it completes the picture for only one point on the tradeoff curve).
        Find a non-trivial algorithm, that outputs hopsets close to the curve of \Cref{cor:exact-lb}, where the sensitivity is tunable to values larger than polylogarithmic.
        See \Cref{rem:perfect} for more context surrounding this problem.
    \item Our lower bounds for undirected approximate hopsets in \Cref{thm:apx-hopset-lb} are not tight; they do not address the regime when $\eps$ is a constant well.
        For a constant $\eps > 0$, do there exist $(\beta, \eps)$-hopsets with $\beta$ and $\inorm{\vsens}$ simultaneously polylogarithmic, or must one of the quantities necessarily be superpolylogarithmic?
\end{itemize}
Finally, we believe that hopset sensitivity is a natural notion which should have concrete algorithmic applications beyond differential privacy.
Where else can low-sensitivity hopsets can be used as a primitive?

\section*{Acknowledgements}
The authors would like to thank Greg Bodwin for helpful discussions about hopsets.

\clearpage

\bibliographystyle{alpha}
\bibliography{reference}

\newcommand{\etalchar}[1]{$^{#1}$}
\begin{thebibliography}{LWWX22}

\bibitem[AB17]{abboud20174}
Amir Abboud and Greg Bodwin.
\newblock The 4/3 additive spanner exponent is tight.
\newblock {\em Journal of the ACM (JACM)}, 64(4):1--20, 2017.

\bibitem[ABP18]{abboud2018hierarchy}
Amir Abboud, Greg Bodwin, and Seth Pettie.
\newblock A hierarchy of lower bounds for sublinear additive spanners.
\newblock {\em SIAM Journal on Computing}, 47(6):2203--2236, 2018.

\bibitem[Alo02]{alon2002testing}
Noga Alon.
\newblock Testing subgraphs in large graphs.
\newblock {\em Random Structures \& Algorithms}, 21(3-4):359--370, 2002.

\bibitem[BDG{\etalchar{+}}24]{bodwin2024discrepancy}
Greg Bodwin, Chengyuan Deng, Jie Gao, Gary Hoppenworth, Jalaj Upadhyay, and Chen Wang.
\newblock The discrepancy of shortest paths, 2024.

\bibitem[Beh46]{behrend1946sets}
Felix~A Behrend.
\newblock On sets of integers which contain no three terms in arithmetical progression.
\newblock {\em Proceedings of the National Academy of Sciences}, 32(12):331--332, 1946.

\bibitem[BGWN20]{bernstein2020near}
Aaron Bernstein, Maximilian~Probst Gutenberg, and Christian Wulff-Nilsen.
\newblock Near-optimal decremental {SSSP} in dense weighted digraphs.
\newblock In {\em 2020 IEEE 61st Annual Symposium on Foundations of Computer Science (FOCS)}, pages 1112--1122. IEEE, 2020.

\bibitem[BH23]{bodwin2023folklore}
Greg Bodwin and Gary Hoppenworth.
\newblock Folklore sampling is optimal for exact hopsets: Confirming the $\sqrt{n}$ barrier.
\newblock In {\em 2023 IEEE 64th Annual Symposium on Foundations of Computer Science (FOCS)}, pages 701--720. IEEE, 2023.

\bibitem[BLP20]{ben2020new}
Uri Ben-Levy and Merav Parter.
\newblock New ($\alpha$, $\beta$) spanners and hopsets.
\newblock In {\em Proceedings of the Fourteenth Annual ACM-SIAM Symposium on Discrete Algorithms}, pages 1695--1714. SIAM, 2020.

\bibitem[Bod21]{bodwin2021new}
Greg Bodwin.
\newblock New results on linear size distance preservers.
\newblock {\em SIAM Journal on Computing}, 50(2):662--673, 2021.

\bibitem[BW23]{bernstein2023closing}
Aaron Bernstein and Nicole Wein.
\newblock Closing the gap between directed hopsets and shortcut sets.
\newblock In {\em Proceedings of the 2023 Annual ACM-SIAM Symposium on Discrete Algorithms (SODA)}, pages 163--182. SIAM, 2023.

\bibitem[CE06]{coppersmith2006sparse}
Don Coppersmith and Michael Elkin.
\newblock Sparse sourcewise and pairwise distance preservers.
\newblock {\em SIAM Journal on Discrete Mathematics}, 20(2):463--501, 2006.

\bibitem[CFR20]{cao2020efficient}
Nairen Cao, Jeremy~T Fineman, and Katina Russell.
\newblock Efficient construction of directed hopsets and parallel approximate shortest paths.
\newblock In {\em Proceedings of the 52nd Annual ACM SIGACT Symposium on Theory of Computing}, pages 336--349, 2020.

\bibitem[CFR21]{cao2021brief}
Nairen Cao, Jeremy~T Fineman, and Katina Russell.
\newblock Brief announcement: An improved distributed approximate single source shortest paths algorithm.
\newblock In {\em Proceedings of the 2021 ACM Symposium on Principles of Distributed Computing}, pages 493--496, 2021.

\bibitem[CGK{\etalchar{+}}23]{ghazi2022differentially}
Justin~Y. Chen, Badih Ghazi, Ravi Kumar, Pasin Manurangsi, Shyam Narayanan, Jelani Nelson, and Yinzhan Xu.
\newblock Differentially private all-pairs shortest path distances: Improved algorithms and lower bounds.
\newblock In {\em SODA}, 2023.

\bibitem[Coh00]{cohen2000polylog}
Edith Cohen.
\newblock Polylog-time and near-linear work approximation scheme for undirected shortest paths.
\newblock {\em Journal of the ACM (JACM)}, 47(1):132--166, 2000.

\bibitem[CSS11]{chan2011private}
T-H~Hubert Chan, Elaine Shi, and Dawn Song.
\newblock Private and continual release of statistics.
\newblock {\em ACM Transactions on Information and System Security (TISSEC)}, 14(3):1--24, 2011.

\bibitem[CZ24]{chechik2024path}
Shiri Chechik and Tianyi Zhang.
\newblock Path-reporting distance oracles with logarithmic stretch and linear size.
\newblock In {\em 51st International Colloquium on Automata, Languages, and Programming (ICALP 2024)}, pages 42:1--42:18. Schloss Dagstuhl--Leibniz-Zentrum f{\"u}r Informatik, 2024.

\bibitem[DGUW23]{deng2023differentially}
\hypertarget{abs}{Chengyuan} Deng, Jie Gao, Jalaj Upadhyay, and Chen Wang.
\newblock Differentially private range query on shortest paths.
\newblock In {\em Algorithms and Data Structures Symposium}, pages 340--370. Springer, 2023.

\bibitem[DMNS16]{dwork2006calibrating}
Cynthia Dwork, Frank McSherry, Kobbi Nissim, and Adam Smith.
\newblock Calibrating noise to sensitivity in private data analysis.
\newblock {\em Journal of Privacy and Confidentiality}, 7(3):17--51, 2016.

\bibitem[DR14]{dwork2014algorithmic}
Cynthia Dwork and Aaron Roth.
\newblock The algorithmic foundations of differential privacy.
\newblock {\em Foundations and Trends{\textregistered} in Theoretical Computer Science}, 9(3--4):211--407, 2014.

\bibitem[EN19]{elkin2019hopsets}
Michael Elkin and Ofer Neiman.
\newblock Hopsets with constant hopbound, and applications to approximate shortest paths.
\newblock {\em SIAM Journal on Computing}, 48(4):1436--1480, 2019.

\bibitem[ES23]{elkin2023path}
Michael Elkin and Idan Shabat.
\newblock Path-reporting distance oracles with logarithmic stretch and size o (n log log n).
\newblock In {\em 2023 IEEE 64th Annual Symposium on Foundations of Computer Science (FOCS)}, pages 2278--2311. IEEE, 2023.

\bibitem[Fin18]{fineman2018nearly}
Jeremy~T Fineman.
\newblock Nearly work-efficient parallel algorithm for digraph reachability.
\newblock In {\em Proceedings of the 50th Annual ACM SIGACT Symposium on Theory of Computing}, pages 457--470, 2018.

\bibitem[FL22]{fan2022distances}
Chenglin Fan and Ping Li.
\newblock Distances release with differential privacy in tree and grid graph.
\newblock In {\em 2022 IEEE International Symposium on Information Theory (ISIT)}, pages 2190--2195. IEEE, 2022.

\bibitem[FN18]{forster2018faster}
Sebastian Forster and Danupon Nanongkai.
\newblock A faster distributed single-source shortest paths algorithm.
\newblock In {\em 2018 IEEE 59th Annual Symposium on Foundations of Computer Science (FOCS)}, pages 686--697. IEEE, 2018.

\bibitem[GWN20]{gutenberg2020decremental}
Maximilian~Probst Gutenberg and Christian Wulff-Nilsen.
\newblock Decremental {SSSP} in weighted digraphs: Faster and against an adaptive adversary.
\newblock In {\em Proceedings of the Fourteenth Annual ACM-SIAM Symposium on Discrete Algorithms}, pages 2542--2561. SIAM, 2020.

\bibitem[Hes03]{hesse2003directed}
William Hesse.
\newblock Directed graphs requiring large numbers of shortcuts.
\newblock In {\em Proceedings of the Fourteenth Annual ACM-SIAM Symposium on Discrete Algorithms}, SODA '03, page 665–669, USA, 2003. Society for Industrial and Applied Mathematics.

\bibitem[HKN14]{henzinger2014sublinear}
Monika Henzinger, Sebastian Krinninger, and Danupon Nanongkai.
\newblock Sublinear-time decremental algorithms for single-source reachability and shortest paths on directed graphs.
\newblock In {\em Proceedings of the forty-sixth annual ACM symposium on Theory of computing}, pages 674--683, 2014.

\bibitem[HKN15]{henzinger2015improved}
Monika Henzinger, Sebastian Krinninger, and Danupon Nanongkai.
\newblock Improved algorithms for decremental single-source reachability on directed graphs.
\newblock In {\em Automata, Languages, and Programming: 42nd International Colloquium, ICALP 2015, Kyoto, Japan, July 6-10, 2015, Proceedings, Part I 42}, pages 725--736. Springer, 2015.

\bibitem[HP19]{huang2019thorup}
Shang-En Huang and Seth Pettie.
\newblock Thorup--{Zwick} emulators are universally optimal hopsets.
\newblock {\em Information Processing Letters}, 142:9--13, 2019.

\bibitem[HP21]{huang2021lower}
Shang-En Huang and Seth Pettie.
\newblock Lower bounds on sparse spanners, emulators, and diameter-reducing shortcuts.
\newblock {\em SIAM Journal on Discrete Mathematics}, 35(3):2129--2144, 2021.

\bibitem[HT84]{harel1984fast}
Dov Harel and Robert~Endre Tarjan.
\newblock Fast algorithms for finding nearest common ancestors.
\newblock {\em siam Journal on Computing}, 13(2):338--355, 1984.

\bibitem[JLS19]{jambulapati2019parallel}
Arun Jambulapati, Yang~P Liu, and Aaron Sidford.
\newblock Parallel reachability in almost linear work and square root depth.
\newblock In {\em 2019 IEEE 60th Annual Symposium on Foundations of Computer Science (FOCS)}, pages 1664--1686. IEEE, 2019.

\bibitem[KP22a]{kogan2022beating}
Shimon Kogan and Merav Parter.
\newblock Beating matrix multiplication for $n^{1/3}$-directed shortcuts.
\newblock In {\em 49th International Colloquium on Automata, Languages, and Programming (ICALP 2022)}, pages 82:1--82:20. Schloss Dagstuhl-Leibniz-Zentrum f{\"u}r Informatik, 2022.

\bibitem[KP22b]{kogan2022having}
Shimon Kogan and Merav Parter.
\newblock Having hope in hops: New spanners, preservers and lower bounds for hopsets.
\newblock In {\em 2022 IEEE 63rd Annual Symposium on Foundations of Computer Science (FOCS)}, pages 766--777. IEEE, 2022.

\bibitem[KP22c]{kogan2022new}
Shimon Kogan and Merav Parter.
\newblock New diameter-reducing shortcuts and directed hopsets: Breaking the barrier.
\newblock In {\em Proceedings of the 2022 Annual ACM-SIAM Symposium on Discrete Algorithms (SODA)}, pages 1326--1341. SIAM, 2022.

\bibitem[KS97]{klein1997randomized}
Philip~N Klein and Sairam Subramanian.
\newblock A randomized parallel algorithm for single-source shortest paths.
\newblock {\em Journal of Algorithms}, 25(2):205--220, 1997.

\bibitem[KS21]{karczmarz2021deterministic}
Adam Karczmarz and Piotr Sankowski.
\newblock A deterministic parallel {APSP} algorithm and its applications.
\newblock In {\em Proceedings of the 2021 ACM-SIAM Symposium on Discrete Algorithms (SODA)}, pages 255--272. SIAM, 2021.

\bibitem[LWWX22]{lu2022better}
Kevin Lu, Virginia~Vassilevska Williams, Nicole Wein, and Zixuan Xu.
\newblock Better lower bounds for shortcut sets and additive spanners via an improved alternation product.
\newblock In {\em Proceedings of the 2022 Annual ACM-SIAM Symposium on Discrete Algorithms (SODA)}, pages 3311--3331. SIAM, 2022.

\bibitem[Sea16]{sealfon2016shortest}
Adam Sealfon.
\newblock Shortest paths and distances with differential privacy.
\newblock In {\em Proceedings of the 35th ACM SIGMOD-SIGACT-SIGAI Symposium on Principles of Database Systems}, pages 29--41, 2016.

\bibitem[SN21]{shabat2021unified}
Idan Shabat and Ofer Neiman.
\newblock A unified framework for hopsets and spanners.
\newblock {\em arXiv preprint arXiv:2108.09673}, 2021.

\bibitem[SS99]{shi1999time}
Hanmao Shi and Thomas~H Spencer.
\newblock Time--work tradeoffs of the single-source shortest paths problem.
\newblock {\em Journal of algorithms}, 30(1):19--32, 1999.

\bibitem[ST83]{szemeredi1983extremal}
Endre Szemer{\'e}di and William~T. Trotter.
\newblock Extremal problems in discrete geometry.
\newblock {\em Combinatorica}, 3(3):381--392, 1983.

\bibitem[Tar10]{Tarkoma2010-zr}
Sasu Tarkoma.
\newblock {\em Overlay Networks: Toward Information Networking}.
\newblock CRC Press, February 2010.

\bibitem[Tho93]{thorup1993shortcutting}
Mikkel Thorup.
\newblock On shortcutting digraphs.
\newblock In {\em Graph-Theoretic Concepts in Computer Science: 18th International Workshop, WG'92 Wiesbaden-Naurod, Germany, June 18--20, 1992 Proceedings 18}, pages 205--211. Springer, 1993.

\bibitem[TZ05]{thorup2005approximate}
Mikkel Thorup and Uri Zwick.
\newblock Approximate distance oracles.
\newblock {\em Journal of the ACM (JACM)}, 52(1):1--24, 2005.

\bibitem[TZ06]{thorup2006spanners}
Mikkel Thorup and Uri Zwick.
\newblock Spanners and emulators with sublinear distance errors.
\newblock In {\em Proceedings of the seventeenth annual ACM-SIAM symposium on Discrete algorithm}, pages 802--809, 2006.

\bibitem[UY90]{ullman1990high}
Jeffery Ullman and Mihalis Yannakakis.
\newblock High-probability parallel transitive closure algorithms.
\newblock In {\em Proceedings of the second annual ACM symposium on Parallel algorithms and architectures}, pages 200--209, 1990.

\bibitem[Wai19]{wainwright2019high}
Martin~J Wainwright.
\newblock {\em High-dimensional statistics: A non-asymptotic viewpoint}, volume~48.
\newblock Cambridge university press, 2019.

\bibitem[WXX24]{williams2024simpler}
Virginia~Vassilevska Williams, Yinzhan Xu, and Zixuan Xu.
\newblock Simpler and higher lower bounds for shortcut sets.
\newblock In {\em Proceedings of the 2024 Annual ACM-SIAM Symposium on Discrete Algorithms (SODA)}, pages 2643--2656. SIAM, 2024.

\end{thebibliography}

\clearpage

\appendix
\section{On Applying \cite{bodwin2023folklore} to Sensitivity-Diameter Lower Bounds}
\label{app:optimal-hopset-lb}
Here we continue \Cref{rem:optimal-hopset-lb} in a little more detail.

A recent breakthrough of Bodwin and Hoppenworth \cite{bodwin2023folklore} was able to prove tight lower bounds for the \emph{number of edges} in a directed exact hopset by using a different kind of construction that relaxes the property that perfect paths (see \Cref{def:perfect-path}) are internally disjoint.
Unfortunately, their relaxation does not work for lower bounding the \textit{sensitivity} of a hopset. 

Loosely speaking, in the construction of Bodwin and Hoppenworth, perfect paths are allowed to intersect, but their pairwise intersection is always a segment with a small number of edges.
This means that a shortcut edge $(s,t)$ can shortcut multiple perfect paths only if the shortcut is low-hop -- i.e. only if $s$ and $t$ are in nearby layers.
Since low-hop shortcuts only make limited progress in reducing the hop-distance between the first and last layer of the overall graph, the authors are able to conclude that it would require a large number of low-hop shortcuts to reduce the number of hops on every perfect path.
But in our setting we do not limit the number of shortcut edges, so by using a large number of these low-hop shortcuts, one can reduce the diameter of their construction while incurring low sensitivity; for this reason, their approach does not lead to an improved lower bound in our setting.

\section{Proofs of \pathshortcut and \treeshortcut Guarantees}
\label{app:path-tree}
For completeness, we prove the guarantees of \pathshortcut and \treeshortcut here.

\obspathshortcut*
\begin{proof}\ \\
    \textbf{Output is a $O(\log n)$-shortcut/hopset.}
    Let $H = \pathshortcut(P)$ and $u$ be any vertex along $P$.
    By an induction argument, there is always an edge $(v_0, v) \in H$ where $v_0$ reaches $v$ which reaches $u$, and the $P$ hop-length from $v_0$ to $v$ is at least as much as that from $v$ to $u$.
    Thus, the $P \cup H$ hop-length from $v_0$ to $u$ is at most $O(\log n)$.
    A similar argument shows that the $P \cup H$ hop-length from $u$ to $v_n$ is at most $O(\log n)$.
    Finally, let $u$ and $v$ be any two vertices in $P$ where $u$ reaches $v$.
    Consider the first time $u$ and $v$ are separated into different recursive instances with endpoints $x, y$ and $y, z$ respectively.
    By the above reasoning, the $P \cup H$ hop-lengths from $u$ to $y$ and from $y$ to $v$ are both at most $O(\log n)$, and by concatenating them we can see that the $P \cup H$ hop-length from $u$ to $v$ is at most $O(\log n)$. 
    \\
    \textbf{Sensitivity is bounded by $O(\log n)$.}
    Consider the recursion tree of \pathshortcut, where each node is labelled with its recursive input.
    For example, the root node is labelled with $P = v_0, v_1, \ldots, v_n$, its left child is labelled with $v_0, v_1, \ldots, v_{n/2}$ and its right child is labelled with $v_{n/2}, v_{n/2+1}, \ldots, v_n$, and so on.
    A vertex is contained in at most two labels per level.
    Since each appearance of a vertex $v$ in a label contributes $1$ to $\vsens(v)$ and there are $O(\log n)$ levels, $\vsens(v) = O(\log n)$ for all $v \in V$ and therefore $\inorm{\vsens} = O(\log n)$.
\end{proof}

\obstreeshortcut*
\begin{proof}\ \\
  \textbf{Output is a $O(\log^2 n)$-shortcut/hopset. }
    Let $H = \treeshortcut(T \textrm{ rooted at } v)$.
    For any pair of vertices $s, t$, consider the unique path in $T$, which by \Cref{prop:hl-decom} is comprised of at most $O(\log n)$ heavy (sub)paths and $O(\log n)$ light edges.
    For each heavy path $P$, any pair of vertices in $P$ have a $P \cup H$ hop-length of $O(\log n)$ by \Cref{obs:path-shortcut}.
    We can stitch together light edges and the aforementioned $O(\log n)$ hop-length paths for each heavy path to show the $T \cup H$ hop-length from $s$ to $t$ is at most $O(\log^2 n)$.
    \\
    \textbf{Sensitivity is bounded by $O(\log n)$. }
    For any heavy path $P$, \Cref{obs:path-shortcut} asserts that $\pathshortcut(P)$ contributes $O(\log n)$ to $\vsens(v)$ for all $v \in V(P)$.
    By \Cref{prop:hl-decom}, heavy paths are vertex disjoint and, consequently, the only contribution to $\vsens(v)$ is from the call to $\pathshortcut(P)$ where $v \in V(P)$.
    Thus, $\inorm{\vsens} = O(\log n)$.
\end{proof}
\section{Undirected Shortcut Sets Upper Bound}
\label{app:warmup-1}
A direct application of \treeshortcut gives us undirected low sensitivity shortcut sets.

\begin{tbox}
    \algname{\ezshortcut}\\
    \textbf{Input:} An undirected graph $G = (V, E)$
    \begin{enumerate}
        \item $H \gets \emptyset$
        \item For each connected component $C$ of $G$
            \begin{enumerate}
                \item Select a representative $c \in V(C)$
                \item $H \gets H \cup \treeshortcut(\textrm{the union of all routing paths rooted at } c)$
            \end{enumerate}
        \item Return $H$
    \end{enumerate}
\end{tbox}

\begin{observation}
\label{obs:ez-shortcut}
    \ezshortcut produces an $O(\log^2 n)$-shortcut set with $\inorm{\vsens} = O(\log n)$.
\end{observation}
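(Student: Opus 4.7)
The approach is to reduce the analysis of $\ezshortcut$ on $G$ to the per-component analysis of $\treeshortcut$. The key observation is that in an undirected graph, $s$ can reach $t$ iff they lie in the same connected component $C$, and for each component the union of routing paths rooted at $c$ naturally forms a spanning tree $T_c$ of $C$. (For shortcut sets the algorithm is free to choose its routing paths, so we commit to taking the routing path from any $s$ to any $t$ in $C$ to be the unique $s$-to-$t$ path in $T_c$; note that the routing paths starting at $c$ are exactly the root-to-vertex paths in $T_c$, and arbitrary pair routing paths are their tree concatenations.) This choice matters because the sensitivity is defined relative to the spans of the shortcut edges added, and with this choice the span of every edge produced by \treeshortcut is a subpath of a heavy path in $T_c$.

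With the routing paths pinned down, I would then invoke \Cref{obs:tree-shortcut} on each $T_c$ (which is an undirected tree rooted at $c$), yielding a shortcut set for $T_c$ whose hop-diameter is $O(\log^2 n)$ and whose vertex sensitivity is $O(\log n)$. For the diameter guarantee on $G$, take any $(s,t)$ in the transitive closure of $G$: both lie in the same component $C$, so by the tree shortcut guarantee there is an $st$-path in $T_c \cup H$ of at most $O(\log^2 n)$ edges; since $T_c \subseteq G$, this path lies in $G \cup H$ as well.

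For the sensitivity bound, note that any vertex $v$ belongs to exactly one connected component $C$, hence its sensitivity counter $\vsens(v)$ is incremented only by the single call $\treeshortcut(T_c)$; by \Cref{obs:tree-shortcut} this contribution is $O(\log n)$, so $\inorm{\vsens} = O(\log n)$ globally. Summing the two bounds gives the claim.

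The only genuinely subtle point — and the place where care is required — is pinning down the routing path for pairs $(s,t)$ with $s,t \neq c$, because the algorithm text only names paths rooted at $c$. Once we commit to tree paths in $T_c$ as the routing paths (which is the natural consistent extension), everything reduces cleanly to \Cref{obs:tree-shortcut} applied component-by-component; no further calculation is needed.
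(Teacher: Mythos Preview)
Your proposal is correct and follows the same approach as the paper: reduce to a per-component application of \Cref{obs:tree-shortcut}. The paper's own proof is a single line (``follows immediately from \Cref{obs:tree-shortcut} and observing that reachability is preserved''), so your write-up is simply a fleshed-out version of that, including the explicit commitment to tree paths in $T_c$ as the routing paths for non-root pairs, which the paper leaves implicit.
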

\begin{proof}
    This follows immediately from \Cref{obs:tree-shortcut} and observing that reachability is preserved.
\end{proof}
\section{Directed Exact Hopsets Upper Bound}
\label{app:warmup-2}
Using ideas from \cite{ghazi2022differentially,deng2023differentially}, we can modify a folklore randomized algorithm (attributed to \cite{ullman1990high}), which gives a directed $(\Ot(\sqrt{n}),0)$-hopset with linear size, to get a directed $(\Ot(\sqrt{n}),0)$-hopset with $\inorm{\vsens} = \Ot(\sqrt{n})$.

\begin{tbox}
    \algname{\folkloreshortcut}\\
    \textbf{Input:} A directed graph $G = (V, E, w)$
    \begin{enumerate}
        \item Take each $v \in V$ into the set $X$ independently with probability $n^{-1/2}$
        \item $H \gets \emptyset$
        \item For each $v \in X$
            \begin{enumerate}
                \item $H \gets H \cup \treeshortcut(\textrm{shortest path arborescence rooted at } v)$
            \end{enumerate}
        \item Return $H$
    \end{enumerate}
\end{tbox}

\begin{lemma}
\label{lem:folklore}
    $\folkloreshortcut$ produces a directed $(O(\sqrt{n} \log n),0)$-hopset with $\expect{\inorm{\vsens}} = O(\sqrt{n} \log n)$.
\end{lemma}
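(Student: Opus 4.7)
The plan is to prove the two claims independently: that with high probability the output is a $(O(\sqrt{n}\log n),0)$-hopset, and that the expected worst-case vertex sensitivity is $O(\sqrt{n}\log n)$. Both parts rest on two standard probabilistic tools. First, since $|X| \sim \mathrm{Binomial}(n, n^{-1/2})$, a Chernoff bound gives $|X| = O(\sqrt{n})$ with probability $1 - e^{-\Omega(\sqrt{n})}$. Second, a ``window'' lemma: for a sufficiently large constant $c$, with high probability every block of $c\sqrt{n}\log n$ consecutive vertices on every shortest path of $G$ contains a sampled vertex. This follows from $(1 - n^{-1/2})^{c\sqrt{n}\log n} \le n^{-c}$ and a union bound over the $O(n^2)$ shortest paths and the $O(\sqrt{n}/\log n)$ windows inside each such path.

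For the hopset claim, I would fix an $(s,t)$ pair with $s$ reaching $t$ and let $P = x_0, x_1, \ldots, x_\ell$ be the shortest $st$-path. The window lemma produces indices $0 < j_1 < j_2 < \cdots < j_m < \ell$ with $x_{j_i} \in X$ and all consecutive gaps at most $c\sqrt{n}\log n$. I would then exhibit a low-hop $st$-route in $G \cup H$ of the same length in three pieces: walk along original edges of $G$ from $s$ to $x_{j_1}$ using at most $c\sqrt{n}\log n$ hops; bridge $x_{j_i}$ to $x_{j_{i+1}}$ in $O(\log^2 n)$ hops by \Cref{obs:tree-shortcut} applied to the arborescence rooted at $x_{j_i}$ (since $x_{j_{i+1}}$ lies on a shortest path out of $x_{j_i}$ and hence in that arborescence); and similarly reach $t$ from $x_{j_m}$ in $O(\log^2 n)$ hops. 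Because $\ell \le n$ and each gap is at least one hop, $m \le O(\sqrt{n}/\log n)$, so the total count is $O(\sqrt{n}\log n) + O(m\log^2 n) + O(\log^2 n) = O(\sqrt{n}\log n)$.

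For the sensitivity bound, I would observe that every edge of $H$ is introduced by a single call to $\treeshortcut$ on the arborescence rooted at some $u \in X$, so any edge contributing to $\vsens(v)$ must come from such a call with $v$ in the corresponding arborescence. By \Cref{obs:tree-shortcut}, each call contributes at most $O(\log n)$ to $\vsens(v)$, giving the pointwise inequality $\vsens(v) \le |X| \cdot O(\log n)$. Conditioning on $|X| \le 2\sqrt{n}$ yields $\inorm{\vsens} = O(\sqrt{n}\log n)$ with high probability; in the low-probability complement the trivial bound $\inorm{\vsens} \le n^2$ suffices to conclude $\expect{\inorm{\vsens}} = O(\sqrt{n}\log n)$.

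The main subtlety I expect is definitional rather than computational: sensitivity is measured against a single fixed set of routing paths $\cP$, whereas $\treeshortcut$ is called separately on many arborescences, each implicitly using its own tree paths. For \Cref{obs:tree-shortcut} to apply directly to every call, I would first fix $\cP$ via a shortest-path tiebreaking function and then take each arborescence used by the algorithm to be built from edges of $\cP$, so that the span of any hopset edge coming from that arborescence coincides with the corresponding routing path. This is automatic for undirected graphs and DAGs via a consistent tiebreaker, and in general directed graphs it can still be arranged pairwise once $\cP$ is fixed first. Everything else in the argument is bookkeeping on top of \Cref{obs:tree-shortcut} and the two concentration facts above.
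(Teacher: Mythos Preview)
Your proof is correct but unnecessarily elaborate in the hopset part. The paper observes that a single sampled vertex suffices: with high probability some $x \in X$ lies among the first $\Theta(\sqrt{n}\log n)$ vertices of the shortest $st$-path $P$, and since the arborescence $T_x$ spans every vertex reachable from $x$, \Cref{obs:tree-shortcut} already gives an $O(\log^2 n)$-hop shortest path from $x$ all the way to $t$. There is no need to chain through $x_{j_1}, \ldots, x_{j_m}$, no need for the full window lemma, and no need to bound $m$. Your chaining argument is still valid, but note that your justification ``each gap is at least one hop, $m \le O(\sqrt{n}/\log n)$'' is not the right reason; the bound on $m$ comes from selecting one sample per window and counting windows as $\ell/(c\sqrt{n}\log n) \le O(\sqrt{n}/\log n)$. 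The sensitivity argument is essentially identical to the paper's (the paper simply uses $\expect{|X|} = \sqrt{n}$ and linearity, without the Chernoff conditioning, but either works). Your closing remark on aligning each arborescence with the fixed routing paths $\cP$ is a genuine definitional point the paper glosses over.
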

\begin{proof}\ \\
    \textbf{With high probability, all shortest path hop-lengths are bounded by $(O(\sqrt{n} \log n)$. }
    Let $P$ be a $G$ shortest path from $u$ to $v$ with $\Omega(\sqrt{n} \log n)$ hop-length.
    Then, with high probability at least one of the first $\Theta(\sqrt{n} \log n)$ vertices of $P$ is sampled into $X$; call any one of these vertices $x$ and let its shortest path arborescence be denoted with $T_x$.
    By \Cref{obs:tree-shortcut}, $\treeshortcut$ gives an $O(\log^2 n)$ hop-length $G \cup T_x$ shortest path from $x$ to $v$ and so a shortest path from $u$ to $v$ can be taken using $O(\sqrt{n} \log n)$ edges in $G$ from $u$ to $x$ and $O(\log^2 n)$ edges from $x$ to $v$ using edges in $G \cup T_x$, which totals to $(O(\sqrt{n} \log n)$ edges.
    By a union bound, this holds with high probability for all $O(n^2)$ shortest paths using $\Omega(\sqrt{n} \log n)$ edges.
    \\
    \textbf{Sensitivity is bounded by $O(\sqrt{n} \log n)$ in expectation. }
    The expected size of $X$ is $\sqrt{n}$, and by \Cref{obs:tree-shortcut} each call to $\treeshortcut(T_x)$ for $x \in X$ contributes at most $O(\log n)$ to $\vsens(v)$ for all $v \in V$, hence the claim follows.
\end{proof}

A simple application of the probabilistic method on \Cref{lem:folklore} then shows the existence of $(O(\sqrt{n} \log n),0)$-hopsets with $\inorm{\vsens} = O(\sqrt{n} \log n)$.
\section{Applications to Differential Privacy}
\label{app:dp-rq-sd}
In this section, we present the application of low-sensitivity hopsets in the problem of All Sets Range Queries (ASRQ) with differnetial privacy introduced in \Cref{subsec:app-to-dp}. We first show the proof of \Cref{thm:pure-dp} for the ASRQ problem and discuss the connection between low-sensitivity hopsets and All Pairs Shortest Distances (APSD) problem. 

\subsection{Technical Tools}
The main technical tool we use for the proof is the Laplace Mechanism. For completeness we provide the necessary definitions.

\begin{definition}[\textit{Laplace distribution}]
\label{def:lap-dist}
\normalfont We say a zero-mean random variable $X$ follows the Laplace distribution with parameter $b$ (denoted by $X \sim \Lap{b}$) if the probability density function of $X$ follows
\begin{align*}
p(x) = \Lap{b}\paren{x} = \frac{1}{2b}\cdot \exp\paren{-\frac{\card{x}}{b}}.
\end{align*}
\end{definition}

Laplace random variables exhibit a nice concentration property, formally as follows.

\begin{lemma}[Sum of Laplace random variables,~\cite{chan2011private,wainwright2019high}]
\label{lem:lap-sum}
\normalfont Let $\{X_{i}\}_{i=1}^{m}$ be a collection of independent random variables such that  $X_i \sim \Lap{b_{i}}$ for all $1\leq i \leq m$. Then, for $\nu \geq \sqrt{\sum_{i} b^2_{i}}$ and $0<\lambda<\frac{2\sqrt{2}\nu^2}{b}$ for $b= \max_{i} \set{b_{i}}$, 
\begin{align*}
\prob{\abs{\sum_{i} X_{i}} \geq \lambda}\leq 2\cdot \exp\paren{-\frac{\lambda^2}{8\nu^2}}.
\end{align*}
\end{lemma}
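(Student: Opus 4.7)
The plan is to prove this concentration inequality using a standard Chernoff–Cram\'er argument adapted to the sub-exponential tail of the Laplace distribution. The key idea is to bound the moment generating function (MGF) of each $X_i$, take the product for the sum, optimize the Chernoff exponent, and handle both tails by symmetry.

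\textbf{Step 1: MGF of a single Laplace variable.} I would first compute $\Exp[e^{t X_i}]$ for $X_i \sim \Lap{b_i}$ by direct integration against the density in \Cref{def:lap-dist}. A short calculation gives $\Exp[e^{t X_i}] = 1/(1 - b_i^2 t^2)$ valid for $|t| < 1/b_i$. I would then use the elementary inequality $1/(1-u) \le e^{2u}$ for $0 \le u \le 1/2$ (proved by comparing Taylor series, or by checking endpoints) to conclude
\[
\Exp[e^{t X_i}] \le \exp(2 b_i^2 t^2) \quad \text{whenever } b_i^2 t^2 \le 1/2.
\]

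\textbf{Step 2: MGF of the sum.} By independence of the $X_i$'s,
\[
\Exp\Bigl[e^{t \sum_i X_i}\Bigr] = \prod_i \Exp[e^{t X_i}] \le \exp\Bigl(2 t^2 \sum_i b_i^2\Bigr) \le \exp(2 t^2 \nu^2),
\]
where the last step uses $\nu \ge \sqrt{\sum_i b_i^2}$. This bound is valid as long as $b^2 t^2 \le 1/2$, i.e.\ $|t| \le 1/(b\sqrt{2})$, where $b = \max_i b_i$.

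\textbf{Step 3: Chernoff optimization.} For any $t \in (0, 1/(b\sqrt{2})]$, Markov's inequality applied to $e^{t \sum_i X_i}$ yields
\[
\Pr\Bigl[\sum_i X_i \ge \lambda\Bigr] \le e^{-t\lambda}\, \Exp\bigl[e^{t \sum_i X_i}\bigr] \le \exp(2 t^2 \nu^2 - t \lambda).
\]
The right-hand side is minimized at $t^\star = \lambda/(4\nu^2)$, giving the bound $\exp(-\lambda^2/(8\nu^2))$. The optimizer $t^\star$ lies in the allowed range precisely when $\lambda/(4\nu^2) \le 1/(b\sqrt{2})$, equivalently $\lambda \le 2\sqrt{2}\nu^2/b$, which is the hypothesis of the lemma.

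\textbf{Step 4: Two-sided bound.} Since $-X_i \sim \Lap{b_i}$ as well (the Laplace density is symmetric), the identical argument applied to $\sum_i (-X_i)$ bounds $\Pr[\sum_i X_i \le -\lambda]$ by the same quantity. A union bound then gives $\Pr[|\sum_i X_i| \ge \lambda] \le 2 \exp(-\lambda^2/(8\nu^2))$, as claimed.

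The only non-routine step is verifying the MGF bound in Step~1; once $\Exp[e^{tX_i}] \le e^{2b_i^2 t^2}$ is established, the rest of the argument is the standard sub-exponential Chernoff pattern, with the constraint $\lambda < 2\sqrt{2}\nu^2/b$ being exactly the condition that the optimal $t^\star$ still lies inside the region of validity of the MGF bound.
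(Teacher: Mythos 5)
Your proof is correct, and the paper offers no proof of its own for this lemma (it is imported verbatim from the cited references), so there is nothing to diverge from: your MGF bound $\Exp[e^{tX_i}] = 1/(1-b_i^2t^2) \le e^{2b_i^2t^2}$ for $b_i^2t^2 \le 1/2$, the Chernoff optimization at $t^\star = \lambda/(4\nu^2)$, and the check that the constraint $\lambda < 2\sqrt{2}\nu^2/b$ is exactly what keeps $t^\star$ in the region of validity, together with the symmetry/union-bound step, constitute precisely the standard argument used in those references. All the computations (including the elementary inequality $1/(1-u)\le e^{2u}$ on $[0,1/2]$, verifiable since $\frac{d}{du}\bigl(e^{2u}(1-u)\bigr)=e^{2u}(1-2u)\ge 0$ there) check out.
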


We move forward to sensitivity as known in DP, then we are able to define the Laplace mechanism, a standard DP mechanism that adds noise sampled from Laplace distribution with scale dependent on the $\ell_1$-sensitivity.

\begin{definition}[\textit{Sensitivity}]
\label{def:sensitivity}
\normalfont Let $p \geq 1$. For any function $f: \mathcal{X}\rightarrow \mathbb{R}^k$ defined over a domain space $\mathcal X$,  the $\ell_p$-sensitivity of the function $f$ is defined as  
\begin{displaymath}
\Delta_{f,p} = \max_{\substack{w,w' \in \mathcal{X} \\
w\sim w'}}\|f(w)-f(w') \|_{p},
\end{displaymath}
Here, $\norm{\bx}_p:=\paren{\sum_{i=1}^d \abs{\bx[i]}^p }^{1/p}$ is the $\ell_p$-norm of the vector $\mathbf x \in \mathbb R^d$ and $\bx[i]$ is the $i$-th coordinate.
\end{definition}

\begin{definition}[\textit{Laplace mechanism}]
\label{def:lap-mech}
\normalfont
For any function $f: \mathcal{X}\rightarrow \mathbb{R}^k$, the Laplace mechanism on input $w\in \mathcal{X}$ samples $Y_1, \dots, Y_k$ independently from $\Lap{\frac{\Delta_{f,1}}{\varepsilon}}$ and outputs
\begin{displaymath}
M_{\varepsilon}(f) = f(w) + (Y_1, \dots, Y_k).
\end{displaymath}
\end{definition}

The last piece we need is the basic composition theorem.

\begin{proposition}[Composition theorem \cite{dwork2006calibrating}]
\label{prop:basic-comp}
For any $\varepsilon > 0$, the composition of $k$ $\varepsilon$-differentially private algorithms is $k\varepsilon$-differentially private. 
\end{proposition}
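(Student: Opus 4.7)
The plan is to proceed by a direct density argument using the chain rule for conditional densities. Let $M_1, \ldots, M_k$ be the given $\varepsilon$-DP algorithms and write $M(Y) = (M_1(Y), \ldots, M_k(Y))$ for their composition. The goal is to show that for neighboring databases $Y \sim Y'$ and any measurable set $A$ in the product output space, $\Pr[M(Y) \in A] \le e^{k\varepsilon} \Pr[M(Y') \in A]$, which is exactly the definition of $k\varepsilon$-DP (with $\delta = 0$).

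First I would reformulate pure $\varepsilon$-DP pointwise: by a standard Radon--Nikodym argument, $M_i$ is $\varepsilon$-DP if and only if the density ratio $p_{M_i(Y)}(s)/p_{M_i(Y')}(s)$ is at most $e^\varepsilon$ almost surely, for every pair of neighbors $Y \sim Y'$. (If this ratio exceeded $e^\varepsilon$ on a set of positive measure, integrating over that set would directly violate the DP inequality, giving the converse.) Then the joint density of the composition factors by the chain rule as
\begin{align*}
    \frac{p_{M(Y)}(s_1, \ldots, s_k)}{p_{M(Y')}(s_1, \ldots, s_k)} = \prod_{i=1}^k \frac{p_{M_i(Y) \mid s_1, \ldots, s_{i-1}}(s_i)}{p_{M_i(Y') \mid s_1, \ldots, s_{i-1}}(s_i)}.
\end{align*}

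In the non-adaptive setting where the $M_i$'s are independent, the conditioning drops away and each factor is bounded by $e^\varepsilon$ directly by hypothesis. In the (slightly more general) adaptive setting, the mechanism $M_i$ executed after observing the prefix $(s_1, \ldots, s_{i-1})$ is itself $\varepsilon$-DP for every fixed value of that prefix, so the bound on each factor still holds. Multiplying over $i = 1, \ldots, k$ gives a joint density ratio of at most $e^{k\varepsilon}$, and integrating this pointwise bound over any measurable $A$ yields the desired inequality. The only mild subtlety is justifying the density-ratio reformulation for general (possibly non-atomic) output measures, which is entirely routine via Radon--Nikodym; given the standard nature of this proposition, simply citing \cite{dwork2006calibrating} would also be a perfectly reasonable treatment.
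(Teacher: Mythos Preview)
Your argument is the standard density-ratio / chain-rule proof of basic composition and is correct. The paper itself does not prove this proposition at all---it simply states it and cites \cite{dwork2006calibrating}---so your treatment goes beyond what the paper provides; your closing remark that citing the original source would suffice is exactly what the paper does.
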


\subsection{Proof of \Cref{thm:pure-dp}}
\label{subsec:proof-of-dp}
For completeness we define the DP-ASRQ formally.
\begin{definition}[\textit{Range Queries with Neighboring Attributes}]
\label{def:neighbor-attributes}
\normalfont
Let $(\mathcal{R}=(X, \mathcal{S}), f)$ be a system of range queries, and let $w, w': X \rightarrow \mathbb{R}^{\geq 0} $ be attribute functions that map each element in $X$ to a non-negative real number. we say $w, w'$ are neighboring, denoted as $w \sim w'$ if $\sum_{e\in E}|w(e)-w'(e)| \leq 1.$
\end{definition}

\begin{definition}[\textit{Differentially Private Range Queries}]
\label{def:dp-asrq}
\normalfont
Let $(\mathcal{R}=(X, \mathcal{S}), f)$ be a system of range queries and $w, w': X \rightarrow \mathrm{R}^{\geq 0} $ be neighboring attribute functions. Let $\mathcal{A}$ be an algorithm that takes $(\mathcal{R}, f , w)$ as input. Then $\mathcal{A}$ is $(\epsilon, \delta)$-differentially private on $\mathcal{G} $ if, for neighboring attribute functions $w\sim w'$ and all sets of possible outputs $\mathcal{C}$, we have: $\Pr[\mathcal{A}(\mathcal{R}, f , w)\in \mathcal{C}] \leq e^{\epsilon}\cdot \Pr[\mathcal{A}(\mathcal{R}, f , w') \in \mathcal{C} ]+\delta$.
\end{definition}

We show an algorithm named as \puredpalg that calls the $(O(\sqrt{n}\log n), 0)$-hopset construction with $O(\log n)$ sensitivity as a subroutine. The high-level idea is to apply \greedyshortcut to the given graph $G$ and construct $H$ using the true attributes. Now each hopset edge in $H$ has an attribute with the true summation. To guarantee DP we add Laplace noise to the attributes on hopset edges, the noise magnitude is tempered by the sensitivity (as defined in DP), which is the hopset edge sensitivity (as defined in this paper). In addition we also add Laplace noise for each edge in the original graph. For a fixed shortest path, we report the query output along a 
path of at most $O(\sqrt{n}\log n)$ hops with edges in $E\cup H$, adding up the perturbed attributes of the edges. The additive error is at most the summation of all noises added. By concentration of Laplace random variables we get the claimed bound.

\begin{tbox}
    \textbf{\algname{\puredpalg}: An $\epsilon$-DP algorithm to release all sets range queries}\\
    \textbf{Input:} Undirected graph $G = (V, E, w)$, shortest paths $\cP$ and privacy parameter $\epsilon>0$
    \begin{enumerate}
        \item $H \gets \greedyshortcut(G, \cP)$
        \item Let the new graph be $G' = (V, E+H, w_H)$, and shortest paths using hop edges be $\cP'$
        \item For each hop edge $e \in H$
            \begin{enumerate}
                \item Add an independent Laplace noise $\Lap{2\log n/\epsilon}$ to $w_H(e)$, denoted as $f'(e)$
            \end{enumerate}
        \item For each edge $e \in E$
            \begin{enumerate}
                \item Add an independent Laplace noise $\Lap{2/\epsilon}$ to $w_H(e)$, denoted as $f'(e)$
            \end{enumerate}
        \item For each $(u,v) \in V\times V$
            \begin{enumerate}
                \item Report the query output as $\hat{f}(u,v) = \sum_{e \in E(P_{uv})}f'(e)$
            \end{enumerate}
    \end{enumerate}
\end{tbox}
The following lemma immediately proves \Cref{thm:pure-dp}.

\begin{lemma}
    \label{lem:hop-asrq}
    For any undirected graph $G$ and any $\epsilon > 0$, \puredpalg is an $\epsilon$-DP algorithm that outputs all sets range queries with additive error at most $\Ot(\frac{n^{1/4}}{\epsilon})$ with high probability.
\end{lemma}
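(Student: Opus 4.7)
The plan is to split the proof into two parts: privacy and utility. For privacy, I will argue that the release of the noisy edge attribute vector $\{f'(e) : e \in E \cup H\}$ is $\epsilon$-differentially private by treating the contributions from original edges and hopset edges separately and then applying basic composition (\Cref{prop:basic-comp}). Consider a pair of neighboring attribute functions $w \sim w'$ which differ on a single edge $e_0 \in E$ by at most $1$. First, when releasing the perturbed attributes of the original edges, only the coordinate for $e_0$ itself changes, so the $\ell_1$-sensitivity is $1$; with Laplace noise of scale $2/\epsilon$ this is $(\epsilon/2)$-DP. Second, when releasing the perturbed attributes of the hopset edges, the coordinates that change are exactly those hop edges $e' \in H$ whose span $\edgespan(e')$ contains $e_0$; by definition of edge sensitivity the number of such coordinates is $\esens(e_0) \le \inorm{\esens}$, and by \Cref{obs:esens-less-vsens} together with \Cref{thm:greedy-good} we have $\inorm{\esens} \le \inorm{\vsens} = O(\log n)$, so the $\ell_1$-sensitivity is $O(\log n)$ and Laplace noise of scale $2\log n/\epsilon$ gives $(\epsilon/2)$-DP. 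Composing these two releases yields $\epsilon$-DP.

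For utility, fix any query pair $(u,v)$ and let $P_{uv}$ be the $G'$-path of hop-length at most $\beta = O(\sqrt{n}\log n)$ used by \puredpalg to answer the query. Because every $e = (s,t) \in H$ was given the true weight $w_H(e) = \sum_{f \in \edgespan(e)} w(f)$ and the spans along $P_{uv}$ partition a true shortest $uv$-path of $G$, the error $\hat f(u,v) - f(u,v)$ equals the sum of the independent Laplace noises added to the edges of $P_{uv}$. This sum consists of at most $\beta$ independent Laplace random variables, each with parameter $b \le 2\log n/\epsilon$. Applying \Cref{lem:lap-sum} with $m \le \beta$ and $\nu = b\sqrt{\beta} = O\!\paren{n^{1/4}\log^{3/2} n / \epsilon}$, and setting $\lambda = \Theta(\nu\sqrt{\log n}) = \Ot(n^{1/4}/\epsilon)$, the probability that the error exceeds $\lambda$ is at most $2\exp(-\Omega(\log n)) \le n^{-c}$ for any constant $c$ by choosing the hidden constant in $\lambda$ large enough. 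A union bound over the $O(n^2)$ query pairs $(u,v)$ completes the high-probability utility guarantee.

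The one point requiring slight care is verifying that the partition-of-spans argument really makes the error equal to a sum of exactly $|E(P_{uv})|$ independent noises. This is where the fact that \greedyshortcut is based on consistent routing paths (and that each hop edge's weight was set using the \emph{true} attributes before perturbation) is used: the chosen $P_{uv}$ in $G'$ corresponds to a sequence of disjoint subpaths of a fixed true shortest $uv$-path in $G$, so each original-edge noise contributing to $\hat f(u,v)$ does so at most once, and likewise for each hop-edge noise. Given this, the concentration inequality for sums of independent Laplace variables applies directly and the rest is arithmetic. I expect the sensitivity-to-noise bookkeeping in the privacy half to be the only nontrivial step; the utility half is a direct application of \Cref{lem:lap-sum} to a path of the promised length with the promised per-edge noise scale.
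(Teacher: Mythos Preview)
Your proposal is correct and follows essentially the same approach as the paper: privacy via the Laplace mechanism applied separately to original-edge and hopset-edge attributes and then composed (\Cref{prop:basic-comp}), and utility via \Cref{lem:lap-sum} on at most $\beta=O(\sqrt{n}\log n)$ independent Laplace noises of scale $O(\log n/\epsilon)$ along the low-hop path, followed by a union bound over all $O(n^2)$ pairs. One small caveat: the neighboring relation in \Cref{def:neighbor-attributes} allows the attribute change to be spread over several edges with total $\ell_1$ mass at most $1$, not concentrated on a single $e_0$; your sensitivity bound still goes through since the $\ell_1$-sensitivity of the hop-edge release is at most $\sum_{f}\esens(f)\,|w(f)-w'(f)|\le\inorm{\esens}\cdot 1=O(\log n)$.
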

\begin{proof}\ \\
    \textbf{$\epsilon$-DP guarantee.} The privacy proof directly comes from the Laplace mechanism, which is applied twice in \puredpalg. By \Cref{thm:greedy-good} the sensitivity of each hop edge $e \in H$ is $\log n$, and the sensitivity of individual edges in $E$ is simply 1. Therefore by \Cref{def:lap-mech} Step 3 and 4 in \puredpalg both satisfy $\epsilon/2$-DP. By the composition theorem \Cref{prop:basic-comp}, we arrive at $\epsilon$-DP for \puredpalg.
    \\
    \textbf{Analysis of additive error.} Let $H$ be the hospet given by \puredpalg, the new graph $G' = G\cup H$ now has at most $\Ot(\sqrt{n})$ hop diameter. For edges any shortest paths $P'_{uv}$ in $G'$, there could be hop edges with large perturbation from step 3, and edges with small perturbation from step 4. However, the number of edges involved on a shortest path is always bounded by $\Ot(\sqrt{n})$. Therefore the maximum additive error can be decomposed into $\Ot(\sqrt{n})$ independent noises sampled from $\Lap{2\log n/\epsilon}$. Applying \Cref{lem:lap-sum} with a standard union bound argument, we get the upper bound of $\Ot(\frac{n^{1/4}}{\epsilon})$.
\end{proof}

Recall prior results in~\cite{deng2023differentially}, we are able to improve the $\epsilon$-DP additive error upper bound from $\Ot(n^{1/3})$ to $\Ot(n^{1/4})$, matching the $\Ot(n^{1/4})$ upper bound for the $(\epsilon, \delta)$-DP setting, which is tight up to polylogarithmic factors according to a recent work~\cite{bodwin2024discrepancy}. Another remark is that other exact hopset constructions can also be applied to the framework of \puredpalg by just replacing \greedyshortcut. \Cref{thm:dp-hopset-relation} therefore holds universally as the connection between exact hopsets and the DP-ASRQ problem.

\subsection{Discussion on All Pairs Shortest Distances}
\label{subsec:discuss-apsd}
The problems of APSD and ASRQ share many similarities. Given a graph with public topology and private weights, the APSD problem aims to output weighted shortest distances under DP, and minimize the additive error, which is again the $\ell_\infty$ norm of the difference between the true distances and the perturbed ones. We define the problem formally.

\begin{definition}[\textit{Differentially Private APSD~\cite{sealfon2016shortest}}]
    \normalfont Let $w, w':E\rightarrow R^{\geq 0}$ be weight functions, and $\mathcal{A}$ be an algorithm taking a graph $G = (V,E)$ and $w$ as input. The algorithm $A$ is $(\varepsilon, \delta)$-differentially private on $G$ if for any neighboring weights $w \sim w'$ and all sets of possible output $\mathcal{C}$, we have: $\Pr[\mathcal{A}(G, w)\in \mathcal{C}] \leq e^{\varepsilon}\cdot \Pr[\mathcal{A}(G, w') \in \mathcal{C} ]+\delta.$
\end{definition}

One may wonder why we cannot obtain a similar connection as \Cref{thm:dp-hopset-relation} between the DP-APSD problem and low-sensitivity hopsets. There is a crucial yet subtle difference between two problem that falls into the privacy notion: the weights that determine the shortest paths are private in the APSD problem, but are public knowledge in the ASRQ problem. In fact, the latter does not concern weights but other attributes irrelevant to the shortest paths, which are assumed to be given to algorithms. In other words, algorithms for DP-APSD cannot use the true shortest paths as a white-box however the other is fine.

The discussion above implies that we cannot use \greedyshortcut to construct a low-sensitivity hopset because the ground-truth shortest paths are used therein. The currently state-of-the-art algorithms~\cite{ghazi2022differentially} use the folklore hopset construction by sampling a set of vertices $S$ uniformly at random. This procedure does not involve using the shortest paths.

On the other hand, this subtlety does not refrain from having another low-sensitivity hopset construction that does not involve the shortest paths (or edge weights). If we do have an algorithm as such, which outputs a $(\beta,0)$-hopset with $\inorm{\esens}$ sensitivity, a similar theorem as \Cref{thm:pure-dp} still holds for the DP-APSD problem. Knowing that the current best upper bound for the $\epsilon$-DP setting is $\Ot(n^{2/3})$, if we have $\beta \cdot \inorm{\esens} = o(n^{4/3})$, we improve the previous best $\epsilon$-DP result!

\end{document}